\documentclass{LMCS}

\def\doi{8 (1:08) 2012}
\lmcsheading%
{\doi}
{1--29}
{}
{}
{May\phantom.~12, 2011}
{Feb.~20, 2012}
{}

\usepackage{enumerate}
%\usepackage{hyperref}
%% read in additional TeX-packages or personal macros here:
%% e.g. \usepackage{xy}

\usepackage{xspace}
\usepackage{amsmath}

\usepackage{xspace}
\usepackage{graphicx}

\usepackage{amssymb}
\usepackage{amsfonts}
\usepackage{proof}
\usepackage{epsfig}
\usepackage{verbatim}

%%% Change math fonts to Concrete Math style
%\DeclareSymbolFont{operators}   {OT1}{ccr} {m}{n}
%\DeclareSymbolFont{letters}     {OML}{ccm} {m}{it}
%\DeclareSymbolFont{symbols}     {OMS}{ccsy}{m}{n}

%% Some abbreviations for letters

%% Some environments
%\newenvironment{code}{\[\begin{array}{lcl}}{\end{array}\]}
%\newenvironment{aquote}
%{\begin{quote}\small\sffamily\slshape}{\end{quote}}
%\newcommand{\attrib}[1]{\begin{flushright}{---#1}\end{flushright}}

%% Some basic macros and abbreviations

\newcommand{\SB}[1]{[\![#1]\!]}

\newcommand{\ab}[1]{\langle #1 \rangle}

%% Logical connectives

\newcommand{\impp}{\Longrightarrow}

%% Logic: Rule names

\newcommand{\hyp}[2][]{\infer[#1]{#2}{}}

%% Nominal sets and nominal logic symbols
\newcommand{\new}[1][]{\reflectbox{\sf{#1{}N}}}
\newcommand{\smallnew}{\new[\scriptsize]}
\newcommand{\fresh}{\mathrel{\#}}

\newcommand{\abs}[2]{{\ab{#1}{#2}}}
\newcommand{\conc}{\mathop{@}}
\newcommand{\name}[1]{\mathsf{#1}}
\newcommand{\Aa}{\name{a}}
\newcommand{\Ab}{\name{b}}

\newcommand{\Ax}{\name{x}}
\newcommand{\Ay}{\name{y}}

\newcommand{\Az}{\name{z}}

% Base types

% Permutations, actions, and transpositions
\newcommand{\id}{\mathsf{id}}

\newcommand{\act}{\boldsymbol{\cdot}}

% Unification and matching symbols

%\newcommand{\query}{\mathrel{?\!\!-}}

% Alpha-Prolog concrete syntax macros

%\newcommand{\prop}{\mathbf{prop}}

\newcommand{\labelSec}[1]{\label{sec:#1}}
\newcommand{\refSec}[1]{Section~\ref{sec:#1}}
\newcommand{\labelFig}[1]{\label{fig:#1}}
\newcommand{\refFig}[1]{Figure~\ref{fig:#1}}

\newcommand{\labelThm}[1]{\label{thm:#1}}
\newcommand{\refThm}[1]{Theorem~\ref{thm:#1}}
\newcommand{\labelLem}[1]{\label{lem:#1}}
\newcommand{\refLem}[1]{Lemma~\ref{lem:#1}}

\newcommand{\labelCor}[1]{\label{cor:#1}}
\newcommand{\refCor}[1]{Corollary~\ref{cor:#1}}

% lattices

\renewcommand{\nd}{\vdash}

\newcommand{\DNTT}{\ensuremath{ \lambda^{ \Pi \smallnew}}\xspace}

\newcommand{\wfres}[4]{#1 \vdash \bind{#2}{#3} \mathrel{\backslash} #4}

\newcommand{\sigwf}[1]{\nd #1 ~ \mathsf{sig}}
\newcommand{\ctxwf}[1]{\nd #1~\mathsf{ctx}}

\newcommand{\kind}{\mathrm{kind}}
\newcommand{\kdwf}[2]{#1 \nd #2 : \kind}
\newcommand{\tywf}[3]{#1 \nd #2:#3}
\newcommand{\wf}[3]{#1 \nd #2: #3}

\newcommand{\wfsub}[3]{#1 \nd #2: #3}

\newcommand{\streq}[4]{\judge{#1}{#2\leftrightarrow#3:#4}}
\newcommand{\algeq}[4]{\judge{#1}{#2\Leftrightarrow#3:#4}}
\newcommand{\wfcan}[4]{\judge{#1}{#2 \Uparrow #3:#4}}
\newcommand{\wfatm}[4]{\judge{#1}{#2\downarrow #3:#4}}
\newcommand{\logrel}[3]{\judge{#1}{#2\in\SB{#3}}}
\newcommand{\judge}[2]{#1 \nd  #2 }
\newcommand{\JJ}{\mathcal{J}}
\newcommand{\DD}{\mathcal{D}}

\newcommand{\NEW}[3]{\new #1{:}#2.#3}
\newcommand{\PI}[3]{\Pi #1{:}#2.#3}
\newcommand{\LAM}[3]{\lambda #1{:}#2.#3}
\newcommand{\ABS}[3]{\abs{#1{:}#2}#3}
\newcommand{\Abs}[2]{\langle\! \langle #1 \rangle\! \rangle #2}
\newcommand{\bind}[2]{#1{:}#2}

\newcommand{\ato}[1]{\stackrel{#1}{\longrightarrow}}
\newcommand{\whr}{\ato{\mathrm{whr}}}
\renewcommand{\conc}{ @ }
\newcommand{\type}{\mathrm{type}}
\newcommand{\nametype}{\mathrm{name}}

\newcommand{\nat}{\mathit{nat}}
\newcommand{\option}{\mathit{option}}
\newcommand{\SOME}{\mathit{SOME}}
\newcommand{\NONE}{\mathit{NONE}}

\newcommand{\enc}[1]{\ulcorner#1\urcorner}

\newcommand{\rec}{\mathit{rec}}
\newcommand{\var}{\mathit{var}}
\newcommand{\lam}{\mathit{lam}}
\newcommand{\app}{\mathit{app}}

\newcommand{\varty}{\mathit{v}}
\newcommand{\expty}{\mathit{e}}

\newcommand{\neqq}{\mathit{neq}}
\newcommand{\neqqxx}[2]{\mathit{neq\_{#1}\_{#2}}}

\newcommand{\atomic}{\mathit{atomic}}

%%% Local Variables: 
%%% mode: plain-tex
%%% TeX-master: t
%%% End: 

\newcommand{\tr}[1]{}

%% due to the dependence on amsart.cls, \begin{document} has to occur
%% BEFORE the title and author information:
\begin{document}

\title[A dependent nominal type theory]{A dependent nominal type theory\rsuper*}

\author[J.~Cheney]{James Cheney}	%required
\address{Laboratory for Foundations of Computer Science, 
University of Edinburgh}	%required
\email{jcheney@inf.ed.ac.uk}  %optional
%\thanks{thanks 1, optional.}	%optional

%% required for running head on odd and even pages, use suitable
%% abbreviations in case of long titles and many authors:

%% mandatory lists of keywords and classifications:
\keywords{dependent types, nominal abstract syntax} 
\subjclass{F.4.1} 
\titlecomment{{\lsuper*}This article extends ``A simple
  nominal type theory'', published in LFMTP 2008
  ~\cite{cheney08lfmtp}}
%%%%%%%%%%%%%%%%%%%%%%%%%%%%%%%%%%%%%%%%%%%%%%%%%%%%%%%%%%%%%%%%%%%%%%%%%%%

%% the abstract has to PRECEED the command \maketitle:
%% be sure not to issue the \maketitle command twice!

\begin{abstract}
  \noindent
  Nominal abstract syntax is an approach to representing names and
binding pioneered by Gabbay and Pitts.  So far nominal techniques have
mostly been studied using classical logic or model theory, not type
theory.  Nominal extensions to simple, dependent and ML-like polymorphic 
languages have been studied, but decidability and normalization results
have only been established for simple nominal type theories.  We
present a LF-style dependent type theory extended with
name-abstraction types, prove soundness and decidability of
$\beta\eta$-equivalence checking, discuss adequacy and canonical forms
via an example, and discuss extensions such as dependently-typed
recursion and induction principles.

\end{abstract}

\maketitle

\section{Introduction}

Nominal abstract syntax, introduced by Gabbay and
Pitts~\cite{gabbay02fac,pitts03ic,pitts06jacm}, provides a relatively
concrete approach to abstract syntax with binding.  Nominal techniques
support built-in alpha-equivalence with the ability to compare names
as data, but (unlike higher-order abstract
syntax~\cite{harper93jacm,pfenning89pldi,nadathur98higher}) do not
provide built-in support for substitution or contexts. On the other
hand, definitions that involve comparing names as values are sometimes
easier to define using nominal abstract syntax, and both single and
simultaneous substitution can be defined easily as primitive
recursive functions over nominal abstract syntax (see
e.g.~\cite{cheney05icfp,cheney08lfmtp,pitts10popl}). Thus, nominal
abstract syntax is an alternative approach to representing languages
with bound names that has different strengths and weaknesses compared
to higher-order abstract syntax.

Historically, one weakness has been the absence of a clean
type-theoretic framework for nominal abstract syntax, paralleling
elegant frameworks such as LF~\cite{harper93jacm},
$\lambda$Prolog~\cite{nadathur98higher}, and more recently
Delphin~\cite{poswolsky08esop} and Beluga~\cite{pientka08popl}.  Some
previous steps have been taken towards nominal type theories
sufficient for reasoning about nominal abstract
syntax~\cite{schoepp04csl,cheney08lfmtp,pitts10popl,westbrook09lfmtp},
but as yet a full dependent type theory equipped with metatheoretic
results such as decidability of typechecking has not been developed.

In this article, we take a step towards such a nominal type theory, by
extending a previously-developed simply typed calculus~\cite{cheney08lfmtp} with
dependent types, roughly analogous to the LF system (though with some
different modes of use in mind).  We call our system \DNTT, or
\emph{dependent nominal type theory}.  \DNTT provides simple
techniques for encoding judgments that depend on name-distinctness and
can be soundly extended with recursion combinators
useful for defining functions and proofs involving nominal abstract
syntax.  Because \DNTT lacks
built-in support for substitution over nominal abstract syntax, it
should so far be viewed as a step towards dependently-typed
programming and reasoning with nominal features and not as a
self-contained logical framework like LF.  For example, our approach
could serve as a starting point (or domain-specific embedded language)
for dependently-typed programming with names and binding within
systems such as Agda or Coq based on constructive type theories, as
advocated by Licata et al.~\cite{licata08lics}, Westbrook et
al.~\cite{westbrook09lfmtp}, or Poulliard and
Pottier~\cite{poulliard10icfp}.

We add \emph{names} $\Aa,\Ab,\ldots$, \emph{name types} $\alpha$, and
a \emph{dependent name-abstraction} type constructor
$\NEW{\Aa}{\alpha}{B}$ to LF, which is introduced by
\emph{abstraction} ($\abs{\Aa}{M}$) and eliminated by
\emph{concretion} ($M \conc \Aa$). The abstraction term can be viewed
as constructing an $\alpha$-equivalence class that binds a name; the
concretion term instantiates the name bound by an abstraction to a
\emph{fresh} name $\Aa$.  This freshness requirement ensures that no
two (syntactically) distinct names can ever be identified via
renaming, so it is possible to reason about inequalities among names
in \DNTT.  Moreover, this restriction justifies a semantic
interpretation of name and name-abstraction types in \DNTT as names
and name-abstraction constructions in nominal logic, which in turn
justifies adding recursion combinators that can be used to define
functions on and reason about inductively-defined types with
name-binding within \DNTT.

\paragraph{Example}

As a simple example of a relation that is easily definable in \DNTT,
but cannot as easily be defined in LF, consider the signature in
\refFig{lambda} and alpha-inequivalence relation defined in
\refFig{alpha-ineq}.  (The notation $\Abs{\varty}{\expty}$ stands for
the non-dependent name-abstraction type $\NEW{\Aa}{\varty}{\expty}$.)
The key rules are $\neqqxx{v}{v}$ and $\neqqxx{l}{l}$; several other
symmetric rules are omitted.  Both rules use the $\new$-quantifier to
generate fresh names.  The type of $\neqqxx{v}{v}$ states that two
variables are alpha-inequivalent if their names are distinct.  The
type of $\neqqxx{l}{l}$ states that two lambda-abstractions are
alpha-inequivalent if their bodies are inequivalent when instantiated
to the same fresh name $\Aa$.  We discuss this example further in
\refSec{adequacy} and \refSec{discussion}.

\begin{figure}[tb]
\[\begin{array}{lclcl}
  \varty  &:& \nametype. \quad \expty : \type.
  \smallskip\\
  \var &:& \varty \to \expty.\quad   
  \app : \expty \to \expty \to \expty.\quad
  \lam : \Abs{\varty}{\expty} \to \expty.
\end{array}\]
  \caption{Lambda-calculus syntax in \DNTT}\labelFig{lambda}
 \[\small
  \begin{array}{rcl}
    \neqq&:& \expty\to \expty \to \type.\\
    \neqqxx{v}{v} &:& \new \Aa. \new \Ab. \neqq~(\var~\Aa)~(\var~\Ab).\\
    \neqqxx{a}{a}_i &:& \neqq~M_i~N_i \to \neqq~(\app~M_1~M_2)~(\app~N_1~N_2).\\
    \neqqxx{l}{l} &:& (\new \Aa. \neqq~(M\conc \Aa)~(N \conc \Aa)) \to \neqq~(\lam~M)~(\lam~N).\\
    \neqqxx{v}{a} &:& \neqq~(\var~V)~(\app~M~N).
    \\\neqqxx{v}{l} &:& \neqq~(\var~V)~(\lam~M).\\
    \neqqxx{a}{l} &:& \neqq~(\app~M~N)~(\lam~P). 
 \end{array}
\]
  \caption{Alpha-inequality in \DNTT}
  \label{fig:alpha-ineq}
\end{figure}

\paragraph{Contributions}
The main contribution of this article is the formulation of \DNTT and
the proof of key metatheoretic properties such as decidability of
typechecking, canonicalization, and conservativity over LF.  At a
technical level, our contribution draws upon Harper and Pfenning's
proof of these properties for LF~\cite{harper05tocl}, and we focus on
the aspects in which \DNTT differs from LF, primarily having to do
with the treatment of name-abstraction types and concretion via the
restriction judgment.

\paragraph{Outline} The structure of the rest of this article is as
follows.  \refSec{related} discusses additional related work.
\refSec{nlf} presents the \DNTT type theory, along with basic
syntactic properties.  \refSec{meta} develops the metatheory of \DNTT.
\refSec{adequacy} considers canonical forms and adequacy of
representations of nominal abstract syntax in \DNTT via a standard
example.  \refSec{examples} discusses several examples and extensions
such as recursion combinators.  \refSec{discussion} contrasts \DNTT
with closely related systems. \refSec{concl} discusses future work and
concludes.

\section{Related work}\labelSec{related}
Typed programming languages and type theories incorporating nominal
features have already been
studied~\cite{shinwell03icfp-short,schoepp04csl,schoepp06phd,pottier07lics,cheney08lfmtp,westbrook08phd}.
As in some previous
work~\cite{schoepp04csl,schoepp06phd,cheney08lfmtp,westbrook08phd}, we
employ bunched contexts~\cite{ohearn99bsl} to enforce the freshness
side-conditions on concretions.  Specifically,
following~\cite{cheney08lfmtp}, we employ an explicit context
restriction judgment in order to prevent references to the name $\Aa$
within $M$ in a concretion $M \conc \Aa$.
Previously~\cite{cheney08lfmtp}, we proved strong normalization for a
simple nominal type theory by translation to ordinary lambda-calculus.
Here, we prove completeness of a $\beta\eta$-equivalence algorithm
more directly by adapting Harper and Pfenning's logical-relations
proof for LF~\cite{harper05tocl}.  The restriction judgment is used
essentially in the modified logical relation.

Sch\"opp and Stark~\cite{schoepp04csl,schoepp06phd} and Westbrook et
al.~\cite{westbrook08phd,westbrook09lfmtp} have considered richer
nominal type theories than \DNTT.  However, Sch\"opp and Stark did not
investigate normalization or  decidability, whereas Westbrook proves
$\beta$-normalization for a ``Calculus of Nominal Inductive
Constructions'' (CNIC) by a (somewhat complex) translation to ordinary
CIC~\cite{westbrook08phd}; our logical-relations proof handles
$\beta\eta$-equivalence and seems more direct but does not deal with
inductive types or polymorphism.  Westbrook et al. are developing an
implementation of CNIC called Cinic~\cite{westbrook09lfmtp}.

Pitts~\cite{pitts10popl,pitts11jfp} has recently investigated a
``Nominal System T'' that extends simple nominal type
theory~\cite{cheney08lfmtp} with locally-scoped names ($\nu$-expressions)
and recursion over lambda-terms encoded using nominal abstract syntax.
Strong normalization modulo a structural congruence is proved via
normalization-by-evaluation.  An extended version of this
work~\cite{pitts11jfp} is different in some ways, and gives an
alternative proof of $\beta$-normalization.  Both techniques draw on
Odersky's $\lambda \nu$-calculus~\cite{odersky94popl}.

In Pitts' approach, contexts are standard and do not incorporate
freshness assertions, but as a result there are ``exotic'' terms such
as $\nu \Aa. \var~\Aa : \expty$, which do not correspond to any object
language term and complicate the argument for adequacy.  Nevertheless,
Pitts' approach is an interesting development that may lead to a more
expressive and flexible facilities for dependently-typed programming
with nominal abstract syntax.  However, as discussed in
\refSec{discussion}, there are potential complications in pushing this
approach beyond simple $\Pi$-types.

Our approach also bears some similarity to work on weak higher-order
abstract syntax, primarily employed in constructive type theories such
as Coq~\cite{despeyroux94lpar,despeyroux95tlca,schuermann01tcs}.  Here, in contrast to
ordinary higher-order abstract syntax the idea is to use a different,
atomic type for binders via a function space $\varty \to \expty$. The type
$\varty$ can be an abstract type with decidable equality; this
makes it possible to define the type of expressions inductively, but primitive recursion over weak HOAS is not
straightforward to incorporate into Coq.  This approach has been
formalized as a consistent extension called the Theory of
Contexts~\cite{honsell01tcs,DBLP:journals/jfp/BucaloHMSH06}, and this
theory has been related to nominal abstract syntax by Miculan et
al.~\cite{miculan05merlin}.  

There has also been recent work on techniques for recursion over
higher-order abstract syntax.  Pientka~\cite{pientka08popl}, Poswolsky
and Sch\"urmann~\cite{poswolsky08esop}, and Licata et
al.~\cite{licata08lics} have developed novel (and superficially
different) techniques.  Sch\"urmann and Poswolsky's approach seems
particularly similar to ours; they distinguish between variables and
parameters (names), and use ordered contexts with a restriction
operation similar to ours.  Each of them is considerably more
complicated than \DNTT, while sharing the advantages of higher-order
abstract syntax.  Poulliard and Pottier~\cite{poulliard10icfp}
recently proposed an interface in Agda which can be implemented either
using nominal terms or de Bruijn terms.  This approach may provide a
starting point for encoding a \DNTT-like language in Agda or Coq,
analogous to Harper and Licata's embedding of higher-order abstract
syntax.  It is a compelling open question how to relate these
techniques to nominal techniques (and to each other).  Developing such
encodings for nominal and various higher-order approaches in a common
metalanguage could be a way to compare their expressiveness.

\section{Dependent Nominal Type Theory}\labelSec{nlf}

The syntax of \DNTT is a straightforward extension of that of LF.  We
fix countable, disjoint sets of variables $x,y$, names $\Aa,\Ab$, 
object constants $c,d$, type constants $a,b$, and name-type constants
$\alpha,\beta$. The syntactic classes comprise  \emph{objects},
\emph{type families} (or just \emph{types}) which classify objects,
and \emph{kinds} which classify types.  The syntax of \DNTT kinds,
types, and objects is as follows:
\[
\begin{array}{rcll}
  K & ::= & \type \mid \PI{x}{A}{K}\\
  A,B & ::= & a \mid 
  A~M \mid \PI{x}{A}{B}~& \|~ \alpha \mid \NEW{\Aa}{\alpha}{B}\\
  M,N &::=& c \mid x \mid \LAM{x}{A}{M} \mid M~N ~&\|~ \Aa \mid \ABS{\Aa}{\alpha}{M} \mid M \conc \Aa
\end{array}
\]
We omit type-level lambda-abstraction, as it complicates the
metatheory yet does not add any expressive power to
LF~\cite{geuvers99mscs}.  The new syntactic cases of \DNTT are
distinguished using two parallel bars ($\|$).  As in LF, kinds include
$\type$, the kind of all types, and dependent kinds $\Pi x{:}A.K$ that
classify type families.  Types include constants $a$, applications
$A~M$ of type constructors to term arguments, and dependent types $\Pi
x{:}A.B$.  Name-types $\alpha$ are constants and thus cannot depend on
objects.  We include a dependent name-abstraction type constructor,
$\NEW{\Aa}{\alpha}{B}$, where $\alpha$ must be a name type.  Terms
include term constants $c$, variables $x$, applications $M~N$, and
$\lambda$-abstractions $\lambda x{:}A.M$ as in LF. In addition, terms
include names $\Aa$, name-abstractions $\ABS{\Aa}{\alpha}{M}$, and
name-applications $M \conc \Aa$ (also known as concretions).  Note
that the name argument of a concretion must be a literal name, not an
arbitrary term.  We adopt the same precedence conventions for
abstractions and concretions as for $\lambda$-abstraction and
application. For example, $\ABS{\Aa}{\alpha}{M\conc \Aa} =
\ABS{\Aa}{\alpha}{(M \conc \Aa)}$, not $(\ABS{\Aa}{\alpha}{M})\conc
\Aa$.

The $\Pi$ type constructor and $\lambda$ term constructor bind
variables in the usual way.  The $\NEW{\Aa}{\alpha}{B}$ type constructor
and  $\ABS{\Aa}{\alpha}{M}$ term constructor bind the name $\Aa$ in $B$ or
$M$ respectively, so are subject to $\alpha$-renaming.  The functions
$FV(-)$ and $FN(-)$ compute the set of free variables or free names of
a kind, type, or object; we write $FVN(-)$ for $FV(-) \cup FN(-)$.  As
in LF, when $x \not\in FV(B)$, we write $\PI{x}{A}{B}$ as the function
type $A \to B$; similarly, if $\Aa \not\in FN(B)$, we write
$\NEW{\Aa}{\alpha}{B}$ as the name-abstraction type $\Abs{\alpha}{B}$.  We
employ simultaneous substitutions $\theta$ of the form
\[\theta ::= \cdot \mid \theta,M/x \mid \theta,\Aa/\Ab\]
By convention, a substitution assigns at most one expression/name to
each variable/name.  We write $\theta(x)$ or $\theta(\Aa)$ for the
expression which $\theta$ assigns to $x$ or $\Aa$
respectively. Simultaneous substitution application $M[\theta]$ is
defined in \refFig{substitution}.

\begin{figure}[tb]
\[\begin{array}{rcll}
\type[\theta] &=& \type\\
(\PI{x}{A}{K})[\theta] &=& \PI{x}{A[\theta]}{K[\theta]} &\quad (x \not\in FV(\theta))
\medskip\\
a[\theta] &=& a\\
 \alpha[\theta] &=& \alpha\\
 (A~M)[\theta] &=& A[\theta] ~ M[\theta]\\
(\PI{x}{A}{B})[\theta] &=& \PI{x}{A[\theta]}{B[\theta]} &\quad (x \not\in FV(\theta))\\
(\NEW{\Aa}{\alpha}{B})[\theta] &=& \NEW{\Aa}{\alpha}{B[\theta]}&\quad (\Aa \not\in FN(\theta))
\medskip\\
c[\theta] &=& c\\
x[\theta] &=& \theta(x)\\
(\LAM{x}{A}{M})[\theta] &=& \LAM{x}{A[\theta]}{M[\theta]} &\quad (x \not\in FV(\theta))\\
(M~N)[\theta] &=& M[\theta] ~ N[\theta]\\
\Aa[\theta] &=& \theta(\Aa)\\
(\ABS{\Aa}{\alpha}{M})[\theta] &=& \ABS{\Aa}{\alpha}{M[\theta]} &\quad (\Aa \not\in FN(\theta))\\
(M\conc \Aa)[\theta] &=& M[\theta] \conc \Aa[\theta]
\medskip\\
\cdot[\theta] &=& \cdot\\
(\sigma,M/x)[\theta] &=& \sigma[\theta],M[\theta]/x
\end{array}\]
\caption{Substitution application and composition}\labelFig{substitution}
\end{figure}

\begin{figure*}[h]
\[\begin{array}{c}
\infer{\sigwf{\cdot}}{}\qquad
\infer{\sigwf{\Sigma,a{:}K}}{\kdwf{\cdot}{K} & \sigwf{\Sigma}}\qquad
\infer{\sigwf{\Sigma,c{:}A}}{\tywf{\cdot}{A}{\type} & \sigwf{\Sigma}}\qquad
\infer{\sigwf{\Sigma,\alpha{:}\nametype}}{\sigwf{\Sigma}}
\smallskip\\
\infer{\ctxwf{\cdot}}{}\qquad 
\infer{\ctxwf{\Gamma,\bind{x}{A}}}{ \tywf{\Gamma}{A}{\type} & \ctxwf{\Gamma}}
\qquad
\infer{\ctxwf{\Gamma\#\bind{\Aa}{\alpha}}}{\alpha:\nametype \in \Sigma
  & \ctxwf{\Gamma}}
\end{array}\]
\caption{\DNTT well-formedness rules: signatures, contexts}\labelFig{wf-sig-ctx}
\[\begin{array}{c}
\infer[\mathsf{type\_k}]{\kdwf{\Gamma}{\type}}{}\qquad
\infer[\mathsf{pi\_k}]{\kdwf{\Gamma}{\Pi x{:}A.K}}{\tywf{\Gamma}{A}{\type} & \kdwf{\Gamma,\bind{x}{A}}{K}}
\end{array}\]
\caption{\DNTT well-formedness rules: kinds}\labelFig{wf-kind}
\[\begin{array}{c}
\infer[\mathsf{con\_t}]{\tywf{\Gamma}{a}{K}}{a{:}K \in \Sigma}
\qquad
\infer[\mathsf{app\_t}]{\tywf{\Gamma}{A~M}{K[M/x]}}{\tywf{\Gamma}{A}{\Pi x{:}B.K}  & \tywf{\Gamma}{M}{B}}
\smallskip\\
\infer[\mathsf{pi\_t}]{\tywf{\Gamma}{\Pi x{:}A.B}{\type}}{\tywf{\Gamma}{A}{\type} & \tywf{\Gamma,\bind{x}{A}}{B}{\type}}
\qquad
\infer[\mathsf{conv\_t}]{\tywf{\Gamma}{A}{K}}{\tywf{\Gamma}{A}{K'} &
  \kdwf{\Gamma}{K=K'}}
\smallskip\\
\infer[\mathsf{nm\_t}]{\tywf{\Gamma}{\alpha}{\type}}{\alpha:\nametype \in \Sigma}
\quad
\infer[\mathsf{new\_t}]{\tywf{\Gamma}{\NEW{\Aa}{\alpha}{B}}{\type}}{\alpha:\nametype \in \Sigma & \tywf{\Gamma\#\bind{\Aa}{\alpha}}{B}{\type}}
\end{array}\]
\caption{\DNTT well-formedness rules: type families}\labelFig{wf-type}
\[\begin{array}{c}
\infer[\mathsf{con\_o}]{\wf{\Gamma}{c}{A}}{c{:}A \in \Sigma}
\quad
\infer[\mathsf{var\_o}]{\wf{\Gamma}{x}{A}}{\bind{x}{A} \in \Gamma} 
\quad
\infer[\mathsf{nm\_o}]{\wf{\Gamma}{\Aa}{\alpha}}{\Aa {:} \alpha \in \Gamma}
\smallskip\\
\infer[\mathsf{lam\_o}]{\wf{\Gamma}{\lambda x{:}A.M}{\Pi x{:}A.B}}{ \tywf{\Gamma}{A}{\type} & \wf{\Gamma,\bind{x}{A}}{M}{B}}
\qquad
\infer[\mathsf{app\_o}]{\wf{\Gamma}{M~N}{B[N/x]}}{\wf{\Gamma}{M}{\Pi
    x{:}A.B}  & \wf{\Gamma}{N}{A}}
\smallskip\\
\infer[\mathsf{abs\_o}]{\wf{\Gamma}{\ABS{\Aa}{\alpha}{M}}{\NEW{\Aa}{\alpha}{B}}}{\alpha:\nametype \in \Sigma & \wf{\Gamma\#\Aa{:}\alpha}{M}{B}}
\quad
\infer[\mathsf{conc\_o}]{\wf{\Gamma}{M \conc \Aa}{B}}{
  \wfres{\Gamma}{a}{\alpha}{\Gamma'} &
  \wf{\Gamma'}{M}{\NEW{\Aa}{\alpha}{B}}}
\end{array}\]
\caption{\DNTT well-formedness rules: objects}\labelFig{wf-object}
\end{figure*}
\begin{figure*}[tb]
\[
  \infer[\mathsf{res\_id}]{\wfres{\Gamma\#\bind{\Aa}{\alpha}}{\Aa}{\alpha}{\Gamma}}{}
\quad
\infer[\mathsf{res\_nm}]{\wfres{\Gamma\#\bind{\Ab}{\beta}}{\Aa}{\alpha}{\Gamma'\#\bind{\Ab}{\beta}}}{ \wfres{\Gamma}{\Aa}{\alpha}{\Gamma'}}
\quad
\infer[\mathsf{res\_var}]{\wfres{\Gamma,\bind{x}{A}}{\Aa}{\alpha}{\Gamma'}}{\wfres{\Gamma}{\Aa}{\alpha}{\Gamma'}}
\]
\[\cdot-\Aa = \cdot\qquad
(\theta,M/x)-a = \theta-\Aa\qquad
(\theta,\Aa'/\Aa)-\Aa = \theta \qquad
(\theta,\Ab'/\Ab)-\Aa = (\theta-\Aa),\Ab'/\Ab
\]
\caption{Context and substitution restriction}\labelFig{restriction}
\end{figure*}

\begin{figure*}[tb]
\[\begin{array}{c}
\quad
\infer[\mathsf{eq\_ext\_t}]{\wf{\Gamma}{A_1=A_2}{\Pi
    x:B.K}}{\wf{\Gamma,x:B}{A_1~x=A_2~x}{K}}
\quad
\infer[\mathsf{eq\_new\_t}]{\wf{\Gamma}{\NEW{\Aa}{\alpha}{A}=\NEW{\Aa}{\alpha}{B}}{\type}}{\wf{\Gamma\#\bind{\Aa}{\alpha}}{A=B}{\type}}
\smallskip\\
\infer[\mathsf{eq\_nm}]{\wf{\Gamma}{\Aa=\Aa}{\alpha}}{\Aa{:}\alpha \in
  \Gamma}
\smallskip\\
\infer[\mathsf{eq\_abs}]{\wf{\Gamma}{\abs{\Aa{:}\alpha}{M}=\abs{\Aa{:}\alpha}{N}}{\new \Aa{:}\alpha.A}}{
  \wf{\Gamma\#\Aa{:}\alpha}{M=N}{A}}
\quad
\infer[\mathsf{eq\_conc}]{\wf{\Gamma}{M\conc \Ab = N \conc \Ab}{A[\Ab/\Aa]}}{
  \wfres{\Gamma}{\Ab}{\alpha}{\Gamma'} & 
  \wf{\Gamma'}{M=N}{\new \Aa{:}\alpha.A}}
\smallskip\\
\infer[\mathsf{eq\_nm\_beta}]{\wf{\Gamma}{(\abs{\Aa{:}\alpha}M)\conc \Ab = N[\Ab/\Aa]}{A[\Ab/\Aa]}}{
  \wfres{\Gamma}{\Ab}{\alpha}{\Gamma'} 
  & \wf{\Gamma'\#\Aa{:}\alpha}{M=N}{A}}
\quad
\infer[\mathsf{eq\_nm\_eta}]{\wf{\Gamma}{M=N}{\new \Aa{:}\alpha.A}}{\wf{\Gamma\#\Aa:\alpha}{M\conc \Aa = N\conc \Aa}{A}}
\end{array}\]
\caption{New definitional equivalence rules of $\DNTT$}\labelFig{defeq-rules}
\end{figure*}

As in LF, the language of constants used in a specification is
described by a \emph{signature} assigning (closed) kinds to type
constants and (closed) types to object constants.  The contexts
$\Gamma$ used in \DNTT are also similar to those of LF, except that
bindings of names introduced by $\new$ are written $\Gamma\#\Aa{:}\alpha$,
to indicate that such names must be ``fresh'' for the rest of the
context:
\[\begin{array}{rcll}
  \Sigma &::=& \cdot \mid \Sigma,c{:}A\mid \Sigma,a{:}K~&\|~ \Sigma,\alpha{:}\nametype \\
  \Gamma &::=& \cdot \mid \Gamma,x{:}A~&\|~ \Gamma\#\Aa{:}\alpha
\end{array}\]
By convention, the constants and variables on the left-hand side of
`${:}$' in a signature or context are always distinct.  This
implicitly constrains the inference rules.

We extend Harper and Pfenning's presentation of the LF typing and
equality rules~\cite{harper05tocl}.  All judgments except signature
formation are implicitly parametrized by a signature $\Sigma$.  We omit
explicit freshness and signature or context well-formedness
constraints.

The well-formedness rules of \DNTT are shown in
Figures~\ref{fig:wf-sig-ctx}--\ref{fig:wf-object}.  The additional
definitional equivalence rules of \DNTT are shown in
\refFig{defeq-rules}. We omit the standard definitional equivalence
rules of LF; we add a type-level extensionality rule that was omitted
from Harper and Pfenning's presentation but is
admissible~\cite{urban11tocl}.  The new rules define the behavior of
names and name-abstraction or $\new$-types.  The $\new$-type formation
rule is similar to the $\Pi$-type formation rule, except using the
$\Gamma\#\Aa{:}\alpha$ context former.  The rule for name-abstraction
is similar.  In the rule for concretion, the name at which the
abstraction term is instantiated is \emph{removed} from the context
using a \emph{context restriction} judgment
$\wfres{\Gamma}{\Aa}{\alpha}{\Gamma'}$, shown in \refFig{restriction}.
This judgment states that $\Aa : \alpha$ is bound in $\Gamma$ and
$\Gamma'$ is the result of removing the name $\Aa$ from $\Gamma$,
along with any variables that were introduced more recently than
$\Aa$.  For technical reasons, we also need a \emph{substitution
  restriction} operation $\theta - \Aa$, also shown in
\refFig{restriction}.

The use of an explicit context restriction judgment is a key
difference between \DNTT and other systems that use bunched contexts,
such as Sch\"opp and Stark's system~\cite{schoepp06phd,schoepp04csl}
or O'Hearn and Pym's Logic of Bunched Implications~\cite{ohearn99bsl}.
In those theories, context conversion steps can be performed
nondeterministically at any point.  This complicates
equivalence-checking in the presence of dependent types, because we
have to be careful to ensure that context conversion steps do not make
the context ill-formed.  In \DNTT, we constrain the use of bunched
contexts so that standard typechecking and equivalence algorithms for
LF can be re-used with minimal changes.

We consider a substitution to be well-formed (written
$\wfsub{\Gamma}{\theta}{\Gamma'}$) when it maps the variables and
names of some context $\Gamma'$ to terms and names well-formed with
respect to another context $\Gamma$, while respecting the freshness
requirements of $\Gamma'$.  This is formalized as follows:
\[\begin{array}{c}
\infer{\wfsub{\Gamma}{\cdot}{\cdot}}{} \qquad
\infer{\wfsub{\Gamma}{\theta,M/x}{\Gamma',x{:}A}}{\wfsub{\Gamma}{\theta}{\Gamma'} & \wf{\Gamma}{M}{A[\theta]}}\smallskip\qquad
\infer{\wfsub{\Gamma}{\theta,\Aa/\Ab}{\Gamma'\#\Ab{:}\alpha}}{\wfres{\Gamma}{\Aa}{\alpha}{\Gamma''} & \wfsub{\Gamma''}{\theta}{\Gamma'}}
\end{array}\]
In addition, we consider a context $\Gamma'$ to be a \emph{subcontext}
of $\Gamma$ (written $\Gamma' \preceq \Gamma$) if
$\wfsub{\Gamma}{\id_{\Gamma'}}{\Gamma'}$ holds, where $\id_{\Gamma'}$
denotes the identity substitution on context $\Gamma'$.  Note that, for
example, $\cdot,x{:}A\#\Aa{:}\alpha \preceq \cdot\#\Aa{:}\alpha,x{:}A$
holds but not the converse, because the former context guarantees that
$\Aa$ is fresh for $x$ and the latter does not.

We employ a number of standard metatheoretic results about LF, which
extend to \DNTT without difficulty.  We next summarize some basic
metatheoretic properties of \DNTT.  Let $\JJ$ range over
well-formedness assertions $K : \kind$, $A : K$, $M : A$ or equality
assertions $K=K' : \kind$, $A = A':K$, $M = M':A$.

\begin{lem}[Determinacy of restriction]\labelLem{res-det}
  If $\wfres{\Gamma}{\Aa}{\alpha}{\Gamma_1}$ and
  $\wfres{\Gamma}{\Aa}{\alpha}{\Gamma_2}$ then $\Gamma_1 = \Gamma_2$.
\end{lem}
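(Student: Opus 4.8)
The plan is to prove the statement by induction on the structure of the context $\Gamma$ (equivalently, on the derivation of the first judgment $\wfres{\Gamma}{\Aa}{\alpha}{\Gamma_1}$), inverting the second derivation $\wfres{\Gamma}{\Aa}{\alpha}{\Gamma_2}$ at each step. The three restriction rules are almost syntax-directed by the last entry of $\Gamma$: $\mathsf{res\_var}$ fires exactly when $\Gamma$ ends in a variable binding, while $\mathsf{res\_id}$ and $\mathsf{res\_nm}$ fire when $\Gamma$ ends in a name binding. Thus in the variable case $\Gamma = \Gamma_0,\bind{x}{A}$ both derivations must conclude by $\mathsf{res\_var}$, and the induction hypothesis applied to the subderivations $\wfres{\Gamma_0}{\Aa}{\alpha}{\Gamma_1}$ and $\wfres{\Gamma_0}{\Aa}{\alpha}{\Gamma_2}$ yields $\Gamma_1 = \Gamma_2$ directly. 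In the case $\Gamma = \Gamma_0\#\bind{\Ab}{\beta}$ with $\Ab \neq \Aa$, both must conclude by $\mathsf{res\_nm}$; the induction hypothesis identifies the two results of restricting $\Gamma_0$, and appending the common suffix $\#\bind{\Ab}{\beta}$ gives $\Gamma_1 = \Gamma_2$.

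The one delicate case is when $\Gamma$ ends in the very name being removed, $\Gamma = \Gamma_0\#\bind{\Aa}{\alpha}$, for here both $\mathsf{res\_id}$ and $\mathsf{res\_nm}$ match the shape of the conclusion yet produce different outputs: $\mathsf{res\_id}$ returns $\Gamma_0$, whereas $\mathsf{res\_nm}$ would return $\Gamma_0'\#\bind{\Aa}{\alpha}$ from a subderivation $\wfres{\Gamma_0}{\Aa}{\alpha}{\Gamma_0'}$. To exclude the spurious $\mathsf{res\_nm}$ branch I would first prove the auxiliary fact that any derivable $\wfres{\Gamma}{\Aa}{\alpha}{\Gamma'}$ forces $\bind{\Aa}{\alpha}$ to be bound in $\Gamma$ --- an immediate induction on the restriction derivation, since $\mathsf{res\_id}$ exposes $\Aa$ outright and the other two rules preserve its occurrence in the smaller context. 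Together with the standing distinctness convention that each name is bound at most once in a context, this shows $\Aa$ cannot occur in the prefix $\Gamma_0$ of $\Gamma_0\#\bind{\Aa}{\alpha}$; hence the premise $\wfres{\Gamma_0}{\Aa}{\alpha}{\Gamma_0'}$ of $\mathsf{res\_nm}$ is underivable, only $\mathsf{res\_id}$ applies to both derivations, and $\Gamma_1 = \Gamma_0 = \Gamma_2$.

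I expect this $\mathsf{res\_id}$/$\mathsf{res\_nm}$ overlap to be the only real obstacle: it is resolved by the well-formedness and distinctness invariant rather than by the rule syntax, so the one subtlety to flag is that determinacy genuinely depends on contexts being well-formed with names distinct. Accordingly I would either state the lemma for well-formed $\Gamma$ or carry the ``$\Aa$ occurs exactly once'' property explicitly through the induction; every remaining case is a routine application of inversion and the induction hypothesis.
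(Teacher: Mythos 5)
Your proposal is correct and matches the paper's proof, which is exactly the stated one-liner: induction on the first derivation with inversion on the second. You have additionally, and rightly, pinpointed that the $\mathsf{res\_id}$/$\mathsf{res\_nm}$ overlap is resolved only by the paper's standing convention that names bound in a context are distinct (so $\Aa$ cannot occur in the prefix of $\Gamma_0\#\bind{\Aa}{\alpha}$), a dependency the paper leaves implicit.
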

\begin{proof}
  Straightforward induction on the first derivation using inversion on the
  second.
\end{proof}
\begin{lem}[Restriction implies weakening]\labelLem{restriction-weakening}
  If $\wfres{\Gamma}{\Aa}{\alpha}{\Gamma'}$ then
  $\Gamma'\#\Aa{:}\alpha \preceq \Gamma$.
\end{lem}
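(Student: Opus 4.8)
The plan is to unfold the definition of the subcontext relation and build the required well-formed-substitution derivation directly, reducing the whole statement to reflexivity of $\preceq$. Recall that $\Gamma'\#\Aa{:}\alpha \preceq \Gamma$ abbreviates $\wfsub{\Gamma}{\id_{\Gamma'\#\Aa{:}\alpha}}{\Gamma'\#\Aa{:}\alpha}$. Since $\Aa{:}\alpha$ is the final binding of $\Gamma'\#\Aa{:}\alpha$, its identity substitution decomposes as $\id_{\Gamma'},\Aa/\Aa$, so the goal becomes $\wfsub{\Gamma}{\id_{\Gamma'},\Aa/\Aa}{\Gamma'\#\Aa{:}\alpha}$.

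First I would apply the substitution-formation rule for name bindings, whose conclusion is the only one matching a name binding on the right. It demands a witness $\Gamma''$ with $\wfres{\Gamma}{\Aa}{\alpha}{\Gamma''}$ together with $\wfsub{\Gamma''}{\id_{\Gamma'}}{\Gamma'}$. Instantiating $\Gamma'' := \Gamma'$, the first premise is exactly the hypothesis $\wfres{\Gamma}{\Aa}{\alpha}{\Gamma'}$, and the second premise collapses to $\wfsub{\Gamma'}{\id_{\Gamma'}}{\Gamma'}$, i.e. $\Gamma' \preceq \Gamma'$. (Determinacy of restriction, \refLem{res-det}, guarantees $\Gamma'$ is the only possible witness, but since I use the hypothesis directly to discharge the restriction premise, I do not strictly need it here.)

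It thus remains to establish reflexivity of the subcontext relation, $\Gamma' \preceq \Gamma'$ for every well-formed $\Gamma'$, which I would prove as a separate standard lemma by induction on $\Gamma'$. The empty and name cases are immediate: the name case supplies its restriction premise via $\mathsf{res\_id}$ and uses the induction hypothesis on the tail. The variable case is the subtle point, and the main obstacle: extending $\Gamma_0$ by $x{:}A$ forces the tail obligation $\wfsub{\Gamma_0,x{:}A}{\id_{\Gamma_0}}{\Gamma_0}$, whereas the induction hypothesis yields only $\wfsub{\Gamma_0}{\id_{\Gamma_0}}{\Gamma_0}$. Bridging this gap requires weakening of the well-formed-substitution judgment (adding a fresh variable or name to the ambient context preserves $\wfsub{}{}{}$), which is itself a routine induction whose name case is closed using $\mathsf{res\_nm}$ and $\mathsf{res\_var}$. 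This weakening property is one of the ``standard metatheoretic results about LF'' that transfer to \DNTT without difficulty; importantly, it discharges its restriction premises by the structural rules $\mathsf{res\_nm}$/$\mathsf{res\_var}$ rather than by the lemma under proof, so there is no circularity. Once weakening is in hand, reflexivity follows, and with it the lemma.
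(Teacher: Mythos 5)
Your proof is correct, but it is organized differently from the paper's. The paper proves this lemma by a ``straightforward induction on the structure of derivations,'' i.e.\ by induction on the derivation of $\wfres{\Gamma}{\Aa}{\alpha}{\Gamma'}$; you instead avoid that induction entirely by applying the name-binding rule for $\wfsub{}{}{}$ exactly once, using the hypothesis itself to discharge the restriction premise, and reducing everything to reflexivity of $\preceq$. This is a legitimate and arguably tidier decomposition: note that the paper's induction cannot escape the same dependency, since its base case $\mathsf{res\_id}$ already demands $\Gamma'\#\Aa{:}\alpha \preceq \Gamma'\#\Aa{:}\alpha$, an instance of reflexivity. What your write-up buys, beyond brevity, is an explicit accounting of the auxiliary facts the paper's one-line proof leaves implicit --- reflexivity of $\preceq$ and the one-step weakening of the $\wfsub{}{}{}$ judgment needed for its variable case --- together with a correct check that these are discharged by $\mathsf{res\_id}$, $\mathsf{res\_nm}$, and $\mathsf{res\_var}$ rather than by the lemma being proved, so there is no circularity. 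Your observation that determinacy of restriction (\refLem{res-det}) is not needed, because the substitution rule only asks for \emph{some} witness $\Gamma''$ and the hypothesis supplies one, is also accurate. Both routes are sound; yours trades the induction over restriction derivations for an induction over contexts (inside the reflexivity lemma), which is a reasonable exchange since reflexivity of $\preceq$ is needed elsewhere anyway.
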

\begin{proof}
  Straightforward, by induction on the structure of derivations.
\end{proof}

\begin{lem}[Weakening]\labelLem{weakening}
Suppose $\Gamma' \succeq \Gamma$.  Then (1) 
    If $\wfres{\Gamma}{\Aa}{\alpha}{\Gamma_0}$  then
    $\wfres{\Gamma'}{\Aa}{\alpha}{\Gamma_0'}$ for some $\Gamma_0'
    \succeq \Gamma_0$.
  (2) If $\judge{\Gamma}{\mathcal{J}}$ then $\judge{\Gamma'}{\mathcal{J}}$.
\end{lem}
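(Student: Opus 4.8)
The plan is to prove the two parts in order. Part (1) concerns only the restriction and subcontext judgments and never mentions typing, so I would settle it on its own and then invoke it as a black box in the concretion cases of part (2). The engine of both parts is an inversion principle for $\preceq$, read off the identity-substitution rules: $\Gamma_1,x{:}A \preceq \Delta$ holds iff $x{:}A$ is bound in $\Delta$ and $\Gamma_1 \preceq \Delta$, and $\Gamma_1\#\Aa{:}\alpha \preceq \Delta$ holds iff $\wfres{\Delta}{\Aa}{\alpha}{\Delta'}$ for some $\Delta'$ with $\Gamma_1\preceq\Delta'$. In particular, a restriction of $\Gamma'$ at each name of $\Gamma$ is already recorded in the witness for $\Gamma\preceq\Gamma'$.

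For part (1) I would induct on the derivation of $\wfres{\Gamma}{\Aa}{\alpha}{\Gamma_0}$, inverting $\Gamma\preceq\Gamma'$ at each step. If the last rule is $\mathsf{res\_var}$, the variable is dropped on the left, the inversion principle supplies the same subcontext premise for the shortened context, and the induction hypothesis applies directly. If it is $\mathsf{res\_id}$, then $\Gamma=\Gamma_0\#\Aa{:}\alpha$ and inversion \emph{immediately} yields $\wfres{\Gamma'}{\Aa}{\alpha}{\Gamma_0'}$ with $\Gamma_0\preceq\Gamma_0'$, which is exactly the conclusion. The $\mathsf{res\_nm}$ case is the crux: here $\Gamma=\Gamma_1\#\Ab{:}\beta$ with $\Ab\neq\Aa$, inversion gives $\wfres{\Gamma'}{\Ab}{\beta}{\Delta}$ with $\Gamma_1\preceq\Delta$, and the induction hypothesis gives $\wfres{\Delta}{\Aa}{\alpha}{\Delta'}$ with $\Gamma_1^*\preceq\Delta'$, where $\Gamma_1^*$ is the restriction of $\Gamma_1$ at $\Aa$ and $\Gamma_0=\Gamma_1^*\#\Ab{:}\beta$. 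To assemble a restriction of $\Gamma'$ at $\Aa$, I need that restrictions at distinct names commute: if $\Aa\neq\Ab$, $\wfres{\Gamma'}{\Ab}{\beta}{\Delta}$ and $\wfres{\Delta}{\Aa}{\alpha}{\Delta'}$, then $\wfres{\Gamma'}{\Aa}{\alpha}{\Gamma_0'}$ and $\wfres{\Gamma_0'}{\Ab}{\beta}{\Delta'}$ for some $\Gamma_0'$. I would prove this commutation property separately by structural induction on $\Gamma'$, using that restriction deletes the target name and all younger variables but retains every other name. Feeding $\wfres{\Gamma_0'}{\Ab}{\beta}{\Delta'}$ and $\Gamma_1^*\preceq\Delta'$ back into the inversion principle then delivers $\Gamma_0\preceq\Gamma_0'$, as required.

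For part (2) I would induct on the derivation of $\judge{\Gamma}{\JJ}$ with $\Gamma\preceq\Gamma'$ fixed. The lookup rules $\mathsf{var\_o}$ and $\mathsf{nm\_o}$ rest on the observation that $\preceq$ preserves bindings---every binding of $\Gamma$ reappears in $\Gamma'$---which follows from the inversion principle together with the fact that restriction only ever \emph{deletes} bindings. The binder-introduction rules ($\mathsf{pi\_t}$, $\mathsf{lam\_o}$, $\mathsf{new\_t}$, $\mathsf{abs\_o}$, and the equality rules binding a variable or name) go through once $\preceq$ is known stable under matching extension, i.e.\ $\Gamma\#\Aa{:}\alpha\preceq\Gamma'\#\Aa{:}\alpha$ and $\Gamma,x{:}A\preceq\Gamma',x{:}A$; both are one-line consequences of the inversion principle. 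The only rules that genuinely use part (1) are the concretion rules $\mathsf{conc\_o}$, $\mathsf{eq\_conc}$, and $\mathsf{eq\_nm\_beta}$, whose premise is typed in a restricted context $\Gamma_1$ with $\wfres{\Gamma}{\Aa}{\alpha}{\Gamma_1}$: part (1) produces $\wfres{\Gamma'}{\Aa}{\alpha}{\Gamma_1'}$ with $\Gamma_1\preceq\Gamma_1'$, the induction hypothesis lifts the premise to $\Gamma_1'$ (extending by $\#\Aa{:}\alpha$ in the $\mathsf{eq\_nm\_beta}$ case), and reapplying the rule in $\Gamma'$ closes the case. All remaining rules merely thread the context unchanged.

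The main obstacle is squarely the $\mathsf{res\_nm}$ case of (1), namely the commutation of restrictions at distinct names; everything else is routine bookkeeping. It is tempting to shortcut (1) by transitivity of $\preceq$ (chaining $\refLem{restriction-weakening}$, which gives $\Gamma_0\#\Aa{:}\alpha\preceq\Gamma$, with $\Gamma\preceq\Gamma'$), but transitivity is itself proved by an induction whose name case appeals to part (1); that route is therefore circular, and the commutation lemma must be established directly.
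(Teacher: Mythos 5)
The paper gives no proof of this lemma at all---it is one of the results the author files under ``standard metatheoretic results about LF, which extend to \DNTT without difficulty''---so there is no official argument to compare against. Your reconstruction is, in outline, exactly the proof one would expect: part (1) by induction on the restriction derivation using inversion of $\wfsub{\Gamma'}{\id_\Gamma}{\Gamma}$, with a commutation lemma for restrictions at distinct names to handle $\mathsf{res\_nm}$ (that lemma is true: restricting at $\Aa$ deletes only $\Aa$ and the variables younger than $\Aa$, so the two deletion sets are independent and the results agree in either order); and part (2) by induction on derivations, invoking part (1) only at $\mathsf{conc\_o}$, $\mathsf{eq\_conc}$, and $\mathsf{eq\_nm\_beta}$. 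Your closing observation---that one cannot obtain (1) from \refLem{restriction-weakening} plus transitivity of $\preceq$ because transitivity of $\preceq$ (composition of well-formed substitutions) itself presupposes weakening---is correct and worth making explicit.

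The one place where the argument is too quick is the claim that stability of $\preceq$ under matching extension ($\Gamma,x{:}A\preceq\Gamma',x{:}A$ and $\Gamma\#\Aa{:}\alpha\preceq\Gamma'\#\Aa{:}\alpha$) is a ``one-line consequence of the inversion principle.'' Unfolding the definition, $\Gamma,x{:}A\preceq\Gamma',x{:}A$ requires $\wfsub{\Gamma',x{:}A}{\id_\Gamma}{\Gamma}$, i.e.\ weakening the \emph{target} of the substitution judgment from $\Gamma'$ to $\Gamma',x{:}A$; the premises of that judgment are typing judgments $\wf{\Gamma'}{y}{B}$, and weakening those to $\Gamma',x{:}A$ is precisely an instance of part (2). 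This is the same circularity you correctly diagnosed for transitivity, resurfacing in milder form. It is repairable---either by a simultaneous induction on the typing derivation together with the evidence for $\Gamma\preceq\Gamma'$, or (as Harper and Pfenning do) by first proving weakening for a purely syntactic context-extension relation, for which matching extension is trivial, and only afterwards deriving the $\preceq$-formulated statement---but as written the step is not one line, and the proof is incomplete without saying which repair is intended.
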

\begin{lem}
  [Substitution restriction] \labelLem{subst-res}
If $\wf{\Gamma'}{\theta}{\Gamma}$ and
  $\wfres{\Gamma}{\Aa}{\alpha}{\Gamma_0}$ then
  $\wfres{\Gamma'}{\theta(\Aa)}{\alpha}{\Gamma_0'}$ and
  $\wf{\Gamma_0'}{\theta-\Aa}{\Gamma_0}$ for some $\Gamma_0'$.
\end{lem}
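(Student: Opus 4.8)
The plan is to proceed by induction on the derivation of the substitution well-formedness judgment $\wf{\Gamma'}{\theta}{\Gamma}$, performing inversion on the restriction derivation $\wfres{\Gamma}{\Aa}{\alpha}{\Gamma_0}$ in each case. Since both $\theta$ and the restriction derivation are organized by the structure of the domain context $\Gamma$, the two stay in lockstep. When $\theta = \cdot$ and $\Gamma = \cdot$ the restriction premise cannot be derived, so this case is vacuous. When $\theta = \theta_1, M/x$ and $\Gamma = \Delta, x{:}A$, inversion forces the restriction to end in $\mathsf{res\_var}$, so $\wfres{\Delta}{\Aa}{\alpha}{\Gamma_0}$ holds; since $\theta - \Aa = \theta_1 - \Aa$ and $\theta(\Aa) = \theta_1(\Aa)$ (the lookup of a name falls through a variable binding), the result follows immediately from the induction hypothesis applied to $\wf{\Gamma'}{\theta_1}{\Delta}$.

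The interesting case is $\theta = \theta_1, \Ac/\Ab$ with $\Gamma = \Delta \# \Ab{:}\beta$, whose premises are $\wfres{\Gamma'}{\Ac}{\beta}{\Gamma''}$ and $\wf{\Gamma''}{\theta_1}{\Delta}$. Inversion on the restriction splits this into two. If it ends in $\mathsf{res\_id}$ then $\Ab = \Aa$, $\beta = \alpha$, $\Gamma_0 = \Delta$, and $\theta(\Aa) = \Ac$, $\theta - \Aa = \theta_1$; the two premises are then exactly the two required conclusions (taking $\Gamma_0' = \Gamma''$). If it ends in $\mathsf{res\_nm}$ then $\Ab \neq \Aa$, $\Gamma_0 = \Gamma_1 \# \Ab{:}\beta$ with subderivation $\wfres{\Delta}{\Aa}{\alpha}{\Gamma_1}$, and $\theta(\Aa) = \theta_1(\Aa)$; call this name $\Ad$, while $\theta - \Aa = (\theta_1 - \Aa), \Ac/\Ab$. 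Applying the induction hypothesis to $\wf{\Gamma''}{\theta_1}{\Delta}$ and $\wfres{\Delta}{\Aa}{\alpha}{\Gamma_1}$ yields $\wfres{\Gamma''}{\Ad}{\alpha}{\Gamma_0''}$ and $\wf{\Gamma_0''}{\theta_1 - \Aa}{\Gamma_1}$ for some $\Gamma_0''$. To rebuild the substitution-restriction derivation for $(\theta_1 - \Aa), \Ac/\Ab$ through the name rule, I need $\wfres{\Gamma_0'}{\Ac}{\beta}{\Gamma_0''}$ where $\Gamma_0'$ is $\Gamma'$ restricted at $\Ad$ — but what I have is $\Gamma'$ restricted at $\Ac$ (to $\Gamma''$) and then $\Gamma''$ restricted at $\Ad$ (to $\Gamma_0''$).

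Bridging this gap is the main obstacle, and it is precisely a \emph{commutation} property of restriction at distinct names: if $\wfres{\Gamma'}{\Ac}{\beta}{\Gamma''}$ and $\wfres{\Gamma''}{\Ad}{\alpha}{\Gamma_0''}$ with $\Ac \neq \Ad$, then there is a $\Gamma_0'$ with $\wfres{\Gamma'}{\Ad}{\alpha}{\Gamma_0'}$ and $\wfres{\Gamma_0'}{\Ac}{\beta}{\Gamma_0''}$. I would establish this auxiliary lemma first, by induction on the derivation of $\wfres{\Gamma'}{\Ac}{\beta}{\Gamma''}$ with inversion on the second restriction: the $\mathsf{res\_id}$ base case reduces to one application of $\mathsf{res\_nm}$ followed by $\mathsf{res\_id}$, while the $\mathsf{res\_var}$ and $\mathsf{res\_nm}$ cases push the inner restriction through the induction hypothesis and re-apply the same rule, the $\mathsf{res\_nm}$ case branching on whether its trailing name equals $\Ad$. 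Distinctness of $\Ac$ and $\Ad$ is immediate in the application above: the induction hypothesis yields $\wfres{\Gamma''}{\Ad}{\alpha}{\Gamma_0''}$, so $\Ad$ is bound in $\Gamma''$, whereas $\Ac$ was exactly the name removed from $\Gamma'$ in forming $\Gamma''$.

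With the commutation lemma in hand the $\mathsf{res\_nm}$ case closes: take $\Gamma_0'$ as delivered by commutation, giving $\wfres{\Gamma'}{\theta(\Aa)}{\alpha}{\Gamma_0'}$, and then build $\wf{\Gamma_0'}{(\theta_1 - \Aa), \Ac/\Ab}{\Gamma_1 \# \Ab{:}\beta}$ by the name substitution rule from $\wfres{\Gamma_0'}{\Ac}{\beta}{\Gamma_0''}$ (commutation) together with $\wf{\Gamma_0''}{\theta_1 - \Aa}{\Gamma_1}$ (induction hypothesis). Everything outside the commutation lemma is bookkeeping that matches the defining equations of $\theta - \Aa$ against the restriction rules, so I expect the commutation argument to be the only genuinely delicate part. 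Note that determinacy of restriction (\refLem{res-det}) is not required here, because the commutation lemma is phrased to return the very context $\Gamma_0''$ produced by the induction hypothesis rather than merely some context equal to it.
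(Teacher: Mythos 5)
Your proof is correct. The paper states \refLem{subst-res} without giving a proof (it is grouped with the ``straightforward'' structural lemmas), so there is no official argument to compare against; but the route you take --- induction on the derivation of $\wf{\Gamma'}{\theta}{\Gamma}$ with inversion on the restriction derivation, which stays in lockstep because the name case of substitution well-formedness recurses into the \emph{restricted} context $\Gamma''$ --- is the natural one, and your bookkeeping of $\theta(\Aa)$ and $\theta-\Aa$ against the three restriction rules checks out in every case. You are also right that the only genuinely non-trivial content is the commutation of restrictions at distinct names: since restricting at $\Ac$ deletes $\Ac$ together with the variables to its right while keeping the names to its right, restricting at two distinct names in either order yields the same context, and your induction on the first restriction derivation (with the $\mathsf{res\_nm}$ case split on whether the trailing name is the one being commuted past, and the $\mathsf{res\_id}$ case rebuilt as $\mathsf{res\_nm}$ over $\mathsf{res\_id}$) establishes exactly that. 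The side condition for invoking commutation is also justified soundly: $\Ad$ must be bound in $\Gamma''$ for $\wfres{\Gamma''}{\Ad}{\alpha}{\Gamma_0''}$ to be derivable, while $\Ac$ has been removed from $\Gamma''$, so $\Ac\neq\Ad$ follows from the convention that the names bound in a context are pairwise distinct.
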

\begin{lem}[General Substitution]\labelLem{general-subst}
  Assume $\judge{\Gamma}{\JJ}$ and
  $\wfsub{\Gamma'}{\theta}{\Gamma}$.  Then
  $\judge{\Gamma'}{\JJ[\theta]}$.
\end{lem}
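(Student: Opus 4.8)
The plan is to proceed by a single simultaneous induction on the derivation of $\judge{\Gamma}{\JJ}$, covering all the well-formedness and equality judgment forms at once, with \refLem{subst-res} and \refLem{weakening} available as auxiliary lemmas. The statement is closed under the inductive structure: for each rule I apply the induction hypothesis to its premises under an appropriately transported substitution and then reassemble the conclusion using the clauses of \refFig{substitution}. The LF fragment---the rules $\mathsf{con\_o}$, $\mathsf{var\_o}$, $\mathsf{lam\_o}$, $\mathsf{app\_o}$, the $\Pi$- and kind-formation rules, conversion, and the ordinary equality rules---goes through exactly as in Harper and Pfenning's development. The only points worth noting are that in $\mathsf{var\_o}$ the facts $x{:}A\in\Gamma$ and $\wfsub{\Gamma'}{\theta}{\Gamma}$ yield $\wf{\Gamma'}{\theta(x)}{A[\theta]}$ by inversion on the substitution-formation derivation, and that $\mathsf{app\_o}$ requires the routine identity $B[N/x][\theta] = B[\theta][N[\theta]/x]$.

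The genuinely new work is concentrated in the two families of rules that mention names. For the binder-introducing rules ($\mathsf{new\_t}$, $\mathsf{abs\_o}$, and their equality analogues $\mathsf{eq\_new\_t}$, $\mathsf{eq\_abs}$, $\mathsf{eq\_nm\_eta}$), whose premise lives in $\Gamma\#\Aa{:}\alpha$, the step is to transport $\theta$ to the extended substitution $\theta,\Aa/\Aa$, choosing $\Aa$ fresh for $\Gamma'$ and $\theta$ by $\alpha$-renaming. This extension is well-formed, $\wfsub{\Gamma'\#\Aa{:}\alpha}{\theta,\Aa/\Aa}{\Gamma\#\Aa{:}\alpha}$, because the name clause of the substitution-formation judgment reduces via $\mathsf{res\_id}$ (giving $\wfres{\Gamma'\#\Aa{:}\alpha}{\Aa}{\alpha}{\Gamma'}$) to exactly the hypothesis $\wfsub{\Gamma'}{\theta}{\Gamma}$. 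I then apply the induction hypothesis in the extended context and repackage using $(\NEW{\Aa}{\alpha}{B})[\theta] = \NEW{\Aa}{\alpha}{B[\theta]}$ together with $B[\theta] = B[\theta,\Aa/\Aa]$.

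The main obstacle is the concretion/$\beta$ family ($\mathsf{conc\_o}$, $\mathsf{eq\_conc}$, $\mathsf{eq\_nm\_beta}$). Writing the premises of $\mathsf{conc\_o}$ as $\wfres{\Gamma}{\Aa}{\alpha}{\Gamma_0}$ and $\wf{\Gamma_0}{M}{\NEW{\Aa}{\alpha}{B}}$, I feed $\wfsub{\Gamma'}{\theta}{\Gamma}$ and the restriction into \refLem{subst-res} to obtain a context $\Gamma_0'$ with $\wfres{\Gamma'}{\theta(\Aa)}{\alpha}{\Gamma_0'}$ and $\wfsub{\Gamma_0'}{\theta-\Aa}{\Gamma_0}$. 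The induction hypothesis applied to the second premise under $\theta-\Aa$ gives $\wf{\Gamma_0'}{M[\theta-\Aa]}{\NEW{\Aa}{\alpha}{B[\theta-\Aa]}}$, and the $\mathsf{conc\_o}$ rule then derives $\wf{\Gamma'}{M[\theta-\Aa]\conc\theta(\Aa)}{B[\theta-\Aa][\theta(\Aa)/\Aa]}$. What remains, and is the delicate part, is to identify this with the desired $\wf{\Gamma'}{M[\theta]\conc\theta(\Aa)}{B[\theta]}$. The crux is that $\theta-\Aa$ and $\theta$ agree on every variable and name recorded in $\Gamma_0$: comparing the clauses of \refFig{restriction} with the context restriction judgment, both the substitution restriction $\theta-\Aa$ and context restriction discard exactly the binding of $\Aa$ together with the variable bindings more recent than it, while retaining all name bindings. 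Since $M$ is well-formed in $\Gamma_0$ and $B$ in $\Gamma_0\#\Aa{:}\alpha$, this agreement gives $M[\theta-\Aa]=M[\theta]$ and, via a substitution-composition identity $B[\theta-\Aa][\theta(\Aa)/\Aa] = B[(\theta-\Aa),\theta(\Aa)/\Aa] = B[\theta]$ (legitimate because $\Aa$ is fresh for the range of $\theta-\Aa$), the required equality of types. I expect the bulk of the effort to lie in formalizing this ``agreement on the restricted domain'' claim and the accompanying composition identity, rather than in any of the structural cases; the equality versions $\mathsf{eq\_conc}$ and $\mathsf{eq\_nm\_beta}$ are then handled by the same manipulation applied to both sides.
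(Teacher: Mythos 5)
Your proposal follows essentially the same route as the paper's proof: induction on the derivation, routine LF cases, and in the concretion case an appeal to Lemma~\ref{lem:subst-res} to obtain $\Gamma_0'$ and $\theta-\Aa$, followed by the induction hypothesis, reapplication of $\mathsf{conc\_o}$, and the observation that $\theta$ and $\theta-\Aa$ agree on everything mentioned in $M$ and $B$ so that $M[\theta-\Aa]=M[\theta]$ and the resulting types coincide. The paper leaves the binder-introducing and equality cases implicit where you spell them out, but the substance is the same.
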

\begin{proof}
  The cases for existing LF rules are straightforward.  Of the new
  cases, only the rule for concretion is interesting.  Suppose we have
\[\infer{\wf{\Gamma}{M\conc \Ab}{B[\Ab/\Aa]}}{
\deduce{\wfres{\Gamma}{\Ab}{\alpha}{\Gamma_0}}{\DD_1} &
 \deduce{\wf{\Gamma_0}{M}{\new \Aa{:}\alpha.B}}{\DD_2}}\]
Then by assumption, we have $\wfsub{\Gamma'}{\theta}{\Gamma}$.  Using
\refLem{subst-res} on $\DD_1$, we have
$\wfres{\Gamma'}{\Ab[\theta]}{\alpha}{\Gamma_0'}$ and
$\wf{\Gamma_0'}{\theta-\Ab}{\Gamma_0}$ for some $\Gamma_0'$.  Thus, by
induction, $\wf{\Gamma_0'}{M[\theta-\Ab]}{(\new
  \Aa{:}\alpha.B)[\theta-\Ab]}$ and by definition,
$\wf{\Gamma_0'}{M[\theta-\Ab]}{\new \Aa{:}\alpha.B[\theta-\Ab]}$.
Moreover, we may derive
\[
\infer{\wf{\Gamma'}{M[\theta-\Ab]\conc \Ab[\theta]}{B[\theta-\Ab][\Ab[\theta]/\Aa]}}{ 
\wfres{\Gamma'}{\Ab[\theta]}{\alpha}{\Gamma_0'}
&
\wf{\Gamma_0'}{M[\theta-\Ab]}{\new \Aa{:}\alpha.B[\theta-\Ab]}
}
\]
To conclude, we observe that $M[\theta-\Ab] = M[\theta]$ and
$B[\theta-\Ab][\Ab[\theta]/\Aa] = B[\Ab/\Aa][\theta]$ since the extra
variables and names mentioned in $\theta$ cannot be mentioned in $M$
or $B$.  So $\wf{\Gamma'}{(M\conc \Ab)[\theta]}{B[\Ab/\Aa][\theta]}$.
\end{proof}

\begin{cor}[Substitution]\labelCor{subst}
  If $\judge{\Gamma}{M:A}$ and $\judge{\Gamma,x{:}A}{\JJ}$,
  then $\judge{\Gamma}{\JJ[M/x]}$.
\end{cor}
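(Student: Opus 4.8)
The plan is to read this off directly from General Substitution (\refLem{general-subst}), instantiated at the single-point substitution $\theta = \id_\Gamma, M/x$, which I claim maps the context $\Gamma, x{:}A$ to $\Gamma$. Concretely, I would apply \refLem{general-subst} with source context $\Gamma, x{:}A$ (in which $\JJ$ is derivable by the second hypothesis) and target context $\Gamma$, obtaining $\judge{\Gamma}{\JJ[\theta]}$. Since $\theta$ sends $x$ to $M$ and fixes every other variable and name occurring in $\JJ$, we have $\JJ[\id_\Gamma, M/x] = \JJ[M/x]$, so the conclusion $\judge{\Gamma}{\JJ[M/x]}$ follows once the side condition on $\theta$ is checked.

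First I would verify $\wfsub{\Gamma}{\id_\Gamma, M/x}{\Gamma, x{:}A}$. By the substitution well-formedness rule for variable bindings, this splits into two obligations: $\wfsub{\Gamma}{\id_\Gamma}{\Gamma}$ and $\wf{\Gamma}{M}{A[\id_\Gamma]}$. Applying the identity substitution leaves a type unchanged, so $A[\id_\Gamma] = A$ and the second obligation is exactly the hypothesis $\judge{\Gamma}{M:A}$. The first obligation is the reflexivity of the subcontext relation, $\Gamma \preceq \Gamma$, equivalently well-formedness of the identity substitution on $\Gamma$.

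This reflexivity is the only ingredient carrying content, and I would establish it by a routine induction on the structure of $\Gamma$. The empty context is immediate. For a name binding $\Gamma\#\Aa{:}\alpha$, the restriction premise $\wfres{\Gamma\#\Aa{:}\alpha}{\Aa}{\alpha}{\Gamma}$ is discharged by $\mathsf{res\_id}$, stripping the target back to $\Gamma$, and the remaining premise follows from the induction hypothesis. For a variable binding $\Gamma, x{:}A$, rule $\mathsf{var\_o}$ types $x$ at $A$, while the remaining premise $\wfsub{\Gamma, x{:}A}{\id_\Gamma}{\Gamma}$ follows from the induction hypothesis after enlarging the target context $\Gamma$ to $\Gamma, x{:}A$. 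This last step is the one place requiring care: it is weakening for substitution well-formedness judgments, which is not literally part (2) of \refLem{weakening} (stated for assertions $\JJ$), but follows from it together with part (1), since a substitution derivation decomposes into object-typing premises $\wf{\Gamma}{M}{A[\theta]}$ and restriction premises, both of which weaken. Granting this, reflexivity holds, the side condition on $\theta$ is satisfied, and \refLem{general-subst} delivers the corollary.
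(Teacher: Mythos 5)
Your proposal is correct and takes essentially the same route as the paper: the paper's proof is exactly the instantiation of \refLem{general-subst} at $\theta = \id_\Gamma,M/x$, with the well-formedness $\wfsub{\Gamma}{\id_\Gamma,M/x}{\Gamma,x{:}A}$ dismissed as ``easily seen.'' Your additional unpacking of that side condition (reflexivity of $\preceq$ by induction on $\Gamma$, using $\mathsf{res\_id}$ for name bindings and weakening for the variable case) is just the detail the paper elides, and it checks out.
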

\begin{proof}
Follows from \refLem{general-subst}, using $\theta = \id_\Gamma,M/x$,
which is easily seen to satisfy $\wfsub{\Gamma}{\id_\Gamma,M/x}{\Gamma,x{:}A}$.
\end{proof}
\begin{cor}[Renaming]\labelCor{renaming}
  If $\wfres{\Gamma}{\Aa}{\alpha}{\Gamma'}$ and $\judge{\Gamma'\#\Ab{:}\alpha}{\JJ}$,
  then $\judge{\Gamma}{\JJ[\Aa/\Ab]}$.
\end{cor}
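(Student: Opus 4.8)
The plan is to derive this directly from General Substitution (\refLem{general-subst}) by exhibiting a well-formed substitution that renames $\Ab$ to $\Aa$ while acting as the identity on everything already in $\Gamma'$. Concretely, I would take $\theta = \id_{\Gamma'}, \Aa/\Ab$, mirroring the proof of \refCor{subst}. The goal then splits into two obligations: (i) showing $\wfsub{\Gamma}{\theta}{\Gamma'\#\Ab{:}\alpha}$, and (ii) checking that $\JJ[\theta]$ coincides with $\JJ[\Aa/\Ab]$.

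For (i), I would apply the well-formed-substitution rule for name bindings, which requires $\wfres{\Gamma}{\Aa}{\alpha}{\Gamma''}$ for some $\Gamma''$ together with $\wfsub{\Gamma''}{\id_{\Gamma'}}{\Gamma'}$. The first hypothesis of the corollary supplies $\wfres{\Gamma}{\Aa}{\alpha}{\Gamma'}$ outright, and by Determinacy of restriction (\refLem{res-det}) this pins down $\Gamma'' = \Gamma'$. The remaining premise $\wfsub{\Gamma'}{\id_{\Gamma'}}{\Gamma'}$ is exactly well-formedness of the identity substitution on $\Gamma'$, i.e.\ the reflexive instance $\Gamma' \preceq \Gamma'$, which holds trivially.

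For (ii), since $\judge{\Gamma'\#\Ab{:}\alpha}{\JJ}$ is derivable, every free variable and free name occurring in $\JJ$ is declared in $\Gamma'$ together with $\Ab$. On exactly these, $\theta$ behaves as the identity except that it sends $\Ab$ to $\Aa$, so $\JJ[\theta] = \JJ[\Aa/\Ab]$. Invoking \refLem{general-subst} on $\judge{\Gamma'\#\Ab{:}\alpha}{\JJ}$ with this $\theta$ then yields $\judge{\Gamma}{\JJ[\theta]}$, which is $\judge{\Gamma}{\JJ[\Aa/\Ab]}$, as required.

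I expect no serious obstacle here: the entire argument is assembling the correct renaming substitution and appealing to the earlier lemmas. The only point demanding a moment of care is obligation (i) — specifically that determinacy fixes the intermediate context and that the identity substitution on $\Gamma'$ is itself well-formed — but both are immediate consequences of results already established, so this corollary should follow as routinely as \refCor{subst} does.
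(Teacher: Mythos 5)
Your proof is correct and follows exactly the paper's own route: the paper likewise instantiates General Substitution (\refLem{general-subst}) with $\theta = \id_{\Gamma'},\Aa/\Ab$ and observes that $\wfsub{\Gamma}{\id_{\Gamma'},\Aa/\Ab}{\Gamma'\#\Ab{:}\alpha}$ holds. The only cosmetic difference is that you invoke determinacy of restriction to fix the intermediate context, whereas it suffices simply to choose $\Gamma''=\Gamma'$ as witnessed by the hypothesis; this does not affect correctness.
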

\begin{proof}
  Follows from \refLem{general-subst}, using $\theta =
  \id_{\Gamma'},\Aa/\Ab$,
which satisfies $\wfsub{\Gamma}{\id_{\Gamma'},\Aa/\Ab}{\Gamma'\#\Ab{:}\alpha}$.
\end{proof}

% In examples here and later on, we will employ polymorphic type
% constructions such as lists and optional values.  These uses can be
% monomorphized easily.

As an initial check that these rules are sensible, we verify the
\emph{local soundness} and \emph{completeness} properties expressing
that typability is preserved by $\beta$-reduction and $\eta$-expansion
steps.  For $\beta$-reductions of name-abstractions, given
\[
\begin{array}{c}
\infer{\wf{\Gamma}{(\ABS{\Aa}{\alpha}{{M}}) \conc \Ab}{B(\Ab)}}
{\wfres{\Gamma}{\Ab}{\alpha}{\Gamma'} & \infer{\wf{\Gamma'}{\ABS{\Aa}{\alpha}{{M}}}{\NEW{\Aa}{\alpha}{B(\Aa)}}}
{\wf{\Gamma'\#\Aa{:}\alpha}{M}{B(\Aa)}}}
\end{array}
\]
we conclude that $ \wf{\Gamma}{M[\Ab/\Aa]}{B(\Ab)}$ by
\refCor{renaming}.  For $\eta$-expansion of name-abstractions, given
a derivation of $\wf{\Gamma}{M}{\NEW{\Aa}{\alpha}{B}}$, and $\Aa \not\in
\Gamma$, we can expand to:
\[ 
\infer{\wf{\Gamma}{\ABS{\Aa}{\alpha}{M\conc \Aa}}{\NEW{\Aa}{\alpha}{B}}}
{\infer{\wf{\Gamma\#\Aa{:}\alpha}{M\conc \Aa}{B}}
{\wfres{\Gamma\#\Aa{:}\alpha}{\Aa}{\alpha}{\Gamma} & \wf{\Gamma}{M}{\NEW{\Aa}{\alpha}{B}}}
}
\]

As further examples of the properties of \DNTT, observe that for any
$A,B$ with $\Aa \not\in FN(B)$ we have ``weakening'' and ``exchange''
properties for $\new$:
\[
\begin{array}{l}
\wf{}{\lambda x{:}B. \ABS{\Aa}{\alpha}{x}}{B \to
  \NEW{\Aa}{\alpha}{B}} \;,\\
\wf{}{\lambda
  x{:}(\NEW{\Aa}{\alpha}{\NEW{\Ab}{\beta}{A})}. \abs{\Ab{:}\beta}{\abs{\Aa{:}\alpha}{x\conc
      \Aa \conc \Ab}}}{\NEW{\Aa}{\alpha}{\NEW{\Ab}{\beta}{A}} \to
 \NEW{\Ab}{\beta}{ \NEW{\Aa}{\alpha}{A}}}\;.
\end{array}
\]
 We might expect an inverse ``strengthening'' property, that is, 
$\new \Aa{:}\alpha.B \to B$, but this does not hold in general. The
following derivation gets stuck because there is no name $\Aa$ to
which to apply $x$:
\[
\infer{\wf{}{\LAM{x}{(\NEW{\Aa}{\alpha}{B})}{??}}{\NEW{\Aa}{\alpha}{B} \to B}}
{\wf{\bind{x}{\NEW{\Aa}{\alpha}{B}}}{ ??}{ B}}
\]
This makes sense, semantically speaking, because for example there is
no equivariant function from the nominal set
$\abs{\mathbb{A}}{\mathbb{A}}$ to $\mathbb{A}$ (where $\mathbb{A}$ is
a set of names).  We will not develop a nominal set semantics of \DNTT
here, but such a semantics was developed for a simply-typed calculus
in~\cite{cheney08lfmtp}.

There are natural functions that are definable in the nominal set
semantics that are not definable in \DNTT.  Suppose we have a function
$h : \mathbb{A} \times X \to Y$ such that for any name $\Aa$, if $\Aa$
is fresh for $x$ then $\Aa$ is fresh for $h(\Aa,x)$.  Then, as
discussed by Pitts~\cite{pitts06jacm}, we can define a function $h' :
\abs{\mathbb{A}}{X} \to Y$ satisfying $h(\Aa,x) = h'(\abs{\Aa}{x})$.
(This function is obtained by lifting $h$ to equivalence classes of
name-abstractions; the freshness condition for $h$ is sufficient to
ensure that $h$ respects $\alpha$-equivalence classes.)

As a simple example, suppose for the moment we include a standard
$\textrm{option}$ type and consider the function $g' :
\abs{\mathbb{A}}{\mathbb{A}} \to \mathbb{A}~\option$ defined by
\[g(\Aa,x) = \left\{\begin{array}{ll}
\NONE & \Aa = x\\
\SOME(x) & \Aa \neq x
\end{array}\right.
\]
This function lets us test whether an abstraction is of the form
$\abs{\Aa}{\Aa}$, and if it is not, extracts the body.  We have $\Aa
\fresh x $ implies $\Aa \fresh g(\Aa,x)$, but $g'$ cannot be defined
as a \DNTT term $N : \abs{\alpha}{\alpha} \to \alpha~\option$.  As
another example, consider the function
$k':\abs{\mathbb{A}}{\mathbb{N}} \to \mathbb{N}$ obtained from
$k(\Aa,n) = n$.  We can obviously define a natural number type $\nat$
in \DNTT, but we cannot define a \DNTT function $M :
\Abs{\alpha}{\nat} \to \nat$ satisfying $M(\abs{\Aa}{n}) = n$.

In \DNTT, we currently have no general way to define such functions, and
it is not immediately obvious how to accommodate them.  One
possibility might be to add a term constructor $\nu \Aa{:}\alpha.M$
with well-formedness rule:
\[
\infer{\wf{\Gamma}{\nu
    \Aa{:}\alpha.M}{A}}{\wf{\Gamma\#\Aa{:}\alpha}{M}{A} &
  \text{``$\Aa$ fresh for $M$''}}
\]
Roughly this approach (without the freshness side-condition) is taken
in a simply-typed calculus called Nominal System T~\cite{pitts10popl,pitts11jfp}.
However, there are significant complications with incorporating this
approach to name-restriction into a dependent type theory, explored
further in \refSec{discussion}.

In addition to the basic results presented so far, we need to
establish a number of straightforward properties for \DNTT, including
validity, inversion, and injectivity for $\Pi$ and $\new$.  These
properties (and their proofs) are essentially the same as for LF as
given in~\cite{harper05tocl,urban11tocl} and are omitted.

\section{Equivalence and canonical forms}\labelSec{meta}

\begin{figure*}[tb]
  \centering
  \[
  \begin{array}{c}
\infer{M~N \whr M'~N}{M \whr N}\quad
\infer{(\lambda x:A.M)~N \whr M[N/x]}{}\quad
\infer{M\conc \Aa\whr N \conc \Aa}{M\whr N} \quad
\infer{(\abs{\Aa{:}\alpha}{M})\conc \Ab\whr M[\Ab/\Aa]}{}
 \end{array}
  \]
  \caption{Weak head reduction}\labelFig{whr}
\[
\begin{array}{c}
\infer{\algeq{\Delta}{M}{N}{a^-}}{M \whr M' &
  \algeq{\Delta}{M'}{N}{a^-}}
\qquad
\infer{\algeq{\Delta}{M}{N}{a^-}}{N \whr N' &
  \algeq{\Delta}{M}{N'}{a^-}}
\smallskip\\
\infer{\algeq{\Delta}{M}{N}{a^-}}{\streq{\Delta}{M}{N}{a^-}}
\qquad
\infer{\algeq{\Delta}{M}{N}{\tau_1 \to
    \tau_2}}{\algeq{\Delta,x:\tau_1}{M~x}{N~x}{\tau_2}}
\qquad
\infer{\algeq{\Delta}{M}{N}{\abs{\alpha}{\tau}}}{\algeq{\Delta\#\bind{\Aa}{\alpha}}{M\conc \Aa}{N \conc \Aa}{\tau}}
\medskip\\
\infer{\streq{\Delta}{x}{x}{\tau}}{ \bind{x}{\tau}\in \Delta}
\qquad
\infer{\streq{\Delta}{c}{c}{A^-}}{ \bind{c}{A}\in \Sigma}
\qquad
    \infer{\streq{\Delta}{\Aa}{\Aa}{\alpha}}{\bind{\Aa}{\alpha}\in\Delta}
\smallskip\\
\infer{\streq{\Delta}{M_1~M_2}{N_1~N_2}{\tau_2}}{
  \streq{\Delta'}{M_1}{N_1}{\tau_1 \to \tau_2} & \algeq{\Delta}{M_2}{N_2}{\tau_1}}
\qquad 
\infer{\streq{\Delta}{M\conc \Aa}{N\conc
    \Aa}{\tau}}{\wfres{\Delta}{\Aa}{\alpha}{\Delta'} &
  \streq{\Delta'}{M}{N}{\abs{\alpha}{\tau}}}
\end{array}
\]\caption{Algorithmic and structural equivalence rules for objects}\labelFig{alg-equiv}
\[
\begin{array}{c}
\infer{\algeq{\Delta}{A}{B}{\type^-}}{\streq{\Delta}{A}{B}{\type^-}}
\quad 
\infer{\algeq{\Delta}{A}{B}{\tau \to \kappa}}{\algeq{\Delta,x{:}\tau}{A~x}{B~x}{\kappa}}
\smallskip\\
\infer{\algeq{\Delta}{\Pi x{:}A_1.A_2}{\Pi    x{:}B_1.B_2}{\type^-}}{\algeq{\Delta}{A_1}{B_1}{\type^-} & \algeq{\Delta,x:A_1^-}{A_2}{B_2}{\type^-}}
\qquad
\infer{\algeq{\Delta}{\NEW{\Aa}{\alpha}{B}}{\NEW{\Aa}{\alpha}{B'}}{\type^-}}{\algeq{\Delta\#\Aa{:}\alpha}{B}{B'}{\type^-}}
\smallskip\\
\infer{\streq{\Delta}{a}{a}{K^-}}{\bind{a}{K} \in \Sigma}
\qquad
\infer{\streq{\Delta}{\alpha }{\alpha}{\type^-}}{\alpha:\nametype \in \Sigma}
\qquad
\infer{\streq{\Delta}{A~M}{B~N}{\kappa}}{\streq{\Delta}{A}{B}{\tau
      \to \kappa} & \algeq{\Delta}{M}{N}{\tau}}
\end{array}
\]
\caption{Algorithmic and structural equivalence rules for types}\labelFig{alg-equiv-type}
\[
\begin{array}{c}
\infer{\algeq{\Delta}{\type}{\type}{\kind^-}}{} \qquad
\infer{\algeq{\Delta}{\Pi x{:}A.K}{\Pi
    x{:}B.L}{\kind^-}}{\algeq{\Delta}{A}{B}{\type^-} & \algeq{\Delta,x:A^-}{K}{L}{\kind^-}}
\end{array}
\]
\caption{Algorithmic equivalence rules for
  kinds}\labelFig{alg-equiv-kind}
\end{figure*}

In this section we show that the definitional equivalence and
well-formedness judgments of \DNTT are decidable.  In previous
work~\cite{cheney08lfmtp}, we showed strong normalization for a
simply-typed lambda calculus with names and name-abstraction types by
translating name-types to function types and re-using standard results
for the simply-typed lambda calculus.  Here, we prove the desired
results directly, based on Harper and Pfenning's decidability
proof~\cite{harper05tocl}.

Harper and Pfenning's approach is based on an algorithmic equivalence
judgment that weak head-normalizes LF terms.  The judgment only tracks
simple types $\tau$ for variables and terms may not necessarily be
well-formed.  The algorithm is shown sound and complete for
well-formed LF terms with respect to the definitional equivalence
rules.  Soundness is proved syntactically, whereas completeness
involves a logical relation argument.  The logical relation is defined
by induction on the structure of simple types.

We extend their simple types and kinds with name-abstraction types as
follows:
\[\tau ::= a^-  \mid \tau \to \tau' ~\|~ \abs{\alpha}{\tau}\mid
\alpha \qquad \kappa ::= \type^- \mid \tau \to \kappa\]
and extend the erasure function by defining $(\alpha)^- = \alpha $ and
$(\NEW{\Aa}{\alpha}{A})^- = \abs{\alpha}{A^-}$.  We consider simple
contexts $\Delta$ mapping variables to simple types.  We extend the
weak head reduction and algorithmic equivalence judgments with rules
for names and name-abstractions (\refFig{alg-equiv}).  Also, we define a restriction judgment
$\wfres{\Delta}{\Aa}{\alpha}{\Delta'}$ for simple contexts; its
definition is identical to that for dependently-typed contexts and so
is omitted.

There are a number of additional properties of erasure and
algorithmic equivalence that are needed for the following soundness
and completeness results, but again these are essentially the same as
in~\cite{harper05tocl,urban11tocl} so are omitted.

\subsection{Soundness}

The proof of soundness is syntactic.  Note however that we include a
rule for type-level extensionality, avoiding a subtle problem in
Harper and Pfenning's presentation (see~\cite[sec. 3.4]{urban11tocl}).

\begin{thm}[Subject reduction]\labelThm{sr}
  If $\wf{\Gamma}{M}{A}$ and $M \whr M'$ then $\wf{\Gamma}{M=M'}{A}$
  (and hence $\wf{\Gamma}{M'}{A}$ also).
\end{thm}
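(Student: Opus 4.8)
The plan is to induct on the derivation of $M \whr M'$. Weak head reduction (\refFig{whr}) has exactly four rules: two congruence rules (for application and for concretion) and two $\beta$-rules; the inductive hypothesis is needed only for the congruence cases, so this is essentially a four-way case analysis. In every case I first invert the typing derivation $\wf{\Gamma}{M}{A}$ — using the (omitted, but standard) inversion property and injectivity for $\Pi$ and $\new$ noted above — to expose the types of the immediate subterms, and then reassemble a definitional-equivalence derivation $\wf{\Gamma}{M = M'}{A}$: for a congruence rule by feeding the inductive hypothesis into the matching congruence rule of the definitional equivalence, and for a $\beta$-rule by applying the corresponding rule of \refFig{defeq-rules}. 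Since inversion only determines the result type up to type conversion, each case closes with a routine conversion step recovering the stated type $A$. Finally, $\wf{\Gamma}{M'}{A}$ follows from $\wf{\Gamma}{M = M'}{A}$ by validity.

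The two LF cases are routine. For $M~N \whr M'~N$ with $M \whr M'$, inverting $\mathsf{app\_o}$ gives $\wf{\Gamma}{M}{\PI{x}{A}{B}}$ and $\wf{\Gamma}{N}{A}$; the inductive hypothesis yields $\wf{\Gamma}{M = M'}{\PI{x}{A}{B}}$, and the standard congruence rule for application (with reflexivity on $N$) produces $\wf{\Gamma}{M~N = M'~N}{B[N/x]}$. For $(\LAM{x}{A}{M})~N \whr M[N/x]$, inverting through $\mathsf{app\_o}$ and $\mathsf{lam\_o}$ (using $\Pi$-injectivity) gives $\wf{\Gamma,\bind{x}{A}}{M}{B}$ and $\wf{\Gamma}{N}{A}$, and the standard LF $\beta$-rule gives $\wf{\Gamma}{(\LAM{x}{A}{M})~N = M[N/x]}{B[N/x]}$.

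The two new cases both proceed through $\mathsf{conc\_o}$, and this is where the care is needed, because the subject of a concretion is typed in a \emph{restricted} context. For the congruence case $M \conc \Aa \whr N \conc \Aa$ with $M \whr N$, inverting $\mathsf{conc\_o}$ yields a restriction $\wfres{\Gamma}{\Aa}{\alpha}{\Gamma'}$ and $\wf{\Gamma'}{M}{\NEW{\Ac}{\alpha}{B}}$, the concretion's type being $B[\Aa/\Ac]$. The essential point is that the inductive hypothesis must be invoked in the restricted context $\Gamma'$, giving $\wf{\Gamma'}{M = N}{\NEW{\Ac}{\alpha}{B}}$; feeding this and the very same restriction derivation into $\mathsf{eq\_conc}$ yields $\wf{\Gamma}{M \conc \Aa = N \conc \Aa}{B[\Aa/\Ac]}$. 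For the $\beta$-case $(\abs{\Aa{:}\alpha}{M}) \conc \Ab \whr M[\Ab/\Aa]$, inverting $\mathsf{conc\_o}$ gives $\wfres{\Gamma}{\Ab}{\alpha}{\Gamma'}$ and $\wf{\Gamma'}{\abs{\Aa{:}\alpha}{M}}{\NEW{\Aa}{\alpha}{B}}$, and a further inversion on $\mathsf{abs\_o}$ (choosing the $\new$-type's bound name to be $\Aa$) gives $\wf{\Gamma'\#\Aa{:}\alpha}{M}{B}$, with the concretion's type being $B[\Ab/\Aa]$. Reflexivity then yields $\wf{\Gamma'\#\Aa{:}\alpha}{M = M}{B}$, and $\mathsf{eq\_nm\_beta}$, with the restriction $\wfres{\Gamma}{\Ab}{\alpha}{\Gamma'}$, delivers $\wf{\Gamma}{(\abs{\Aa{:}\alpha}{M}) \conc \Ab = M[\Ab/\Aa]}{B[\Ab/\Aa]}$.

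I expect the main obstacle to be purely the context-restriction bookkeeping in these last two cases, rather than any deep difficulty: one must apply the inductive hypothesis in $\Gamma'$ rather than in $\Gamma$, and verify that the restriction derivation extracted by inverting $\mathsf{conc\_o}$ is exactly the one demanded as a premise by $\mathsf{eq\_conc}$ and $\mathsf{eq\_nm\_beta}$ (which it is, and determinacy of restriction, \refLem{res-det}, guarantees there is no ambiguity should the two be produced separately). Everything else is the standard syntactic subject-reduction argument for LF.
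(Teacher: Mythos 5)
Your proposal is correct and follows essentially the same route as the paper's proof: induction on the weak-head-reduction derivation, with the two concretion cases handled by inverting $\mathsf{conc\_o}$ (and $\mathsf{abs\_o}$), applying the inductive hypothesis (respectively reflexivity) in the restricted context $\Gamma'$, reassembling via $\mathsf{eq\_conc}$ and $\mathsf{eq\_nm\_beta}$, and closing with a type-conversion step. The paper spells out the conversion bookkeeping slightly more explicitly (tracking the types $A'$ and $A''$ produced by inversion), but you flag exactly this point, so nothing is missing.
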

\begin{proof}
  By induction on the derivation of $M \whr M'$, with most cases standard.
  \begin{iteMize}{$\bullet$}
  \item If the derivation is of the form:
    \[
    \infer{M \conc \Aa \whr M' \conc \Aa}{M \whr M'}
    \]
    then by inversion we must have
    $\wfres{\Gamma}{\Aa}{\alpha}{\Gamma'}$ and $\wf{\Gamma'}{M}{\new
      \Ab{:}\alpha.A'}$ where $\wf{\Gamma'}{A =
      A'[\Aa/\Ab]}{\type}$. Hence, by induction we know that
    $\wf{\Gamma'}{M=M'}{\new \Ab{:}\alpha.A'}$, and we may derive
    \[\infer{\wf{\Gamma}{M\conc \Aa = M' \conc
        \Aa}{A}}{
      \infer{\wf{\Gamma}{M\conc \Aa = M' \conc
        \Aa}{A'[\Aa/\Ab]}}{\wfres{\Gamma}{\Aa}{\alpha}{\Gamma'} &
      \wf{\Gamma'}{M=M'}{\new \Ab{:}\alpha.A'}}
  & \wf{\Gamma'}{A =
      A'[\Aa/\Ab]}{\type}}
    \]
  \item If the derivation is of the form:
    \[
    \infer{(\abs{\Aa{:}\alpha}{M})\conc \Ab \whr M[\Ab/\Aa]}{}
    \]
    then by inversion we must have
    $\wfres{\Gamma}{\Ab}{\alpha}{\Gamma'}$ and
    $\wf{\Gamma'}{\abs{\Aa{:}\alpha}{M}}{\new \Aa{:}\alpha.A'}$, where
    $\wf{\Gamma}{A=A'[\Ab/\Aa]}{\type}$.  Moreover, again by inversion
    we must have $\wf{\Gamma'\#\Aa{:}\alpha}{M}{A''}$ where
    $\wf{\Gamma'\#\Aa{:}\alpha}{A'=A''}{\type}$.  Thus, we may derive:
  \[
   \infer{\wf{\Gamma}{(\abs{\Aa{:}\alpha}{M})\conc \Ab = M[\Ab/\Aa]}{A'[\Ab/\Aa]}}
    {\wfres{\Gamma}{\Ab}{\alpha}{\Gamma'}
      &
      \infer{\wf{\Gamma'\#\Aa{:}\alpha}{M = M}{A'}}
      {\infer{\wf{\Gamma'\#\Aa{:}\alpha}{M}{A'}}
        {\wf{\Gamma'\#\Aa{:}\alpha}{M}{A''}  
          &
          \wf{\Gamma'\#\Aa{:}\alpha}{A'=A''}{\type}
        }
      }
    }
 \]
 Since $\wf{\Gamma}{A=A'[\Ab/\Aa]}{\type}$, we can conclude
 $\wf{\Gamma}{(\abs{\Aa{:}\alpha}{M})\conc \Ab = M[\Ab/\Aa]}{A}$, as
 desired.
  \end{iteMize}
\end{proof}

\begin{lem}[Soundness of restriction]
  If $\Gamma,\Gamma_0$ are well-formed and $\wfres{\Gamma^-}{\Aa}{\alpha}{\Gamma_0^-}$ then
  $\wfres{\Gamma}{\Aa}{\alpha}{\Gamma_0}$.
\end{lem}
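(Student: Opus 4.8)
The plan is to argue by induction on the derivation of the simple-context restriction $\wfres{\Gamma^-}{\Aa}{\alpha}{\Gamma_0^-}$, reconstructing a dependent restriction one binding at a time. The essential observation is that erasure $(-)^-$ preserves the \emph{shape} of a context: it leaves the sequence of bound variables and names unchanged, keeps the distinction between comma-bindings $\bind{x}{A}$ and fresh-name bindings $\Gamma\#\bind{\Aa}{\alpha}$, and acts as the identity on name-types (since $\alpha^- = \alpha$). Hence the rightmost binding of $\Gamma$ is determined by that of $\Gamma^-$, and likewise $\Gamma_0$ by $\Gamma_0^-$. This lets me read off which rule ($\mathsf{res\_id}$, $\mathsf{res\_nm}$, or $\mathsf{res\_var}$) concluded the simple derivation and peel off matching bindings from the dependent contexts, invoking inversion on the well-formedness of $\Gamma$ and $\Gamma_0$ to keep the residual contexts well-formed for the induction hypothesis.

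The two inductive cases are routine. If the last rule is $\mathsf{res\_var}$, then $\Gamma^- = \Delta,\bind{x}{\tau}$ with premise $\wfres{\Delta}{\Aa}{\alpha}{\Gamma_0^-}$; by shape-preservation $\Gamma = \Gamma_1,\bind{x}{A}$ where $\Gamma_1^- = \Delta$, so the induction hypothesis applied to $\Gamma_1$ and $\Gamma_0$ gives $\wfres{\Gamma_1}{\Aa}{\alpha}{\Gamma_0}$, and one more $\mathsf{res\_var}$ step reassembles the conclusion. If the last rule is $\mathsf{res\_nm}$, the rightmost bindings of both $\Gamma^-$ and $\Gamma_0^-$ are the same fresh-name binding $\bind{\Ab}{\beta}$ (name-types erase to themselves, so $\beta$ is copied verbatim), whence $\Gamma = \Gamma_1\#\bind{\Ab}{\beta}$ and $\Gamma_0 = \Gamma_0'\#\bind{\Ab}{\beta}$; the induction hypothesis on $\Gamma_1$ and $\Gamma_0'$ followed by $\mathsf{res\_nm}$ gives $\wfres{\Gamma}{\Aa}{\alpha}{\Gamma_0}$.

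The real work is in the base case $\mathsf{res\_id}$, which I expect to be the main obstacle. Here $\Gamma^- = \Gamma_1^-\#\bind{\Aa}{\alpha}$ and $\Gamma_0^- = \Gamma_1^-$, so $\Gamma = \Gamma_1\#\bind{\Aa}{\alpha}$ and $\mathsf{res\_id}$ immediately yields $\wfres{\Gamma}{\Aa}{\alpha}{\Gamma_1}$; to finish I must identify the given $\Gamma_0$ with the prefix $\Gamma_1$. The difficulty is that erasure is \emph{not} injective on arbitrary well-formed contexts, since it discards the arguments of type-family applications (so, e.g., $\bind{y}{a~M}$ and $\bind{y}{a~N}$ erase alike); merely knowing $\Gamma_0^- = \Gamma_1^-$ with both well-formed is not enough. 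What saves the case is that $\Gamma_0$ is a subcontext of $\Gamma$: the reconstructed $\Gamma_1$ satisfies $\Gamma_1\#\bind{\Aa}{\alpha}\preceq\Gamma$ by \refLem{restriction-weakening}, hence $\Gamma_1\preceq\Gamma$, and $\Gamma_0$ is the subcontext of $\Gamma$ whose erasure drives the simple derivation.

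I would therefore isolate, as the auxiliary fact feeding this case, that erasure \emph{is} injective on subcontexts of a fixed well-formed $\Gamma$: on such subcontexts the dependent types are forced (up to definitional equivalence) by $\Gamma$ via the defining condition $\wfsub{\Gamma}{\id}{\cdot}$ of $\preceq$, while erasure already pins down the variable and name sequence. Applying this to the two subcontexts $\Gamma_0\preceq\Gamma$ and $\Gamma_1\preceq\Gamma$ with common erasure $\Gamma_0^- = \Gamma_1^-$ forces $\Gamma_0 = \Gamma_1$, closing the $\mathsf{res\_id}$ case; \refLem{res-det} then confirms the reconstructed derivation is the intended one. The bulk of the proof is thus bookkeeping on context shape, with the only genuine content being this injectivity-on-subcontexts observation.
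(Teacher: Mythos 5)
Your inductive skeleton---induction on the simple-context restriction derivation, using the fact that erasure preserves the shape of a context so that each of $\mathsf{res\_id}$, $\mathsf{res\_nm}$, $\mathsf{res\_var}$ lifts to the corresponding dependent rule---is exactly the ``straightforward induction on derivations'' the paper intends, and your two inductive cases are fine. The gap is in your resolution of the base case. The auxiliary claim you isolate, that erasure is injective on subcontexts of a fixed well-formed $\Gamma$, is false: the relation $\Gamma' \preceq \Gamma$ constrains the declared types only up to \emph{definitional} equivalence, because the variable case of $\wfsub{\Gamma}{\id_{\Gamma'}}{\Gamma'}$ goes through the typing judgment, which is closed under conversion. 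Concretely, if $a : b \to \type$, $\wf{\cdot}{M}{b}$, $M' = (\lambda z{:}b.z)\,M$ and $\Gamma = y{:}a~M$, then $y{:}a~M$ and $y{:}a~M'$ are both well-formed subcontexts of $\Gamma$ with the same erasure $y{:}a^-$, yet they are syntactically distinct; so the step that forces $\Gamma_0 = \Gamma_1$ does not go through, and since restriction is syntactically deterministic (\refLem{res-det}), $\wfres{\Gamma}{\Aa}{\alpha}{\Gamma_0}$ is genuinely underivable for the ``wrong'' $\Gamma_0$. (You also quietly import the hypothesis $\Gamma_0 \preceq \Gamma$, which is not in the statement.)

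The way out is to read the lemma as it is actually used in the soundness theorem: as an existence (and, by \refLem{res-det}, uniqueness) statement. Given well-formed $\Gamma$ and a derivation of $\wfres{\Gamma^-}{\Aa}{\alpha}{\Delta'}$, one produces a well-formed $\Gamma_0$ with $\Gamma_0^- = \Delta'$ and $\wfres{\Gamma}{\Aa}{\alpha}{\Gamma_0}$, namely the prefix of $\Gamma$ that your shape-preservation argument already constructs. In the soundness proof the symbol $\Gamma_0$ is \emph{introduced} by this construction and only afterwards identified, via determinacy of restriction, with the contexts $\Gamma_1,\Gamma_2$ obtained by inversion of the typing derivations; no injectivity of erasure is needed anywhere. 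Under that reading your base case is immediate ($\Gamma = \Gamma_1\#\bind{\Aa}{\alpha}$, take $\Gamma_0 := \Gamma_1$), the ``real work'' you anticipated disappears, and the remainder of your argument is correct as written.
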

\begin{proof}
  Straightforward induction on derivations.
\end{proof}\medskip

\begin{thm}[Soundness]\hfill
  \begin{enumerate}[\em(1)]\item 
    If $\algeq{\Gamma^-}{M}{N}{A^-}$ and $\wf{\Gamma}{M,N}{A}$ then
    $\wf{\Gamma}{M=N}{A}$.
  \item If $\streq{\Gamma^-}{M}{N}{\tau}$ and $\wf{\Gamma}{M}{A}$ and
    $\wf{\Gamma}{N}{B}$ then $\wf{\Gamma}{A=B}{\type}$ and
    $\wf{\Gamma}{M=N}{A}$ and $A^- =\tau = B^-$.
  \end{enumerate}
\end{thm}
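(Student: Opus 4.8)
The plan is to prove both parts simultaneously by induction on the structure of the algorithmic and structural equivalence derivations, since the two judgments $\algeq{\Gamma^-}{M}{N}{A^-}$ and $\streq{\Gamma^-}{M}{N}{\tau}$ are mutually recursive. This mirrors Harper and Pfenning's soundness argument for LF, so the bulk of the cases---weak-head reduction steps, the structural base cases for variables and constants, function application, and $\eta$-expansion at $\Pi$-types---are handled exactly as in~\cite{harper05tocl}. I would dispatch these quickly and concentrate on the new cases arising from names and name-abstraction types.

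For the new algorithmic cases, first consider the extensionality rule at a $\new$-type, where we derive $\algeq{\Gamma^-}{M}{N}{\abs{\alpha}{\tau}}$ from $\algeq{(\Gamma\#\Aa{:}\alpha)^-}{M\conc\Aa}{N\conc\Aa}{\tau}$. Here I would use validity and inversion on the assumption $\wf{\Gamma}{M,N}{A}$ (where $A^- = \abs{\alpha}{\tau}$) to conclude $A$ is definitionally a $\new$-type, say $\wf{\Gamma}{A = \new\Aa{:}\alpha.A'}{\type}$; then both $M\conc\Aa$ and $N\conc\Aa$ are well-formed in $\Gamma\#\Aa{:}\alpha$ at type $A'$, so the induction hypothesis for part~(1) yields $\wf{\Gamma\#\Aa{:}\alpha}{M\conc\Aa = N\conc\Aa}{A'}$, and the $\mathsf{eq\_nm\_eta}$ rule delivers $\wf{\Gamma}{M=N}{\new\Aa{:}\alpha.A'}$, whence $\wf{\Gamma}{M=N}{A}$ by conversion. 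The structural cases for names ($\streq{\Delta}{\Aa}{\Aa}{\alpha}$) and concretion are similar in spirit: for the concretion rule, which carries a restriction premise $\wfres{\Delta}{\Aa}{\alpha}{\Delta'}$ on the erased context, I would invoke the just-proved ``Soundness of restriction'' lemma to lift this to $\wfres{\Gamma}{\Aa}{\alpha}{\Gamma'}$ on the dependent context, apply inversion on $\wf{\Gamma}{M\conc\Aa}{A}$ to recover the abstraction's type over $\Gamma'$, feed this into the part~(2) induction hypothesis, and then reassemble using $\mathsf{eq\_conc}$.

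The main obstacle I anticipate is the bookkeeping of types across the restriction judgment in the structural concretion case. In part~(2) the inductive hypothesis for $\streq{\Gamma'^-}{M}{N}{\abs{\alpha}{\tau}}$ yields equality of the $\new$-types assigned to $M$ and $N$ over the \emph{restricted} context $\Gamma'$, but the conclusion $\wf{\Gamma}{A=B}{\type}$ and $\wf{\Gamma}{M\conc\Aa=N\conc\Aa}{A}$ must be stated over $\Gamma$; bridging the two requires the substitution $[\Aa/\Ab]$ that instantiates the abstracted name, together with \refLem{restriction-weakening} to justify that anything well-formed over $\Gamma'$ remains so over $\Gamma$ after reintroducing $\Aa$. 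I would also need to verify the side-condition $A^- = \tau = B^-$ is preserved under these manipulations, which follows because erasure commutes with the renaming substitution and is unaffected by restriction. These steps are routine once the correspondence between erased and dependent restriction is established, but they are the most delicate point where the name-abstraction machinery genuinely departs from the LF proof.
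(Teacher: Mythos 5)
Your plan matches the paper's proof essentially step for step: simultaneous induction on the two mutually recursive judgments, standard cases deferred to Harper--Pfenning, inversion of erasure plus $\mathsf{eq\_nm\_eta}$ for the extensionality case at $\new$-types, and soundness of restriction, inversion, $\mathsf{eq\_conc}$, and \refLem{restriction-weakening} for the structural concretion case. The only detail you leave implicit is determinacy of restriction (\refLem{res-det}), which the paper invokes to identify the restricted context from the algorithmic premise with the two restricted contexts obtained by inversion on the typings of $M\conc\Aa$ and $N\conc\Aa$, so that both inverted judgments live over the same context before the part~(2) induction hypothesis is applied.
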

\proof
  By simultaneous induction on the derivations of
  $\algeq{\Gamma^-}{M}{N}{A^-}$ and $\streq{\Gamma^-}{M}{N}{\tau}$.
  Again most cases are standard; we show the new cases only.
  \begin{iteMize}{$\bullet$}
%   \item If the derivation is of the form:
%     % 
%     \[
%     \infer{\algeq{\Gamma^-}{M}{N}{\abs{\alpha}{\tau}}}{
%       \algeq{\Gamma^-\#\bind{\Aa}{\alpha}}{M\conc \Aa}{N \conc
%         \Aa}{\tau}}
%     \]
%     then since $A^-=\abs{\alpha}{\tau}$ we must have $A=\new
%     \Aa{:}\alpha.B$ with $B^- = \tau$.  We must then have that
%     $\wf{\Gamma\#\Aa{:}\alpha}{M\conc \Aa}{B}$ and similarly
%     for $N$.  So, by induction we can conclude
%   % 
%     \[
%     \infer{\wf{\Gamma}{M=N}{\new \Aa{:}\alpha.B}}{
%       \wf{\Gamma\#\Aa{:}\alpha}{M\conc \Aa = N \conc \Aa}{B} }
%     \]
  \item If the derivation is of the form:
    \[
    \infer{\streq{\Gamma^-}{\Aa}{\Aa}{\alpha}}{\bind{\Aa}{\alpha}\in\Gamma^-}
    \]
    then we must have $\Aa{:}\alpha \in \Gamma$ so we can conclude
    that $\wf{\Gamma}{\alpha=\alpha}{\type}$ and
    $\wf{\Gamma}{\Aa=\Aa}{\alpha}$ and $A^- = \alpha^- = \alpha =
    \alpha^- = B^-$.

  \item 
If the derivation is of the form:
    \[
    \infer{\algeq{\Gamma^-}{M}{N}{\abs{\alpha}{\tau}}}{
      \algeq{\Gamma^-\#\bind{\Aa}{\alpha}}{M\conc \Aa}{N\conc\Aa}{\tau}}
    \]
    where $A^- = \abs{\alpha}{\tau}$, then without loss of generality we
    assume $\Aa$ is fresh for $\Gamma,A,M,N$.  Then by inversion of
    erasure we must have $A = \NEW{\Ab}{\alpha}{A_0}$ for some $A_0$
    with $A_0^- = \tau$.  Without loss of generality, assume that
    $\Ab$ is fresh for $\Aa, \Gamma, A,M,N$. Moreover, we can easily
    show that $\wf{\Gamma\#\Aa{:}\alpha}{M\conc \Aa}{A_0[\Aa/\Ab]}$
    and similarly for $N$.  Then by induction, we know that
    $\wf{\Gamma\#\Aa{:}\alpha}{M\conc \Aa = N \conc
      \Aa}{A_0[\Aa/\Ab]}$, hence we can derive
\[
\infer{\wf{\Gamma}{M=N}{\NEW{\Aa}{\alpha}{A_0[\Aa/\Ab]}}}{\wf{\Gamma\#\Aa{:}\alpha}{M\conc\Aa
    = N \conc\Aa}{A_0[\Aa/\Ab]}}
\]
Since $A= \NEW{\Ab}{\alpha}{A_0}$ and $\Aa$ is sufficiently fresh, $A$
is $\alpha$-equivalent to $\NEW{\Aa}{\alpha}{A_0[\Aa/\Ab]}$, so $\wf{\Gamma}{M=N}{A}$, as desired.

  \item If the derivation is of the form:
    \[
    \infer{\streq{\Gamma^-}{M\conc \Aa}{N\conc \Aa}{\tau}}{
      \wfres{\Gamma^-}{\Aa}{\alpha}{\Gamma_0^-} &
      \streq{\Gamma_0^-}{M}{N}{\abs{\alpha}{\tau}}}
    \]
    then we know that $\wfres{\Gamma}{\Aa}{\alpha}{\Gamma_0}$ by the
    soundness of restriction.  Moreover, by inversion we know that
    $\wfres{\Gamma}{\Aa}{\alpha}{\Gamma_1}$ and $\wf{\Gamma_1}{M}{\new
      \Aa{:}\alpha.A_0}$ and $\wf{\Gamma}{A_0 = A}{\type}$ for some
    $\Gamma_1,A_0$, and similarly for $N$ for some $\Gamma_2,B_0$.  By
    determinacy of restriction (\refLem{res-det}) we know that $\Gamma_0 = \Gamma_1 =
    \Gamma_2$.  Hence, by induction we have that $\wf{\Gamma_0}{\new
      \Aa{:}\alpha.A_0 = \new \Aa{:}\alpha.B_0}{\type}$ and
    $\wf{\Gamma_0}{M=N}{\new \Aa{:}\alpha.A_0}$ and $(\new
    \Aa{:}\alpha.A_0)^- = \abs{\alpha^-}{\tau} = (\new
    \Aa{:}\alpha.B_0)^-$.  It follows immediately that $A_0^- = \tau =
    B_0^-$.  In addition, we have that
    $\wf{\Gamma_0\#\Aa{:}\alpha}{A_0 = B_0}{\type}$ by injectivity of
    $\new$-type equality.

    To conclude, we can derive:
    \[
     \infer{\wf{\Gamma}{A_0=B}{\type}}
      {\infer[W]{\wf{\Gamma}{A_0=B_0}{\type}}
        {\wf{\Gamma_0\#\Aa{:}\alpha}{A_0=B_0}{\type}}
        & \wf{\Gamma}{B_0=B}{\type}
      }
   \]
   where the inference labeled $W$ is by weakening since we must have
   $\Gamma_0\#\Aa{:}\alpha \preceq \Gamma$ by
   \refLem{restriction-weakening}. Next, observe that by transitivity
   we have $\wf{\Gamma}{A=B}{\type}$ since $\wf{\Gamma}{A=A_0}{\type}$
   holds.  Finally, we can also derive:
    \[
    \infer{\wf{\Gamma}{M\conc \Aa=N\conc \Aa}{A_0}}
    {
      \wfres{\Gamma}{\Aa}{\alpha}{\Gamma_0} & 
      \wf{\Gamma_0}{M=N}{\new \Aa{:}\alpha.A_0}
    }
    \]
\end{iteMize}
This completes the proof.
\qed

\subsection{Completeness}

The proof of completeness is by a Kripke logical relation argument.
The logical relation is extended with a case for name-abstraction
types in \refFig{logrel}. We first state the key properties of the
logical relations:

\begin{lem}[Logical substitution restriction]\labelLem{subst-restrict}
  Suppose that $\logrel{\Delta}{\theta=\sigma}{\Gamma^-}$ and
  $\wfres{\Gamma}{\Aa}{\alpha}{\Gamma_0}$.  Then $\theta(\Aa) =
  \sigma(\Aa)$ and there exists $\Delta_0$ such that
  $\wfres{\Delta}{\theta(\Aa)}{\alpha}{\Delta_0}$ and
    $\logrel{\Delta_0}{\theta-\Aa=\sigma-\Aa}{\Gamma_0^-}$.
\end{lem}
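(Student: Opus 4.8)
The plan is to proceed by induction on the derivation of the restriction judgment $\wfres{\Gamma}{\Aa}{\alpha}{\Gamma_0}$, inverting the appropriate clause of the logical relation $\logrel{\Delta}{\theta=\sigma}{\Gamma^-}$ in each case. Recall that the logical relation on substitutions is defined by recursion on the erased context, and that its clause for a name binding $\Gamma_1^-\#\Ab{:}\alpha$ requires precisely that $\theta(\Ab)=\sigma(\Ab)$, that $\wfres{\Delta}{\theta(\Ab)}{\alpha}{\Delta_1}$ for some $\Delta_1$, and that $\logrel{\Delta_1}{\theta-\Ab=\sigma-\Ab}{\Gamma_1^-}$. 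The conclusion of the lemma can thus be read off directly from this clause whenever $\Aa$ is the \emph{most recently} bound name; the content of the lemma is to propagate it to names buried deeper in $\Gamma$.

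First I would dispatch the base case $\mathsf{res\_id}$, where $\Gamma=\Gamma_0\#\Aa{:}\alpha$. Here $\Aa$ is the most recent binding, so inverting the name clause of the logical relation yields $\theta(\Aa)=\sigma(\Aa)$, a $\Delta_0$ with $\wfres{\Delta}{\theta(\Aa)}{\alpha}{\Delta_0}$, and $\logrel{\Delta_0}{\theta-\Aa=\sigma-\Aa}{\Gamma_0^-}$ --- exactly the goal. The case $\mathsf{res\_var}$, where $\Gamma=\Gamma_1,x{:}A$ and $\wfres{\Gamma_1}{\Aa}{\alpha}{\Gamma_0}$, is equally routine: inverting the variable clause strips the $x$ binding, leaving $\logrel{\Delta}{\theta_1=\sigma_1}{\Gamma_1^-}$, to which the induction hypothesis applies. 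Since $(\theta,M/x)-\Aa=\theta-\Aa$ and $\theta(\Aa)=\theta_1(\Aa)$ by the definition of substitution restriction, and since $\mathsf{res\_var}$ leaves $\Gamma_0$ and hence $\Delta_0$ unchanged, the result for $\Gamma_1$ transfers verbatim to $\Gamma$.

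The interesting case is $\mathsf{res\_nm}$, where $\Gamma=\Gamma_1\#\Ab{:}\beta$ with $\Ab\neq\Aa$, $\wfres{\Gamma_1}{\Aa}{\alpha}{\Gamma_1'}$, and $\Gamma_0=\Gamma_1'\#\Ab{:}\beta$. Inverting the name clause at $\Ab$ yields $\theta(\Ab)=\sigma(\Ab)$, $\wfres{\Delta}{\theta(\Ab)}{\beta}{\Delta_1}$, and $\logrel{\Delta_1}{\theta-\Ab=\sigma-\Ab}{\Gamma_1^-}$; applying the induction hypothesis to the latter supplies $\theta(\Aa)=\sigma(\Aa)$, some $\Delta_0'$ with $\wfres{\Delta_1}{\theta(\Aa)}{\alpha}{\Delta_0'}$, and $\logrel{\Delta_0'}{(\theta-\Ab)-\Aa=(\sigma-\Ab)-\Aa}{(\Gamma_1')^-}$. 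To rebuild the conclusion for $\Gamma$ I must restrict $\Delta$ by $\theta(\Aa)$ \emph{first}: from the two nested restrictions at the distinct names $\theta(\Ab)\neq\theta(\Aa)$ an exchange lemma for the restriction judgment produces a $\Delta_0$ with $\wfres{\Delta}{\theta(\Aa)}{\alpha}{\Delta_0}$ and $\wfres{\Delta_0}{\theta(\Ab)}{\beta}{\Delta_0'}$. Reassembling the name clause at $\Ab$ over $\Delta_0$ then gives $\logrel{\Delta_0}{\theta-\Aa=\sigma-\Aa}{\Gamma_0^-}$, using in addition that substitution restriction commutes, $(\theta-\Aa)-\Ab=(\theta-\Ab)-\Aa$, to match the context returned by the induction hypothesis.

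I expect the main obstacle to be these two commutativity facts for restriction at distinct names: the exchange lemma for the context-restriction judgment and the corresponding identity for the substitution-restriction operation $\theta-\Aa$. Both should follow by straightforward inductions (on $\Delta$ and on $\theta$, respectively), relying on the fact that restriction discards only the named binding together with more-recent \emph{variable} bindings while retaining more-recent \emph{name} bindings, so the relative order of two distinct names is immaterial. The care lies in checking that the intermediate contexts coincide and remain well-formed; accordingly I would establish these exchange properties as separate auxiliary lemmas before running the induction above.
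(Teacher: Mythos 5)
Your proof is correct and follows essentially the same route as the paper's, which consists of exactly the two sentences ``induction on the restriction derivation, using inversion [on the logical relation] and the definition of substitution restriction.'' The only substantive content the paper leaves implicit is precisely what you isolate in the $\mathsf{res\_nm}$ case --- the exchange property for restriction at two distinct names (valid because restriction discards more-recent \emph{variable} bindings but retains more-recent \emph{name} bindings, and because the logical relation's name clause guarantees $\theta(\Aa)\neq\theta(\Ab)$ by checking the tail in the restricted context) --- so your elaboration is a faithful filling-in of the paper's ``straightforward'' induction.
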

\begin{proof}
  It is straightforward to show that $\theta(\Aa) = \sigma(\Aa)$ by
  induction on the first derivation.  For the second part, the proof
  is by induction on the second derivation, using inversion and the
  definition of substitution restriction.
\end{proof}

\begin{figure*}[tb]
  \centering
  \begin{eqnarray*}
    \logrel{\Delta}{M=N}{\delta} & \iff & \algeq{\Delta}{M}{N}{\delta} \quad (\delta \in \{\alpha,a^-\})\\
    \logrel{\Delta}{M=N}{\tau_1\to\tau_2} &\iff& 
    \forall \Delta' \succeq \Delta. \logrel{\Delta'}{M'=N'}{\tau_1} 
    \impp \logrel{\Delta'}{M~M'=N~N'}{\tau_2}\\
    \logrel{\Delta}{M=N}{\abs{\alpha}{\tau}} & \iff & 
    \forall \Delta'',\Aa,\Delta' \succeq \Delta. 
    \wfres{\Delta''}{\Aa}{\alpha}{\Delta'} \impp
    \logrel{\Delta''}{M\conc \Aa=N\conc \Aa}{\tau}\\
  \end{eqnarray*}
\[
\begin{array}{c}
\infer{\logrel{\Delta}{\cdot = \cdot}{\cdot}}{}
\quad
\infer{\logrel{\Delta}{\theta,M/x=\sigma,N/x}{\Theta,\bind{x}{\tau}}}
{\logrel{\Delta}{M=N}{\tau} & \logrel{\Delta}{\theta=\sigma}{\Theta} }
\quad
\infer{\logrel{\Delta}{\theta,\Ab/\Aa=\sigma,\Ab/\Aa}{\Theta\#\bind{\Aa}{\alpha}}}
{\wfres{\Delta}{\Ab}{\alpha}{\Delta'} & \logrel{\Delta'}{\theta=\sigma}{\Theta} }
\end{array}
\]
\caption{Logical relation for objects and substitutions}\labelFig{logrel}
\end{figure*}

% \begin{lem}[Properties of the logical relation]
% ~
%   \begin{enumerate}[(1)] 
%     \item (Weakening) If $\logrel{\Delta}{M=N}{\tau}$ and $\Delta' \succeq \Delta$
%     then $\logrel{\Delta'}{M=N}{\tau}$.
%   \item (Symmetry) If $\logrel{\Delta}{M=N}{\tau}$ then $\logrel{\Delta}{N=M}{\tau}$.
%   \item (Transitivity) If $\logrel{\Delta}{M=N}{\tau}$ and
%     $\logrel{\Delta}{N=O}{\tau}$ then $\logrel{\Delta}{M=O}{\tau}$.
%   \item (Head expansion) If $M \whr M'$ and $\logrel{\Delta}{M'=N}{\tau}$ then $\logrel{\Delta}{M=N}{\tau}$.
%   \item (Identity) For any $\Gamma$, we have
%     $\logrel{\Gamma^-}{\id_{\Gamma} = \id_{\Gamma}}{\Gamma^-}$.
%   \end{enumerate}
% \end{lem}

\begin{lem}
  [Weakening] If $\logrel{\Delta}{M=N}{\tau}$ and $\Delta' \succeq
  \Delta$ then $\logrel{\Delta'}{M=N}{\tau}$.
\end{lem}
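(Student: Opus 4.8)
The plan is to argue by cases on the structure of the simple type $\tau$, relying on the observation that the clauses defining the logical relation at function and abstraction types are already phrased in a Kripke-monotone fashion, so that weakening there is immediate; the only clause requiring real work is the base case. Throughout I use that $\succeq$ is transitive, a standard property of the subcontext ordering $\preceq$ that follows from composition of well-formed substitutions (the identity substitutions compose to an identity).

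For the base types $\tau = \delta \in \{\alpha, a^-\}$ the relation unfolds to $\logrel{\Delta}{M=N}{\delta} \iff \algeq{\Delta}{M}{N}{\delta}$, so the statement reduces to a weakening property for algorithmic equivalence: from $\algeq{\Delta}{M}{N}{\delta}$ and $\Delta' \succeq \Delta$ derive $\algeq{\Delta'}{M}{N}{\delta}$. This is one of the auxiliary structural properties of algorithmic and structural equivalence that the paper imports from the LF development, and I would appeal to it directly. If one spells it out, it is a routine simultaneous induction on the algorithmic/structural derivation, the leaves for variables, constants, and names going through because passing to a larger context $\Delta' \succeq \Delta$ preserves both the relevant bindings and the restriction judgments $\wfres{\Delta}{\Aa}{\alpha}{\Delta'}$ appealed to in the concretion rule.

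For $\tau = \tau_1 \to \tau_2$, suppose $\logrel{\Delta}{M=N}{\tau_1\to\tau_2}$ and $\Delta' \succeq \Delta$. To establish $\logrel{\Delta'}{M=N}{\tau_1\to\tau_2}$ I take an arbitrary $\Delta_2 \succeq \Delta'$ together with $\logrel{\Delta_2}{M'=N'}{\tau_1}$; transitivity yields $\Delta_2 \succeq \Delta$, so the hypothesis applied at world $\Delta_2$ gives $\logrel{\Delta_2}{M~M'=N~N'}{\tau_2}$ directly, with no appeal to an induction hypothesis. The abstraction case $\tau = \abs{\alpha}{\tau'}$ is handled identically: given $\Delta''$, $\Aa$, and $\Delta_2 \succeq \Delta'$ with $\wfres{\Delta''}{\Aa}{\alpha}{\Delta_2}$, transitivity again gives $\Delta_2 \succeq \Delta$, and the hypothesis applied to the same data yields $\logrel{\Delta''}{M\conc \Aa = N\conc \Aa}{\tau'}$. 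The point worth flagging is that this is not genuinely an induction on $\tau$: the function and abstraction clauses quantify over all future worlds and hand back their conclusions at exactly the world demanded, so the whole content of the lemma collapses onto the base-case weakening for algorithmic equivalence, which is the only (and itself routine) obstacle.
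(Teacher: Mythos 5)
Your proof is correct and follows essentially the same route as the paper: the paper argues ``by induction on $\tau$'' but its displayed case for $\abs{\alpha}{\tau}$ likewise uses only transitivity of $\succeq$ and the definition of the logical relation, never the induction hypothesis, while the base case rests on weakening for algorithmic equivalence, which the paper imports from the LF development exactly as you do. Your remark that the higher cases are Kripke-monotone by construction is accurate and consistent with the paper's own argument.
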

\begin{proof}
  By induction on $\tau$.  The only new case is for name-abstraction
  types $\abs{\alpha}{\tau}$.  Suppose
  $\logrel{\Delta}{M=N}{\abs{\alpha}{\tau}}$ and $\Delta' \succeq
  \Delta$.  Let $\Delta'',\Delta''',\Aa$ be given with
  $\wfres{\Delta'''}{\Aa}{\alpha}{\Delta''}$ and $\Delta'' \succeq
  \Delta'$.  Then by transitivity we have $\Delta'' \succeq \Delta$ so
  by definition of the logical relation, $\logrel{\Delta'''}{M\conc
    \Aa = N \conc \Aa}{\tau}$.  Thus, we conclude that
  $\logrel{\Delta'}{M = N}{\abs{\alpha}{\tau}}$ by the definition of
  the logical relation.
\end{proof}

\begin{lem}
  [Symmetry] If $\logrel{\Delta}{M=N}{\tau}$ then $\logrel{\Delta}{N =
    M}{\tau}$.
\end{lem}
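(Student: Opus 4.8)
The plan is to proceed by induction on the structure of the simple type $\tau$, mirroring the proof of the preceding Weakening lemma; as there, the only genuinely new case is that of the name-abstraction type $\abs{\alpha}{\tau}$, while the base and function cases follow the standard Harper--Pfenning pattern.

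At a base type $\delta \in \{\alpha, a^-\}$ the logical relation unfolds to $\algeq{\Delta}{M}{N}{\delta}$, so symmetry reduces to symmetry of algorithmic equivalence. This is one of the standard omitted properties of the algorithmic judgments carried over from Harper and Pfenning, so I would simply invoke it to obtain $\algeq{\Delta}{N}{M}{\delta}$, that is, $\logrel{\Delta}{N=M}{\delta}$.

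For the function case $\tau_1 \to \tau_2$, suppose $\logrel{\Delta}{M=N}{\tau_1\to\tau_2}$; to establish $\logrel{\Delta}{N=M}{\tau_1\to\tau_2}$, I fix $\Delta' \succeq \Delta$ and arguments with $\logrel{\Delta'}{M'=N'}{\tau_1}$. The key manoeuvre is to first apply the induction hypothesis at $\tau_1$ to get $\logrel{\Delta'}{N'=M'}{\tau_1}$, feed this swapped pair into the hypothesis on $M=N$ via the definition of the relation at $\tau_1 \to \tau_2$ to obtain $\logrel{\Delta'}{M~N'=N~M'}{\tau_2}$, and finally apply the induction hypothesis at $\tau_2$ to flip this to $\logrel{\Delta'}{N~M'=M~N'}{\tau_2}$, which is exactly what the definition of the relation requires for $N=M$.

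The name-abstraction case is the most direct: given $\logrel{\Delta}{M=N}{\abs{\alpha}{\tau}}$, I fix $\Delta'', \Aa, \Delta' \succeq \Delta$ with $\wfres{\Delta''}{\Aa}{\alpha}{\Delta'}$, read off $\logrel{\Delta''}{M\conc \Aa=N\conc \Aa}{\tau}$ from the hypothesis, apply the induction hypothesis at the smaller type $\tau$ to get $\logrel{\Delta''}{N\conc \Aa=M\conc \Aa}{\tau}$, and conclude $\logrel{\Delta}{N=M}{\abs{\alpha}{\tau}}$. I expect no real obstacle here, since the restriction premise $\wfres{\Delta''}{\Aa}{\alpha}{\Delta'}$ is threaded through unchanged: concretion is applied to both $M$ and $N$ at the same name and context, so the quantifier structure of the definition is entirely symmetric in $M$ and $N$ once the inner relation is flipped. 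The only delicate point is the base case's dependence on symmetry of algorithmic equivalence, which must already be available among the standard omitted properties of the algorithmic judgments.
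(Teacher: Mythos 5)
Your proof is correct and follows essentially the same route as the paper: induction on the structure of $\tau$, with the name-abstraction case handled exactly as in the paper by fixing $\Delta'',\Aa,\Delta'\succeq\Delta$ with $\wfres{\Delta''}{\Aa}{\alpha}{\Delta'}$, reading off $\logrel{\Delta''}{M\conc\Aa=N\conc\Aa}{\tau}$, and flipping it with the induction hypothesis. The paper shows only that case and leaves the base and arrow cases to the standard Harper--Pfenning development, which your versions (including the argument-swapping manoeuvre at $\tau_1\to\tau_2$) reproduce correctly.
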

\begin{proof}
  The proof is by induction on types; we show the case for
  $\abs{\alpha}{\tau}$.  Assume
  $\logrel{\Delta}{M=N}{\abs{\alpha}{\tau}}$, and let
  $\Delta'',\Aa,\Delta'$ be given with
  $\wfres{\Delta''}{\Aa}{\alpha}{\Delta'}$ and $\Delta' \succeq
  \Delta$. Then by definition we have $\logrel{\Delta''}{M \conc \Aa=N
    \conc \Aa}{\tau}$ and by induction we have $\logrel{\Delta''}{N
    \conc \Aa=M \conc \Aa}{\tau}$ so we may conclude that
  $\logrel{\Delta}{N =M}{\tau}$.
\end{proof}

\begin{lem}
  [Transitivity] If $\logrel{\Delta}{M=N}{\tau}$ and
  $\logrel{\Delta}{N=O}{\tau}$ then $\logrel{\Delta}{M=O}{\tau}$.
\end{lem}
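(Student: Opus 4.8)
The plan is to proceed by induction on the structure of the simple type $\tau$, mirroring the preceding symmetry argument. There are three cases, and the only genuinely new one — name-abstraction types — turns out to be the easiest.

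For the base cases $\tau = \delta \in \{\alpha, a^-\}$, the logical relation unfolds directly to algorithmic equivalence, so the goal $\logrel{\Delta}{M=O}{\delta}$ reduces to showing that $\algeq{\Delta}{M}{O}{\delta}$ follows from $\algeq{\Delta}{M}{N}{\delta}$ and $\algeq{\Delta}{N}{O}{\delta}$. This is just transitivity of algorithmic equivalence, which, like symmetry, is one of the standard properties of algorithmic equivalence inherited from~\cite{harper05tocl,urban11tocl} and assumed here.

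For the function case $\tau = \tau_1 \to \tau_2$, I assume $\logrel{\Delta}{M=N}{\tau_1\to\tau_2}$ and $\logrel{\Delta}{N=O}{\tau_1\to\tau_2}$, fix $\Delta' \succeq \Delta$, and suppose $\logrel{\Delta'}{M'=N'}{\tau_1}$; the goal is $\logrel{\Delta'}{M~M' = O~N'}{\tau_2}$. The first hypothesis applied to $M'=N'$ gives $\logrel{\Delta'}{M~M' = N~N'}{\tau_2}$ directly. To exploit the second hypothesis at the \emph{same} argument, I first manufacture a self-related instance at $\tau_1$: applying the symmetry lemma and then the inductive hypothesis (transitivity at the smaller type $\tau_1$) to $\logrel{\Delta'}{M'=N'}{\tau_1}$ yields $\logrel{\Delta'}{N'=N'}{\tau_1}$. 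Feeding this into the second hypothesis gives $\logrel{\Delta'}{N~N' = O~N'}{\tau_2}$, and a final appeal to the inductive hypothesis at $\tau_2$ composes the two to obtain $\logrel{\Delta'}{M~M' = O~N'}{\tau_2}$, as required.

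For the name-abstraction case $\tau = \abs{\alpha}{\tau'}$, I assume $\logrel{\Delta}{M=N}{\abs{\alpha}{\tau'}}$ and $\logrel{\Delta}{N=O}{\abs{\alpha}{\tau'}}$, and let $\Delta'', \Aa, \Delta'$ be given with $\wfres{\Delta''}{\Aa}{\alpha}{\Delta'}$ and $\Delta' \succeq \Delta$. Instantiating both hypotheses at this same choice of $\Delta'', \Aa, \Delta'$ yields $\logrel{\Delta''}{M\conc\Aa = N\conc\Aa}{\tau'}$ and $\logrel{\Delta''}{N\conc\Aa = O\conc\Aa}{\tau'}$, whence the inductive hypothesis (transitivity at $\tau'$) gives $\logrel{\Delta''}{M\conc\Aa = O\conc\Aa}{\tau'}$, which is exactly what the definition of the logical relation at $\abs{\alpha}{\tau'}$ demands. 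Note that, unlike the function case, no symmetry or reflexivity step is needed here, since the concretions on both sides share the same fresh name $\Aa$ and the restriction premise $\wfres{\Delta''}{\Aa}{\alpha}{\Delta'}$ is simply carried along unchanged. The only step that demands genuine care is therefore the function case, where the self-related instance at $\tau_1$ must be synthesized from symmetry together with the inductive hypothesis — a completely standard maneuver in logical-relations transitivity proofs — so I expect no substantive obstacle.
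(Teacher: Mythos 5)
Your proposal is correct and takes essentially the same approach as the paper: the paper proves the lemma by induction on $\tau$ and explicitly presents only the name-abstraction case, which matches your argument exactly, while the base and function cases (including the symmetry-plus-induction maneuver to obtain $\logrel{\Delta'}{N'=N'}{\tau_1}$) are the standard ones inherited from Harper and Pfenning's development.
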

\begin{proof}
  The proof is by induction on types; we show the case for
  $\abs{\alpha}{\tau}$.  Suppose
  $\logrel{\Delta}{M=N}{\abs{\alpha}{\tau}}$ and
  $\logrel{\Delta}{N=O}{\abs{\alpha}{\tau}}$, and let
  $\Delta'',\Aa,\Delta'$ be given with
  $\wfres{\Delta''}{\Aa}{\alpha}{\Delta'}$ and $\Delta' \succeq
  \Delta$.  Then by definition we have both $\logrel{\Delta''}{M \conc
    \Aa=N \conc \Aa}{\tau}$ and $\logrel{\Delta''}{N \conc \Aa=O \conc
    \Aa}{\tau}$ and by induction we have $\logrel{\Delta''}{M \conc
    \Aa=O \conc \Aa}{\tau}$, so we may conclude that
  $\logrel{\Delta}{M = O}{\tau}$.
\end{proof}

\begin{lem}[Closure under head expansion]\labelLem{closure-whr}
  If $M \whr M'$ and $\logrel{\Delta}{M'=N}{\tau}$ then
  $\logrel{\Delta}{M=N}{\tau}$.
\end{lem}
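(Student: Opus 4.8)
The plan is to prove the statement by induction on the structure of the simple type $\tau$, following exactly the three clauses that define the logical relation in \refFig{logrel}. The common engine at every step is the observation that weak head reduction is a congruence for the elimination forms: if $M \whr M'$, then $M~P \whr M'~P$ and $M \conc \Aa \whr M' \conc \Aa$ by the congruence rules of \refFig{whr}. Thus a head expansion of $M$ can always be pushed underneath an application or a concretion, which is precisely what the inductive clauses of the logical relation demand.

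For the base case $\tau = \delta$ with $\delta \in \{\alpha,a^-\}$, the hypothesis $\logrel{\Delta}{M'=N}{\delta}$ unfolds to $\algeq{\Delta}{M'}{N}{\delta}$ and the goal $\logrel{\Delta}{M=N}{\delta}$ unfolds to $\algeq{\Delta}{M}{N}{\delta}$; this follows immediately from the weak-head-reduction rule of the algorithmic equivalence judgment (the first rule of \refFig{alg-equiv}) applied to $M \whr M'$, a rule that covers name types and atomic type constants alike. For the function case $\tau = \tau_1 \to \tau_2$, I would unfold the definition: given $\Delta' \succeq \Delta$ and $\logrel{\Delta'}{P=Q}{\tau_1}$, I must show $\logrel{\Delta'}{M~P = N~Q}{\tau_2}$. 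The hypothesis supplies $\logrel{\Delta'}{M'~P = N~Q}{\tau_2}$, and since $M~P \whr M'~P$, the induction hypothesis at $\tau_2$ yields the goal.

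The genuinely new case is name-abstraction, $\tau = \abs{\alpha}{\tau'}$, and it goes through by the same pattern. Unfolding the goal, I am given $\Delta'',\Aa,\Delta' \succeq \Delta$ with $\wfres{\Delta''}{\Aa}{\alpha}{\Delta'}$ and must establish $\logrel{\Delta''}{M \conc \Aa = N \conc \Aa}{\tau'}$. The hypothesis supplies $\logrel{\Delta''}{M' \conc \Aa = N \conc \Aa}{\tau'}$, and the concretion congruence rule gives $M \conc \Aa \whr M' \conc \Aa$, so the induction hypothesis at $\tau'$ closes the case. I do not anticipate a real obstacle here: the argument is structurally identical to the function case, the only point worth checking being that head expansion commutes with concretion, which is exactly the content of the new rule $M \conc \Aa \whr M' \conc \Aa$. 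No freshness or restriction reasoning beyond carrying the given $\wfres{\Delta''}{\Aa}{\alpha}{\Delta'}$ along is needed, since $\Aa$ and the contexts $\Delta'',\Delta'$ are produced by the universal quantifier in the clause and are left untouched by the reduction of $M$.
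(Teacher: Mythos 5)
Your proof is correct and follows the same route as the paper: induction on the structure of $\tau$, using the congruence rules of weak head reduction ($M~P \whr M'~P$ and $M\conc\Aa \whr M'\conc\Aa$) to push the head expansion under each elimination form and apply the induction hypothesis. The paper's proof only displays the new $\abs{\alpha}{\tau}$ case, and your treatment of it matches exactly.
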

\begin{proof}
  The proof is by induction on types; we show the case for
  $\abs{\alpha}{\tau}$.  Suppose $M \whr M'$ and
  $\logrel{\Delta}{M'=N}{\abs{\alpha}{\tau}}$.  Let $\Delta'',\Aa,\Delta'$ be given
  with $\wfres{\Delta''}{\Aa}{\alpha}{\Delta'}$ and $\Delta' \succeq
  \Delta$.  Then $\logrel{\Delta''}{M' \conc \Aa = N \conc \Aa}{\tau}$
  by definition of the logical relation.  Moreover, we have that $M
  \whr M'$ implies $M \conc \Aa \whr M' \conc \Aa$.  So, by induction
  we know that $\logrel{\Delta''}{M \conc \Aa = N \conc \Aa}{\tau}$,
  and we may conclude $\logrel{\Delta}{M=N}{\abs{\alpha}{\tau}}$.
\end{proof}

\begin{lem}[Identity substitution]\labelLem{identity-logrel}
  For any $\Gamma$ we have $\logrel{\Gamma^-}{\id_\Gamma =
    \id_\Gamma}{\Gamma^-}$.
\end{lem}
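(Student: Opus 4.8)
The plan is to proceed by induction on the structure of the context $\Gamma$, following the three formation rules of the substitution logical relation in \refFig{logrel}; in each case the identity substitution decomposes in lockstep with the erased context, since $\id_{\Gamma_0,x{:}A} = \id_{\Gamma_0},x/x$ and $\id_{\Gamma_0\#\Aa{:}\alpha} = \id_{\Gamma_0},\Aa/\Aa$, while $(\Gamma_0,x{:}A)^- = \Gamma_0^-,x{:}A^-$ and $(\Gamma_0\#\Aa{:}\alpha)^- = \Gamma_0^-\#\Aa{:}\alpha$. The base case $\Gamma = \cdot$ gives $\logrel{\cdot}{\cdot=\cdot}{\cdot}$ immediately from the first rule. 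For a name binding $\Gamma = \Gamma_0\#\Aa{:}\alpha$, the rule $\mathsf{res\_id}$ yields $\wfres{\Gamma^-}{\Aa}{\alpha}{\Gamma_0^-}$, and the induction hypothesis gives $\logrel{\Gamma_0^-}{\id_{\Gamma_0}=\id_{\Gamma_0}}{\Gamma_0^-}$; since the name-binding rule of the substitution logical relation relates the tails precisely at the restricted context, I can apply it directly, with no extra work. This case simply mirrors the shape of $\mathsf{res\_id}$, so no freshness reasoning beyond it is needed.

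The variable case $\Gamma = \Gamma_0,x{:}A$ is the only one requiring genuine content. Its rule demands both $\logrel{\Gamma^-}{x=x}{A^-}$ and $\logrel{\Gamma^-}{\id_{\Gamma_0}=\id_{\Gamma_0}}{\Gamma_0^-}$ at the \emph{same} context $\Gamma^-$. For the tail, the induction hypothesis only delivers the relation at $\Gamma_0^-$, so I first lift it along $\Gamma^- \succeq \Gamma_0^-$ using a weakening property for the substitution logical relation; this is proved exactly like the term-level weakening lemma stated above, by induction on the derivation, invoking term weakening on each term component and the simple-context analogue of restriction weakening (as in \refLem{weakening}) on each name component. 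For the head, I use the structural-equivalence derivation $\streq{\Gamma^-}{x}{x}{A^-}$, valid because $x{:}A^- \in \Gamma^-$, together with the standard reflection lemma that structural equivalence implies the logical relation --- one of the Harper--Pfenning facts omitted here~\cite{harper05tocl} --- to conclude $\logrel{\Gamma^-}{x=x}{A^-}$. Combining the two premises via the variable-binding rule yields $\logrel{\Gamma^-}{\id_\Gamma = \id_\Gamma}{\Gamma^-}$.

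The step I expect to be the crux is the appeal to reflection in the variable case: the empty and name cases are pure bookkeeping driven by the rule shapes, but establishing that a variable is logically related to itself at higher simple types (function and name-abstraction types) genuinely requires the escape direction of the logical relation rather than any syntactic manipulation. The only \DNTT-specific point to verify is that the name-abstraction clause of reflection threads the bound name through the restriction judgment correctly; but this is handled in the same way as in \refLem{subst-restrict} and the restriction-based lemmas already proved, so no difficulty beyond the standard argument arises.
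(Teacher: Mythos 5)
Your proof is correct and follows essentially the same route as the paper: induction on the structure of $\Gamma$, with the name case discharged by $\mathsf{res\_id}$ plus the name-binding rule of the substitution logical relation, and the variable case (which the paper dismisses as ``standard'') handled by weakening the induction hypothesis and reflecting $\streq{\Gamma^-}{x}{x}{A^-}$ into the logical relation via part (2) of \refThm{log-imp-alg}. The only cosmetic difference is that in the name case the paper first weakens the tail to $\Gamma_0^-\#\Aa{:}\alpha$ before applying the rule, whereas you apply the rule directly with the tail at the restricted context $\Gamma_0^-$, which is exactly what the rule in \refFig{logrel} asks for.
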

\begin{proof}
  Induction on the structure of $\Gamma$.  The base case and variable
  case are standard.  Suppose $\Gamma = \Gamma_0\#\Aa{:}\alpha$.
  Then by induction, $\logrel{\Gamma_0^-}{\id_{\Gamma_0} =
    \id_{\Gamma_0}}{\Gamma_0^-}$.  By weakening, we know that
  $\logrel{\Gamma_0^-\#\Aa{:}\alpha}{\id_{\Gamma_0} =
    \id_{\Gamma_0}}{\Gamma_0^-}$ holds.  Moreover,
  $\wfres{\Gamma_0^-\#\Aa{:}\alpha }{\Aa}{\alpha}{\Gamma_0^-}$ is
  derivable.  Hence, we may conclude:
\[
\infer{\logrel{\Gamma_0^-\#\Aa{:}\alpha}{\id_{\Gamma_0},\Aa/\Aa =
    \id_{\Gamma_0},\Aa/\Aa}{\Gamma_0^-\#\Aa{:}\alpha}}{
\hyp{\wfres{\Gamma_0^-\#\Aa{:}\alpha }{\Aa}{\alpha}{\Gamma_0^-}}
&
\logrel{\Gamma_0^-\#\Aa{:}\alpha}{\id_{\Gamma_0} =
    \id_{\Gamma_0}}{\Gamma_0^-}}
\]
This concludes the proof.
\end{proof}

We now state the main properties relating definitional and algorithmic
equality and the logical relation.  
\begin{thm}[Logical implies algorithmic]\labelThm{log-imp-alg}\hfill
  \begin{enumerate}[\em(1)]
  \item 
    If $\logrel{\Delta}{M=N}{\tau}$ then $\algeq{\Delta}{M}{N}{\tau}$.
    \item If $\streq{\Delta}{M}{N}{\tau}$ then
    $\logrel{\Delta}{M=N}{\tau}$.
  \end{enumerate}
\end{thm}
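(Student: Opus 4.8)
The plan is to prove both statements simultaneously by induction on the structure of the simple type $\tau$, following Harper and Pfenning's completeness argument for LF and adding the single new case for name-abstraction types $\abs{\alpha}{\tau'}$. The two parts must be proved together: at the function and name-abstraction cases, the proof of (1) appeals to (2) at the argument type (to supply a related pair of arguments by reflexivity at a variable or name), whereas the proof of (2) appeals to (1) at the argument type (to discharge the algorithmic premise of a structural rule). For the base cases $\tau \in \{\alpha, a^-\}$ both directions are immediate from the definitions in \refFig{logrel} and \refFig{alg-equiv}: at atomic type the logical relation is \emph{defined} to be algorithmic equivalence, which gives (1) directly, and for (2) the rule embedding structural into algorithmic equivalence turns $\streq{\Delta}{M}{N}{\tau}$ into $\algeq{\Delta}{M}{N}{\tau}$, which then unfolds to the logical relation.

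For the function-type case $\tau = \tau_1 \to \tau_2$ I would run the standard argument. For (1), I extend the context to $\Delta,x{:}\tau_1$, derive $\streq{\Delta,x{:}\tau_1}{x}{x}{\tau_1}$ by the structural variable rule, obtain $\logrel{\Delta,x{:}\tau_1}{x=x}{\tau_1}$ from the inductive hypothesis (2) at $\tau_1$, unfold the logical relation to get $\logrel{\Delta,x{:}\tau_1}{M~x=N~x}{\tau_2}$, apply the inductive hypothesis (1) at $\tau_2$, and conclude with the algorithmic rule for function types. For (2), given $\Delta' \succeq \Delta$ and $\logrel{\Delta'}{M'=N'}{\tau_1}$, I use the inductive hypothesis (1) at $\tau_1$ to produce the algorithmic premise $\algeq{\Delta'}{M'}{N'}{\tau_1}$, weaken the structural derivation of $M,N$ from $\Delta$ to $\Delta'$, apply the structural application rule, and finish with the inductive hypothesis (2) at $\tau_2$.

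The genuinely new work is the name-abstraction case $\tau = \abs{\alpha}{\tau'}$, where the quantification in the logical relation ranges over triples $\Delta'',\Aa,\Delta' \succeq \Delta$ related by $\wfres{\Delta''}{\Aa}{\alpha}{\Delta'}$ rather than over a plain fresh extension. For (1), I would instantiate this quantifier with the concrete witness $\Delta'' = \Delta\#\Aa{:}\alpha$ and $\Delta' = \Delta$ for a fresh $\Aa$, using $\mathsf{res\_id}$ to derive $\wfres{\Delta\#\Aa{:}\alpha}{\Aa}{\alpha}{\Delta}$; this yields $\logrel{\Delta\#\Aa{:}\alpha}{M\conc\Aa = N\conc\Aa}{\tau'}$, and the inductive hypothesis (1) at $\tau'$ followed by the algorithmic rule for name-abstraction types gives the result. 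For (2), given an arbitrary such triple, I weaken $\streq{\Delta}{M}{N}{\abs{\alpha}{\tau'}}$ to $\Delta' \succeq \Delta$, apply the structural concretion rule with the supplied restriction $\wfres{\Delta''}{\Aa}{\alpha}{\Delta'}$ to obtain $\streq{\Delta''}{M\conc\Aa}{N\conc\Aa}{\tau'}$, and conclude by the inductive hypothesis (2) at $\tau'$. I expect the main obstacle to be precisely the mismatch between this universally-quantified restriction premise and the specific fresh context demanded by the algorithmic rule: direction (1) requires choosing the correct witness to instantiate the quantifier, while direction (2) relies on the weakening property for structural equivalence along $\Delta' \succeq \Delta$, one of the standard (omitted) properties of algorithmic equivalence, together with determinacy of restriction to keep the bunched-context bookkeeping consistent.
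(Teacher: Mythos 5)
Your proof is correct and follows essentially the same route as the paper: a simultaneous induction on $\tau$, where for part (1) at $\abs{\alpha}{\tau'}$ you instantiate the logical relation's quantifier with the witness $\Delta\#\Aa{:}\alpha$, $\wfres{\Delta\#\Aa{:}\alpha}{\Aa}{\alpha}{\Delta}$, $\Delta\succeq\Delta$ for a fresh $\Aa$, and for part (2) you weaken the structural derivation to $\Delta'$ and apply the structural concretion rule with the supplied restriction. The only difference is that you spell out the standard base and function-type cases that the paper omits; the invocation of determinacy of restriction at the end is unnecessary but harmless.
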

\proof
  By simultaneous induction on $\tau$.  The new cases are those for
  $\tau = \abs{\alpha}{\tau_0}$.
  \begin{enumerate}[(1)]
  \item  Suppose
    $\logrel{\Delta}{M=N}{\abs{\alpha}{\tau}}$.  Then we wish to show
    that $\algeq{\Delta}{M}{N}{\abs{\alpha}{\tau}}$.  Choose a fresh
    name $\Aa$ not present in $\Delta$.  Then we can immediately derive
    $\wfres{\Delta\#\Aa{:}\alpha}{\Aa}{\alpha}{\Delta}$, and obviously
    $\Delta \succeq \Delta$, so by definition of the logical relation,
    $\logrel{\Delta\#\Aa{:}\alpha}{M \conc \Aa=N \conc \Aa}{\tau}$.  By
    induction, we have $\algeq{\Delta\#\Aa{:}\alpha}{M \conc \Aa}{N
      \conc \Aa}{\tau}$, so we may conclude:
    \[
    \infer{\algeq{\Delta}{M}{N}{\abs{\alpha}{\tau}}}{
      \hyp{\wfres{\Delta\#\Aa{:}\alpha}{\Aa}{\alpha}{\Delta}}
      & \algeq{\Delta\#\Aa{:}\alpha}{M \conc \Aa}{N\conc \Aa}{\tau}}
    \]
  \item  Suppose $\streq{\Delta}{M}{N}{\abs{\alpha}{\tau}}$.  Let
    $\Delta',\Aa,\Delta''$ be given with
    $\wfres{\Delta''}{\Aa}{\alpha}{\Delta'}$ and $\Delta' \succeq
    \Delta$.  Then we may derive:
    \[
    \infer{\streq{\Delta''}{M \conc \Aa}{N \conc \Aa}{\tau}}{
      \wfres{\Delta''}{\Aa}{\alpha}{\Delta'} 
      &
      \infer[W]{\streq{\Delta'}{M}{N}{\abs{\alpha}{\tau}}}{
        \streq{\Delta}{M}{N}{\abs{\alpha}{\tau}}
      }
    }
    \]
    where the step labeled $W$ is by weakening using $\Delta \preceq
    \Delta'$.  Hence, the induction hypothesis applies and we have
    $\logrel{\Delta''}{M \conc \Aa = N \conc \Aa}{\tau}$, so we may
    conclude by definition that $\logrel{\Delta}{M = N}{\abs{\alpha}{\tau}}$.
  \end{enumerate}
This completes the proof. \qed

\begin{thm}[Definitional implies logical]\labelThm{def-imp-log}
  If $\wf{\Gamma}{M=N}{A}$ and
  $\logrel{\Delta}{\theta=\sigma}{\Gamma^-}$ then
  $\logrel{\Delta}{M[\theta]=N[\sigma]}{A^-}$.
\end{thm}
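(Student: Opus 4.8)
The plan is to proceed by induction on the derivation of $\wf{\Gamma}{M=N}{A}$, reading off the erased type $A^-$ at each rule and leaning on the structural properties of the logical relation already established (symmetry, transitivity, weakening, and \refLem{closure-whr}). Since the logical relation and this theorem are stated only for objects, the type-level congruences $\mathsf{eq\_new\_t}$ and $\mathsf{eq\_ext\_t}$ never appear directly in the induction; they enter only through the object-level type-conversion rule, which I would dispatch by the standard fact that definitionally equal types have equal erasure, so the induction hypothesis at $A^-$ already gives the goal. For the remaining inherited LF rules I would follow Harper and Pfenning verbatim: the variable rule $\wf{\Gamma}{x=x}{A}$ holds because $x[\theta]=\theta(x)$, $x[\sigma]=\sigma(x)$, and relatedness of these is exactly what $\logrel{\Delta}{\theta=\sigma}{\Gamma^-}$ records; the constant rule reduces to part~(2) of \refThm{log-imp-alg} applied to the structural axiom for constants; the $\Pi$/$\lambda$/application congruences and the ordinary $\beta\eta$ rules use the arrow clause of the relation together with \refLem{closure-whr}; and symmetry and transitivity of $=$ are discharged by the corresponding lemmas for the logical relation. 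It then remains to treat the five genuinely new rules of \refFig{defeq-rules}.

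Two of the new cases are comparatively direct. For $\mathsf{eq\_nm}$, namely $\wf{\Gamma}{\Aa=\Aa}{\alpha}$ with $\Aa{:}\alpha\in\Gamma$, I would apply \refLem{subst-restrict} (legitimate since $\Aa$ is restrictable in $\Gamma$) to obtain $\theta(\Aa)=\sigma(\Aa)$ and to exhibit $\theta(\Aa)$ as a name bound in $\Delta$; the structural rule for names then yields $\streq{\Delta}{\theta(\Aa)}{\theta(\Aa)}{\alpha}$, and \refThm{log-imp-alg}(2) promotes this to $\logrel{\Delta}{\theta(\Aa)=\sigma(\Aa)}{\alpha}$, which is the goal since $\alpha^-=\alpha$. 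For the extensionality rule $\mathsf{eq\_nm\_eta}$, and symmetrically the abstraction congruence $\mathsf{eq\_abs}$, the goal is relatedness at $(\NEW{\Aa}{\alpha}{A})^-=\abs{\alpha}{A^-}$, so I unfold the name-abstraction clause of the logical relation: given $\Delta''$, a name $\Ab$, and $\Delta'\succeq\Delta$ with $\wfres{\Delta''}{\Ab}{\alpha}{\Delta'}$, I must relate the concretions at $\Ab$. I extend the substitutions to $\theta,\Ab/\Aa$ and $\sigma,\Ab/\Aa$, which are related at $\Gamma^-\#\Aa{:}\alpha$ by the name-binding clause using $\wfres{\Delta''}{\Ab}{\alpha}{\Delta'}$ and weakening of the substitution relation along $\Delta'\succeq\Delta$, and then apply the induction hypothesis to the premise $\wf{\Gamma\#\Aa{:}\alpha}{M\conc\Aa=N\conc\Aa}{A}$. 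The identity $(M\conc\Aa)[\theta,\Ab/\Aa]=M[\theta]\conc\Ab$ (valid because $\Aa$ is fresh for $M$) then matches the result to the goal; in the $\mathsf{eq\_abs}$ case one additionally passes through \refLem{closure-whr}, since $(\ABS{\Aa}{\alpha}{M[\theta]})\conc\Ab\whr M[\theta][\Ab/\Aa]$.

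The real work is in the concretion rules $\mathsf{eq\_conc}$ and $\mathsf{eq\_nm\_beta}$, where the context is restricted, and here the engine is again \refLem{subst-restrict}: from $\wfres{\Gamma}{\Ab}{\alpha}{\Gamma'}$ and $\logrel{\Delta}{\theta=\sigma}{\Gamma^-}$ it produces $\theta(\Ab)=\sigma(\Ab)$, a restricted context $\Delta_0$ with $\wfres{\Delta}{\theta(\Ab)}{\alpha}{\Delta_0}$, and the related restricted substitutions $\logrel{\Delta_0}{\theta-\Ab=\sigma-\Ab}{(\Gamma')^-}$. For $\mathsf{eq\_conc}$ I would apply the induction hypothesis to $\wf{\Gamma'}{M=N}{\NEW{\Aa}{\alpha}{A}}$ with these restricted substitutions, obtaining relatedness at $\abs{\alpha}{A^-}$, and then instantiate the name-abstraction clause at the single name $\theta(\Ab)$ via $\wfres{\Delta}{\theta(\Ab)}{\alpha}{\Delta_0}$; that $\theta(\Ab)=\sigma(\Ab)$ is exactly what lets both sides be concreted at the \emph{same} name, as the clause requires. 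For $\mathsf{eq\_nm\_beta}$ I would instead extend $\theta-\Ab$ and $\sigma-\Ab$ by the binding $\theta(\Ab)/\Aa$ to get substitutions related at $(\Gamma')^-\#\Aa{:}\alpha$, apply the induction hypothesis to $\wf{\Gamma'\#\Aa{:}\alpha}{M=N}{A}$, and close under head reduction with \refLem{closure-whr}, using $(\ABS{\Aa}{\alpha}{M[\theta]})\conc\theta(\Ab)\whr M[\theta][\theta(\Ab)/\Aa]$.

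I expect the main obstacle to be the bookkeeping in these last two cases: matching the weak-head reduct of the instantiated left-hand side against the term produced by the restricted-then-re-extended substitution, and checking the substitution-composition and erasure identities $M[\theta-\Ab]=M[\theta]$, $N[\Ab/\Aa][\sigma]=N[(\sigma-\Ab),\theta(\Ab)/\Aa]$, and $(A[\Ab/\Aa])^-=A^-$. These rest on the observation---already used in the proof of \refLem{general-subst}---that the material discarded by restriction cannot occur free in $M$, $N$, or $A$, together with the invariance of erasure under object- and name-level arguments. The truly delicate point is securing $\theta(\Ab)=\sigma(\Ab)$ \emph{before} invoking the name-abstraction clause, since that clause quantifies over one name and so both instantiated substitutions must agree on the concretion name; \refLem{subst-restrict} is precisely what guarantees this.
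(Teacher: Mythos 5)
Your proposal is correct and follows essentially the same route as the paper's proof: induction on the definitional-equality derivation, with \refLem{subst-restrict} driving the two concretion cases, \refLem{closure-whr} absorbing the $\beta$-reducts, and the Kripke clause for $\abs{\alpha}{\tau}$ unfolded exactly as in the $\mathsf{eq\_abs}$ and $\mathsf{eq\_nm\_eta}$ cases. The only deviations are cosmetic --- you instantiate the name-abstraction clause directly at $\Delta$ via $\wfres{\Delta}{\theta(\Ab)}{\alpha}{\Delta_0}$ where the paper instantiates at $\Delta_0\#\theta(\Ab){:}\alpha$ and then weakens, and your $\mathsf{eq\_nm}$ case is if anything more careful than the paper's, which elides the application of $\theta$ and $\sigma$.
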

\proof
  By induction on the definitional equality derivation.  We show new
  cases involving new definitional equality rules.
  \begin{iteMize}{$\bullet$}
  \item If the derivation is of the form:
    \[
    \infer[\mathsf{eq\_nm}]{\wf{\Gamma}{\Aa=\Aa}{\alpha}}{\Aa{:}\alpha
      \in \Gamma}
    \]
    then it is immediate that $\algeq{\Gamma^-}{\Aa}{\Aa}{\alpha}$ and
    hence $\logrel{\Gamma^-}{\Aa=\Aa}{\alpha}$.
  \item If the derivation is of the form:
    \[
    \infer[\mathsf{eq\_abs}]{\wf{\Gamma}{\abs{\Aa{:}\alpha}{M}=\abs{\Aa{:}\alpha}{N}}{\new
        \Aa{:}\alpha.A}}{ \wf{\Gamma\#\Aa{:}\alpha}{M=N}{A}}
    \]
    then we wish to show that $\logrel{\Delta''}{
      (\abs{\Aa{:}\alpha}{M}) [\theta]=(\abs{\Aa{:}\alpha}{N})
      [\sigma]}{\abs{\alpha}{A^-}}$.  To prove this, suppose
    $\Delta',\Delta'',\Ab$ are given with
    $\wfres{\Delta''}{\Ab}{\alpha}{\Delta'}$ and $\Delta' \succeq
    \Delta$.  Using logical relation weakening, we have that
    $\logrel{\Delta'}{\theta=\sigma}{\Gamma^-}$.  So we may derive
    \[
    \infer{\logrel{\Delta''}{\theta,\Ab/\Aa=\sigma,\Ab/\Aa}{\Gamma^-\#\Aa{:}\alpha}}
    {\wfres{\Delta''}{\Ab}{\alpha}{\Delta'} &
      \logrel{\Delta'}{\theta=\sigma}{\Gamma^-} }
    \]
    So by induction, we have $\logrel{\Delta''}{M[\theta,\Ab/\Aa] =
      N[\sigma,\Ab/\Aa]}{A^-}$.  Moreover,
    \[
    (\abs{\Aa{:}\alpha}{M})[\theta] \conc \Ab =
    (\abs{\Aa{:}\alpha}{M[\theta]}) \conc \Ab\whr M[\theta][\Ab/\Aa] =
    M[\theta,\Ab/\Aa]\;.\]
    Similarly,
    \[(\abs{\Aa{:}\alpha}{N})[\sigma] \conc \Ab =
    (\abs{\Aa{:}\alpha}{N[\sigma]}) \conc \Ab\whr N[\sigma][\Ab/\Aa] =
    N[\sigma,\Ab/\Aa]\;.\]
    Hence, using \refLem{closure-whr}, we can conclude that
    $\logrel{\Delta''}{ (\abs{\Aa{:}\alpha}{M}) [\theta] \conc
      \Ab=(\abs{\Aa{:}\alpha}{N}) [\sigma] \conc \Ab}{A^-}$.
    Moreover, since $\Delta'',\Delta',\Ab$ were arbitrary, we have
    that $\logrel{\Delta''}{ (\abs{\Aa{:}\alpha}{M})
      [\theta]=(\abs{\Aa{:}\alpha}{N}) [\sigma]}{\abs{\alpha}{A^-}}$,
    as desired.
  \item If the derivation is of the form:
    \[
    \infer[\mathsf{eq\_conc}]{\wf{\Gamma}{M\conc \Ab = N \conc
        \Ab}{A[\Ab/\Aa]}}{ \wfres{\Gamma}{\Ab}{\alpha}{\Gamma_0} &
      \wf{\Gamma_0}{M=N}{\new \Aa{:}\alpha.A}}
    \]
    then we wish to show that $\logrel{\Delta}{(M \conc \Ab)[\theta] =
      (N \conc \Ab)[\sigma]}{A^-}$ (noting that $A[\Ab/\Aa]^- = A^-$).
    By \refLem{subst-restrict}, we know that $\theta(\Ab) =
    \sigma(\Ab)$ and there must exist $\Delta_0$ such that
    $\wfres{\Delta}{\theta(\Ab)}{\alpha}{\Delta_0}$ and
    $\logrel{\Delta_0}{\theta-\Ab=\sigma-\Ab}{\Gamma_0^-}$.  Moreover,
    by induction we have that
    $\logrel{\Delta_0}{M[\theta-\Ab]=N[\sigma-\Ab]}{\abs{\alpha}{A^-}}$.
    Observe that
    $\wfres{\Delta_0\#\theta(\Ab){:}\alpha}{\theta(\Ab)}{\alpha}{\Delta_0}$
    is immediately derivable, and that $\Delta_0 \succeq \Delta_0$
    trivially holds.  Thus, by definition we have
    $\logrel{\Delta_0\#\theta(\Ab){:}\alpha}{M[\theta-\Ab] \conc
      \theta(\Ab) = N[\sigma-\Ab] \conc \theta(\Ab)}{A^-}$.  To
    conclude, we observe that $M[\theta-\Ab] \conc \theta(\Ab) = (M
    \conc \Ab) [\theta]$ and $N[\sigma-\Ab] \conc \theta(\Ab) =
    (N\conc \Ab)[\sigma]$ since $\theta(\Ab) = \sigma(\Ab)$, and in
    addition $\Delta_0\#\theta(\Ab){:}\alpha \preceq \Delta$ so by
    weakening we have $\logrel{\Delta}{(M \conc \Ab) [\theta] =
      (N\conc \Ab)[\sigma]}{A^-}$, as desired.
  \item If the derivation is of the form:
    \[
    \infer[\mathsf{eq\_nm\_beta}]{\wf{\Gamma}{(\abs{\Aa{:}\alpha}M)\conc
        \Ab = N[\Ab/\Aa]}{A[\Ab/\Aa]}}{
      \wfres{\Gamma}{\Ab}{\alpha}{\Gamma_0} &
      \wf{\Gamma_0\#\Aa{{:}}\alpha}{M=N}{A}}
    \]
    then we must show that
    $\logrel{\Delta}{((\abs{\Aa{:}\alpha}M)\conc \Ab)[\theta] =
      (N[\Ab/\Aa])[\sigma]}{A^-}$, again noting $A^- = A[\Ab/\Aa]^-$.
    Again using \refLem{subst-restrict}, we know that $\theta(\Ab) =
    \sigma(\Ab)$ and there must exist $\Delta_0$ such that
    $\wfres{\Delta}{\theta(\Ab)}{\alpha}{\Delta_0}$ and
    $\logrel{\Delta_0}{\theta-\Ab=\sigma-\Ab}{\Gamma_0^-}$.
    Moreover, we can derive
    \[
    \infer{\logrel{\Delta_0\#\theta(\Ab){:}\alpha}{(\theta-\Ab),\theta(\Ab)/\Aa
        = (\sigma-\Ab),\theta(\Ab)/\Aa}{\Gamma_0^-\#\Aa{:}\alpha}}{
      \infer{\wfres{\Delta_0\#\theta(\Ab){:}\alpha}{\theta(\Ab)}{\alpha}{\Delta_0}}{}
      & 
      \logrel{\Delta_0}{\theta-\Ab = \sigma-\Ab}{\Gamma_0^-}
    }
    \]
    and so, by induction, we have
    $\logrel{\Delta_0\#\theta(\Ab){:}\alpha}{M[\theta-\Ab,\theta(\Ab)/\Aa]
      = N[\sigma-\Ab,\theta(\Ab)/\Aa]}{A^-}$.  
    Now we observe that:
    \begin{eqnarray*}
      ((\abs{\Aa{:}\alpha}M)\conc \Ab)[\theta] 
      &=& (\abs{\Aa{:}\alpha}M)[\theta]\conc \Ab[\theta] \\
&=& (\abs{\Aa{:}\alpha}M[\theta])\conc \theta(\Ab) \\
      &\whr& M[\theta][\theta(\Ab)/\Aa]\\
&=& M[\theta-\Ab][\theta(\Ab)/\Aa]= M[\theta-\Ab,\theta(\Ab)/\Aa]
    \end{eqnarray*}
    and 
    \begin{eqnarray*}
      N[\sigma-\Ab,\theta(\Ab)/\Aa] &=& N[\sigma-\Ab][\theta(\Ab)/\Aa] \\
&=&
    N[\sigma][\sigma(\Ab)/\Aa] \\
&=& N[\Ab/\Aa][\sigma]\;.
    \end{eqnarray*}
    Hence, by \refLem{closure-whr} and weakening $
    \Delta_0\#\theta(\Ab){:}\alpha \preceq \Delta$ we can conclude
    $\logrel{\Delta}{((\abs{\Aa{:}\alpha}M)\conc \Ab)[\theta] 
      = N[\Ab/\Aa][\sigma]}{A^-}$ as desired.

  \item If the derivation is of the form:
    \[
    \infer[\mathsf{eq\_nm\_eta}]
    {\wf{\Gamma}{M=N}{\new \Aa{{:}}\alpha.A}}
    {\wf{\Gamma\#\Aa{:}\alpha}{M\conc \Aa = N\conc \Aa}{A}}
    \] 
    then we wish to show that
    $\logrel{\Delta}{M[\theta]=N[\sigma]}{\abs{\alpha}{A^-}}$.  To
    prove this, let $\Delta',\Delta'',\Ab$ be given such that
    $\wfres{\Delta''}{\Ab}{\alpha}{\Delta'}$ and $\Delta' \succeq
    \Delta$.  We may then derive:
    \[\infer{\logrel{\Delta''}{\theta,\Ab/\Aa = \sigma,\Ab/\Aa}{\Gamma^-\#\Aa{:}\alpha}}
    { \wfres{\Delta''}{\Ab}{\alpha}{\Delta'} &
      \infer[W]{\logrel{\Delta'}{\theta=\sigma}{\Gamma^-}}
      {
        \logrel{\Delta}{\theta=\sigma}{\Gamma^-}
      }
    }
    \]
    where the step labeled $W$ is by logical relation weakening.  So,
    by induction, we obtain $\logrel{\Delta''}{(M\conc
      \Aa)[\theta,\Ab/\Aa] = (N \conc \Aa)[\sigma,\Ab/\Aa]}{A^-}$.
    Moreover, we calculate $(M \conc \Aa)[\theta,\Ab/\Aa] =
    M[\theta,\Ab/\Aa] \conc \Ab = M[\theta] \conc \Ab$ since $\Aa$
    must not appear in $M$.  Similarly, $(N \conc \Aa)[\sigma,\Ab/\Aa]
    = N[\sigma]\conc\Ab$.  We thus have $\logrel{\Delta''}{M[\theta]
      \conc \Ab = N[\sigma]\conc\Ab}{A^-}$, as desired to show
    $\logrel{\Delta}{M[\theta]=N[\sigma]}{\abs{\alpha}{A^-}}$.
  \end{iteMize}
This completes the proof.\qed
 
\begin{thm}[Completeness]
  If $\wf{\Gamma}{M=N}{A}$ then $\algeq{\Gamma^-}{M}{N}{A^-}$.
\end{thm}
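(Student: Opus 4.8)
The plan is to obtain the result as an immediate corollary of three preceding results: the identity substitution lemma (\refLem{identity-logrel}), the fundamental theorem relating definitional equality to the logical relation (\refThm{def-imp-log}), and the ``escape'' direction from the logical relation to algorithmic equivalence (\refThm{log-imp-alg}(1)). The idea is to close the loop by running the logical relation at the identity substitution.

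First I would instantiate the substitution parameters of \refThm{def-imp-log} at the identity substitution on $\Gamma$. By \refLem{identity-logrel} we have $\logrel{\Gamma^-}{\id_\Gamma=\id_\Gamma}{\Gamma^-}$, so taking $\Delta = \Gamma^-$ and $\theta = \sigma = \id_\Gamma$, the hypothesis $\wf{\Gamma}{M=N}{A}$ yields $\logrel{\Gamma^-}{M[\id_\Gamma]=N[\id_\Gamma]}{A^-}$. Next I would use the bookkeeping fact that the identity substitution acts as the identity on well-formed objects, i.e. $M[\id_\Gamma] = M$ and $N[\id_\Gamma] = N$; this is a routine induction on the syntax of objects using the definition of substitution application in \refFig{substitution}, since $\id_\Gamma$ sends each variable and each name to itself. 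Hence $\logrel{\Gamma^-}{M=N}{A^-}$, and applying \refThm{log-imp-alg}(1) gives $\algeq{\Gamma^-}{M}{N}{A^-}$, as required.

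There is no genuine obstacle remaining at this stage: all of the substantive work---the Kripke logical-relations construction, closure under head expansion (\refLem{closure-whr}), logical substitution restriction (\refLem{subst-restrict}), and in particular the handling of the new name-abstraction cases in \refThm{def-imp-log} and \refThm{log-imp-alg}---has already been discharged in the preceding lemmas and theorems. The completeness theorem is thus simply the composition of these facts specialized to the identity substitution, and the only point needing care is the elementary observation that $\id_\Gamma$ leaves objects unchanged.
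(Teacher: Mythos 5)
Your proof is correct and follows exactly the paper's route: the paper's own proof is the one-line ``Immediate, combining \refLem{identity-logrel}, \refThm{def-imp-log}, and \refThm{log-imp-alg},'' which is precisely the composition you spell out. The only addition is your explicit note that $M[\id_\Gamma]=M$, which the paper leaves implicit.
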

\begin{proof}
  Immediate, combining \refLem{identity-logrel}, \refThm{def-imp-log},
  and \refThm{log-imp-alg}.
\end{proof}

\subsection{Decidability,  canonical forms and conservativity}
Once we have established that algorithmic equivalence is sound and
complete for well-formed terms, we can also extend the algorithmic
typechecking rules in Harper and Pfenning's system to handle
name-abstractions and verify that all judgments are decidable:
\begin{thm}[Decidability]
  All judgments of \DNTT are decidable.
\end{thm}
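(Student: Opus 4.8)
The plan is to follow Harper and Pfenning's strategy~\cite{harper05tocl}: reduce decidability of the declarative judgments to decidability of (i) the algorithmic and structural equivalence judgments of Figures~\ref{fig:alg-equiv}--\ref{fig:alg-equiv-kind} and (ii) a bidirectional algorithmic typechecking judgment, and then argue that each of these algorithms terminates. The Soundness and Completeness theorems proved above already establish that, on well-formed inputs, algorithmic equivalence agrees exactly with definitional equivalence; so the only thing still missing for decidability of equivalence is that the algorithm \emph{terminates}, after which definitional equivalence of well-formed terms, types, and kinds is decidable.

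First I would show that algorithmic equivalence is a terminating procedure. Weak head reduction $\whr$ is deterministic, so the only possible source of divergence is an infinite $\whr$-sequence, and this is ruled out by erasure: a \DNTT object well-formed at $A$ erases to a term of the simply-typed calculus with name-abstraction types at simple type $A^-$, and that calculus is strongly normalizing~\cite{cheney08lfmtp}, so $\whr$ terminates on well-formed inputs. The type-directed phases recurse on a structurally smaller simple type; in particular the new case $\abs{\alpha}{\tau}$ strips the name-abstraction and recurses on $\tau$ in the context $\Delta\#\Aa{:}\alpha$, exactly paralleling the function-type case. The structural phase at concretion invokes the restriction judgment $\wfres{\Delta}{\Aa}{\alpha}{\Delta'}$, which is itself decidable by structural recursion on $\Delta$ and deterministic by \refLem{res-det}, and whose accompanying structural call is on a proper subterm; hence no new source of nontermination is introduced, and algorithmic equivalence is decidable.

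Next I would define a bidirectional algorithmic typechecking judgment extending Harper and Pfenning's, adding synthesis cases for names $\Aa$, abstractions $\ABS{\Aa}{\alpha}{M}$, and concretions $M\conc \Aa$; the concretion case first computes the restricted context via $\wfres{\Gamma}{\Aa}{\alpha}{\Gamma'}$ and synthesizes a $\new$-type for $M$ against $\Gamma'$, then instantiates to read off the result. I would prove this judgment sound and complete with respect to the declarative well-formedness rules, using the decided equivalence algorithm wherever the declarative system applies a conversion rule, together with the standard validity, inversion, and $\Pi$/$\new$-injectivity properties noted above. Termination of typechecking is by structural recursion on the subject (object, type, or kind), each recursive call either descending into a subterm or invoking the now-decidable equivalence and restriction judgments; signature and context formation are then decidable by a straightforward outer recursion over their structure. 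Combining these yields decidability of all \DNTT judgments.

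The hard part will be the termination argument for algorithmic equivalence, specifically its dependence on strong normalization of the \emph{erased} simply-typed calculus including name-abstraction types, and confirming that the concretion/restriction interaction does not invalidate the termination measure. Because restriction is a decidable, terminating structural operation that only shrinks the context and is always invoked on a strictly smaller subterm, the Harper--Pfenning termination argument adapts without essential change, and the remaining cases are routine adaptations of the LF development.
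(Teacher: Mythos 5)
Your proposal is correct and follows essentially the same route as the paper, which itself only sketches this theorem by noting that Harper and Pfenning's algorithmic typechecking rules can be extended with cases for names, abstractions, and concretions, and that soundness and completeness of algorithmic equivalence then yield decidability. Your elaboration---termination of weak head reduction via the erasure to the strongly normalizing simply-typed nominal calculus, structural recursion on simple types for the type-directed phases, and decidability of the restriction judgment---fills in exactly the details the paper leaves implicit.
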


\begin{figure*}[tb]
  \centering
  \[
  \begin{array}{c}
    \infer{\atomic(a)}{}
    \quad
    \infer{\atomic(A~M)}{\atomic(A)}
    \bigskip\\
    \infer{\wfatm{\Gamma}{c}{c}{A}}{c:A\in\Sigma}
    \quad
    \infer{\wfatm{\Gamma}{x}{x}{A}}{x:A\in\Gamma}
    \quad
    \infer{\wfatm{\Gamma}{M~N}{M'~N'}{B[N'/x]}}{\wfatm{\Gamma}{M}{M'}{\PI{x}{A}B} & \wfcan{\Gamma}{N}{N'}{A}}
    \smallskip\\
    \infer{\wfatm{\Gamma}{\Aa}{\Aa}{\alpha}}{\Aa:\alpha\in\Gamma}
    \quad
    \infer{\wfatm{\Gamma}{M\conc \Aa}{N\conc \Aa}{B}}{\wfres{\Gamma}{\Aa}{\alpha}{\Gamma'} & \wfatm{\Gamma'}{M}{N}{\NEW{\Aa}{\alpha}{B}}}
    \smallskip\\
    \infer{\wfcan{\Gamma}{M}{\lambda x{:}A.N}{\PI{x}{A}{B}}}{\wfcan{\Gamma,x:A}{M~x}{N}{B}}
    \quad
    \infer{\wfcan{\Gamma}{M}{\abs{\Aa{:}\alpha}{N}}{\NEW{\Aa}{\alpha}{A}}}{\wfcan{\Gamma\#\bind{\Aa}{\alpha}}{M\conc \Aa}{N}{B}}
    \smallskip\\
    \infer{\wfcan{\Gamma}{M}{N}{A}}{\wfatm{\Gamma}{M}{N}{A} & \atomic(A)}
\quad
    \infer{\wfcan{\Gamma}{M}{N}{A}}{M \whr M' & \wfcan{\Gamma}{M'}{N}{A} & \atomic(A)}
\bigskip\\
\infer{\wfatm{\Gamma}{a}{a}{K}}{a:K \in \Sigma}
\quad
\infer{\wfatm{\Gamma}{\alpha}{\alpha}{K}}{\alpha{:}\nametype \in \Sigma}
\quad
\infer{\wfatm{\Gamma}{A~M}{A'~M'}{K[M'/x]}}{\wfatm{\Gamma}{A}{A'}{\Pi x{:}A.K} & \wfcan{\Gamma}{M}{M'}{A}}
\smallskip\\
\infer{\wfcan{\Gamma}{A}{A'}{\type}}{\wfatm{\Gamma}{A}{A'}{\type}}
\quad
\infer{\wfcan{\Gamma}{\PI{x}{A}{B}}{\PI{x}{A'}{B'}}{\type}}{\wfcan{\Gamma}{A}{A'}{\type} & \wfcan{\Gamma,x{:}A'}{B}{B'}{\type}}
\quad
\infer{\wfcan{\Gamma}{\NEW{\Aa}{\alpha}{B}}{\NEW{\Aa}{\alpha}{B'}}{\type}}{\wfcan{\Gamma\#\Aa{:}\alpha}{B}{B'}{\type}}
\bigskip\\
\infer{\wfcan{\Gamma}{\type}{\type}{\kind}}{}
\quad
\infer{\wfcan{\Gamma}{\PI{x}{A}{K}}{\PI{x}{A'}{K'}}{\kind}}{\wfcan{\Gamma}{A}{A'}{\type} & \wfcan{\Gamma,x{:}A'}{K}{K'}{\kind}}
\end{array}
\]
  \caption{Canonicalization}
  \label{fig:canonicalization}
\end{figure*}

We say that a \DNTT expression is in \emph{canonical form} if it is
$\beta$-normal and cannot be $\eta$-expanded without introducing a
$\beta$-redex.  Canonical forms of \DNTT are similar to those for LF,
but can include name-abstractions and concretions.  The following
grammar describes the syntax of canonical and atomic forms:
\begin{eqnarray*}
  M_c &::=& \lambda x{:}A_c.M_c \mid \abs{a{:}\alpha}{M_c} \mid M_a\\
M_a &::=& c \mid \Aa \mid M_a~M_c \mid M_a\conc \Aa\\
A_c &::=& a \mid \alpha \mid A_c~M_c \mid \PI{x}{A_c}{B_c} \mid
\NEW{\Aa}{\alpha}{A_c}\\
K_c &::=& \type \mid \PI{x}{A_c}{K_c}
\end{eqnarray*}
Note, however, that not all terms matching the above grammar are in
canonical or atomic form; further typing constraints are needed to
ensure full $\eta$-expansion.  We give an inference rule system for
canonicalizing object terms, which also implicitly gives the typing
constraints that canonical forms must satisfy, in
\refFig{canonicalization}.  In particular, the $atomic(-)$ predicate
is used to restrict weak head normalization and ensure only atomic
forms whose type is an atomic type $A~M_1~\cdots~M_n$ can be
considered canonical.

We will show:
\begin{thm}[Canonical forms]\labelThm{canonical}\hfill
  Assume that all the types and kinds in $\Gamma$, $\Sigma$ and $A$
  are in canonical form.  Then:
  \begin{enumerate}[\em(1)]\item 
    If $\wf{\Gamma}{M}{A}$ then there exists a canonical $P$ such that
    $\wfcan{\Gamma}{M}{P}{A}$ and $\wf{\Gamma}{M=P}{A}$.
  \item If $P'$ also satisfies $\wfcan{\Gamma}{M}{P'}{A}$, then
    $P=P'$.
  \item If $\wf{\Gamma}{M=N}{A}$ holds, then their canonical forms are
    equal.
  \end{enumerate}
\end{thm}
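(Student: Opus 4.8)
The plan is to obtain all three parts from the Soundness and Completeness theorems for algorithmic equivalence, using the observation that the canonicalization rules of \refFig{canonicalization} run in lock-step with the algorithmic equivalence rules of \refFig{alg-equiv}: a canonicalization derivation is essentially an algorithmic equivalence derivation in which the simple-type index $\tau$ has been refined to a dependent index $A$ with $A^- = \tau$, and in which the resulting normal form is recorded. Since the hypotheses ensure that every type and kind in $\Sigma$, $\Gamma$, and $A$ is already canonical, canonicalization never recurses into the classifying types; I first record that this invariant survives the three context operations the derivations use --- extension by $x{:}A_1$ (with $A_1$ a canonical component of a canonical $\PI{x}{A_1}{A_2}$), extension by a fresh $\Aa{:}\alpha$ (with $\alpha$ a name type, hence canonical), and restriction of $\Gamma$ to $\Gamma'$ along $\wfres{\Gamma}{\Aa}{\alpha}{\Gamma'}$ (which only deletes bindings).

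I would dispatch part (2) first, since determinacy is a structural property of the canonicalization judgment alone. By simultaneous induction on the derivations of $\wfcan{\Gamma}{M}{P}{A}$ and $\wfcan{\Gamma}{M}{P'}{A}$ (and their atomic analogues), the applicable rule is fixed by the shape of $A$ --- a $\PI$-type forces the $\lambda$-rule, a $\NEW$-type the abstraction rule, and an atomic type a weak-head-normalization step followed by an atomic rule --- while inside the atomic fragment the head of the (uniquely determined, since $\whr$ is a function) weak-head normal form selects the rule. The only new situation is the concretion rule $\wfatm{\Gamma}{M\conc \Aa}{N\conc \Aa}{B}$: here I invoke determinacy of restriction (\refLem{res-det}) to conclude that the intermediate context $\Gamma'$ is the same in both derivations, so that the induction hypothesis on $M$ applies and forces $P = P'$.

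For part (1), I would note that reflexivity of definitional equality gives $\wf{\Gamma}{M=M}{A}$, so Completeness yields a (finite) derivation of $\algeq{\Gamma^-}{M}{M}{A^-}$, from which I read off a canonicalization derivation by induction, refining each simple type back to the dependent type it erases: at a $\PI$- or $\NEW$-indexed step I apply the matching $\wfcan$ introduction rule, decomposing $A$ by the inversion and injectivity properties for $\Pi$ and $\new$ recorded at the end of \refSec{nlf}; at an atomic step I weak-head-normalize $M$, tracking its type with Subject reduction (\refThm{sr}), and recurse on the resulting atomic spine using the well-formedness derivation of $M$ to supply the precise argument types (the invariant $A^- = \tau$ is undisturbed because erasure forgets term-indices, so $(B[N'/x])^- = B^-$). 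The witness $\wf{\Gamma}{M=P}{A}$ is assembled alongside: each $\whr$ step is justified by \refThm{sr} and each $\eta$-expansion by the rule $\mathsf{eq\_nm\_eta}$ for $\new$ and the object-level extensionality rule for $\Pi$.

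Part (3) then follows by feeding Completeness into a mirror of this argument: from $\algeq{\Gamma^-}{M}{N}{A^-}$ I build, in lock-step, canonicalization derivations $\wfcan{\Gamma}{M}{P}{A}$ and $\wfcan{\Gamma}{N}{P'}{A}$ by induction on the algorithmic derivation, and the two outputs are forced to coincide exactly as in part (1); together with the determinacy of part (2) this shows the canonical forms are equal. The main obstacle I expect is the bookkeeping around context restriction in the concretion and $\new$-abstraction cases. In the concretion case the induction hypothesis is needed at the restricted context $\Gamma'$ given by $\wfres{\Gamma}{\Aa}{\alpha}{\Gamma'}$, and I must transport the dependently-typed conclusion back to $\Gamma$ using $\Gamma'\#\Aa{:}\alpha \preceq \Gamma$ (\refLem{restriction-weakening}) together with weakening (\refLem{weakening}); in the $\new$-abstraction case I must pick the fresh name coherently so that the reduction $(\ABS{\Aa}{\alpha}{M})\conc \Aa \whr M$ lines up with the $\eta$-law. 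These are precisely the points at which \DNTT departs from LF, but each is controlled by the determinacy and weakening lemmas already established for the restriction judgment.
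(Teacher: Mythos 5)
Your proposal is correct and follows essentially the same route as the paper: completeness of algorithmic equivalence feeds an induction over the algorithmic derivation that constructs common canonicalization derivations (the paper's ``algorithmically equivalent terms have common canonical forms'' lemma), with determinacy of canonicalization --- resting on determinacy of restriction (\refLem{res-det}) in the concretion case --- giving uniqueness, and soundness of canonicalization supplying $\wf{\Gamma}{M=P}{A}$. The only cosmetic difference is that you interleave the soundness witness with the construction rather than factoring it into a separate lemma.
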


To show the canonicalization theorem, we first show the stronger
property:
\begin{lem}[Algorithmically equivalent terms have common canonical
  forms]\hfill
Assume that all types and kinds in $\Sigma$, $\Gamma$, $A$ and $B$ are
in canonical form.  Then:
  \begin{enumerate}[\em(1)]
  \item If $\wf{\Gamma}{M}{A}$ and $\wf{\Gamma}{N}{B}$ and
    $\streq{\Gamma^-}{M}{N}{\tau}$ then $\wf{\Gamma}{A=B}{\type}$ and
    $A^-=B^-=\tau$ and there exists $P$ such that
    $\wfatm{\Gamma}{M}{P}{A}$ and $\wfatm{\Gamma}{N}{P}{A}$.
  \item 
    If $\wf{\Gamma}{M}{A}$ and $\wf{\Gamma}{N}{A}$ and
    $\algeq{\Gamma^-}{M}{N}{A^-}$ then there exists $P$ such that
    $\wfcan{\Gamma}{M}{P}{A}$ and $\wfcan{\Gamma}{N}{P}{A}$.
 \end{enumerate}
\end{lem}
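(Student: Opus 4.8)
The plan is to prove parts (1) and (2) together by a single simultaneous induction, where part (1) is indexed by the derivation of $\streq{\Gamma^-}{M}{N}{\tau}$ and part (2) by the derivation of $\algeq{\Gamma^-}{M}{N}{A^-}$, following the shape of Harper and Pfenning's argument and reading off the output $P$ from the canonicalization rules of \refFig{canonicalization}. In every case I would invert the accompanying typing derivation(s) in lockstep with the equivalence derivation, using validity, inversion, uniqueness of types and injectivity of $\Pi$ and $\new$ (all standard and already assumed for \DNTT) to force the ambient type(s) into the shape dictated by the current equivalence rule and, in part (1), to identify $A$ with $B$.

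For part (1) the base cases are the structural rules for variables, constants and names. For $\streq{\Gamma^-}{\Aa}{\Aa}{\alpha}$ I take $P = \Aa$: inversion of the two typings forces $A = B = \alpha$, hence $A^- = B^- = \alpha = \tau$, and $\wfatm{\Gamma}{\Aa}{\Aa}{\alpha}$ holds by the atomic name rule; variables and constants are analogous. In the application case I invert both typings via $\mathsf{app\_o}$, apply induction hypothesis (1) to the two heads to get a common atomic head $P_1$ of some type $\PI{x}{A_1}{A_2}$ (injectivity of $\Pi$ then aligns the argument types), apply induction hypothesis (2) to the two arguments to get a common canonical $P_2$, and take $P = P_1~P_2$; the resulting atomic type $A_2[P_2/x]$ agrees with $A$ up to the definitional equality supplied by canonicalization of the argument.

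The remaining case of part (1), concretion $\streq{\Gamma^-}{M\conc\Aa}{N\conc\Aa}{\tau}$ arising from $\wfres{\Gamma^-}{\Aa}{\alpha}{\Gamma_0^-}$ and $\streq{\Gamma_0^-}{M}{N}{\abs{\alpha}{\tau}}$, is the crux. I first lift the erased restriction to a genuine one $\wfres{\Gamma}{\Aa}{\alpha}{\Gamma_0}$ by soundness of restriction, then invert $\mathsf{conc\_o}$ on both typings to obtain restrictions $\wfres{\Gamma}{\Aa}{\alpha}{\Gamma_1}$, $\wfres{\Gamma}{\Aa}{\alpha}{\Gamma_2}$ together with $\wf{\Gamma_1}{M}{\new \Ab{:}\alpha.A_0}$ and $\wf{\Gamma_2}{N}{\new \Ab{:}\alpha.B_0}$. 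Determinacy of restriction (\refLem{res-det}) forces $\Gamma_0 = \Gamma_1 = \Gamma_2$, so induction hypothesis (1) applies to $M,N$ in the single context $\Gamma_0$, yielding a common atomic $P$ with $\wfatm{\Gamma_0}{M}{P}{\new \Ab{:}\alpha.A_0}$ and a $\new$-type equality that, by injectivity of $\new$, reconciles $A_0$ and $B_0$. The atomic concretion rule then gives $\wfatm{\Gamma}{M\conc\Aa}{P\conc\Aa}{A_0[\Aa/\Ab]}$, and $A_0[\Aa/\Ab]$ is the ambient type $A$ up to the equalities just established, so $P\conc\Aa$ is the desired common atomic form.

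For part (2), the two weak-head-reduction rules are discharged using subject reduction (\refThm{sr}): from $M \whr M'$ we get $\wf{\Gamma}{M'}{A}$, the induction hypothesis supplies a common canonical $P$ for $M'$ and $N$, and the $\whr$-canonicalization rule rebuilds $\wfcan{\Gamma}{M}{P}{A}$, whose side condition $\atomic(A)$ holds because the type at this point is a base type $a^-$. The rule passing from structural to algorithmic equivalence at base type is handled by part (1) followed by the canonical-from-atomic rule. The two extensionality rules are the substance of part (2): from $\algeq{\Gamma^-}{M}{N}{\tau_1\to\tau_2}$ I use inversion to force $A = \PI{x}{A_1}{A_2}$, invoke the induction hypothesis on $M~x, N~x$ in $\Gamma,x{:}A_1$ to obtain $P_0$, and set $P = \LAM{x}{A_1}{P_0}$; from $\algeq{\Gamma^-}{M}{N}{\abs{\alpha}{\tau}}$ I force $A = \new \Aa{:}\alpha.A_0$, invoke the hypothesis on $M\conc\Aa, N\conc\Aa$ in $\Gamma\#\Aa{:}\alpha$, and set $P = \ABS{\Aa}{\alpha}{P_0}$ via the canonical name-abstraction rule. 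I expect the concretion case of part (1) to be the main obstacle, since there the equivalence derivation, both typing derivations, and the canonicalization rule each carry their own restriction judgment and their own $\new$-type; reconciling these with \refLem{res-det}, soundness of restriction and injectivity of $\new$, while tracking $A$ and $B$ in parallel so that $M$ and $N$ canonicalize to the same $P$ at the same type, is the delicate bookkeeping that distinguishes this proof from the LF original.
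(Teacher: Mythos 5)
Your proposal is correct and follows essentially the same route as the paper: structural induction on the algorithmic/structural equivalence derivations, inverting the typing derivations in parallel and using soundness and determinacy of restriction plus injectivity of $\Pi$ and $\new$, with the name, structural-concretion, and name-abstraction extensionality cases handled exactly as in the paper's proof. The only slight difference is that in the concretion case the paper additionally invokes the hypothesis that the types are already canonical to conclude the two $\new$-type bodies are syntactically identical (not merely definitionally equal), which is what lets both terms canonicalize at literally the same type; your appeal to injectivity of $\new$ should be supplemented by that observation.
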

\begin{proof}
  By structural induction on the algorithmic derivations, using
  inversion and injectivity of products as appropriate.  For the
  ordinary cases, we need the assumption that $\Sigma,\Gamma,A,B$ are
  already canonical in order to ensure that type tags in $M,N$ are
  compatible.  We show the cases specific to \DNTT:
\begin{iteMize}{$\bullet$}
\item If the derivation is of the form
\[
\infer{\streq{\Gamma^-}{\Aa}{\Aa}{\alpha}}{\bind{\Aa}{\alpha}\in\Gamma^-}
\]
then we must have that $M = \Aa = N$ and $A = \alpha = B$ and
$\Aa{:}\alpha \in \Gamma$, so we can conclude that
$\wf{\Gamma}{\alpha=\alpha}{\type}$ and derive
\[
\infer{\wfatm{\Gamma}{\Aa}{\Aa}{\alpha}}{\Aa{:}\alpha \in \Gamma}
\quad
\infer{\wfatm{\Gamma}{\Aa}{\Aa}{\alpha}}{\Aa{:}\alpha \in \Gamma}
\]

\item If the derivation is of the form
\[
\infer{\streq{\Gamma^-}{M\conc \Aa}{N\conc \Aa}{\tau}}{\wfres{\Gamma^-}{\Aa}{\alpha}{\Delta'} & \streq{\Delta'}{M}{N}{\abs{\alpha}{\tau}}}
\]
By inversion we have $\wfres{\Gamma}{\Aa}{\alpha_1}{\Gamma_1}$ and
$\wf{\Gamma_1}{M}{\NEW{\Aa}{\alpha_1}{A_1}}$.  Similarly, we have
$\wfres{\Gamma}{\Aa}{\alpha_2}{\Gamma_2}$ and
$\wf{\Gamma_2}{N}{\NEW{\Aa}{\alpha_2}{A_2}}$.  Moreover we must have
$\alpha_1 = \alpha_2$ and $\Gamma_1 = \Gamma_2$; also, we must have
$\Gamma_0^- = \Delta'$.  So, the induction hypothesis applies and we
know that $\wf{\Gamma_1}{ \NEW{\Aa}{\alpha_1}{A_1} =
  \NEW{\Aa}{\alpha_2}{A_2}}{\type}$ and $(\NEW{\Aa}{\alpha_1}{A_1})^-
= \abs{\alpha}{\tau} = (\NEW{\Aa}{\alpha_2}{A_2})^-$, which implies
that $\alpha_1 = \alpha = \alpha_2$ and $A_1^- = \tau = A_2^-$.
In fact, since $A_1$ and $A_2$ are in canonical form already, we must
have $A_1 = A_2$.
Furthermore, by induction we also have
$\wfatm{\Gamma}{M}{P}{\NEW{\Aa}{\alpha}{A_1}}$ and
$\wfatm{\Gamma}{N}{P}{\NEW{\Aa}{\alpha}{A_1}}$.  To conclude, we may
derive:
\[
\infer{\wfatm{\Gamma}{M \conc \Aa}{P \conc \Aa}{A_1}}{\wfres{\Gamma}{\Aa}{\alpha}{\Gamma_1} & \wfatm{\Gamma_1}{M}{P}{\NEW{\Aa}{\alpha}{A_1}}}
\]\[
\infer{\wfatm{\Gamma}{N \conc \Aa}{P\conc \Aa}{A_1}}{\wfres{\Gamma}{\Aa}{\alpha}{\Gamma_1} & \wfatm{\Gamma_1}{N}{P}{\NEW{\Aa}{\alpha}{A}_1}}
\]

\item If the derivation is of the form
\[
\infer{\algeq{\Gamma^-}{M}{N}{A^-}}{\algeq{\Gamma^-\#\bind{\Aa}{\alpha}}{M\conc \Aa}{N \conc \Aa}{\tau}}
\]
then we must have that $A^- = \abs{\alpha}{\tau}$ for some $\alpha$
and $\tau$ and so $A$ must be of the form $\NEW{\Aa}{\alpha}{B}$ where
$B^- = \tau$.  Thus, we have derivation $\algeq{\Gamma^-\#\bind{\Aa}{\alpha}}{M\conc \Aa}{N \conc \Aa}{B^-}$.  Moreover, we can derive $\wf{\Gamma\#\bind{\Aa}{\alpha}}{M\conc \Aa}{B}$ and $\wf{\Gamma\#\bind{\Aa}{\alpha}}{N\conc \Aa}{B}$.  So by induction we have derivations $\wfcan{\Gamma\#\bind{\Aa}{\alpha}}{M\conc \Aa}{P}{B}$ and $\wfcan{\Gamma\#\bind{\Aa}{\alpha}}{N\conc \Aa}{P}{B}$, so we can conclude by deriving:
\[\infer{\wfcan{\Gamma}{M}{\abs{\Aa{:}\alpha}{P}}{\NEW{\Aa}{\alpha}{B}}}{\wfcan{\Gamma\#\bind{\Aa}{\alpha}}{M\conc \Aa}{P}{B}}
\qquad 
\infer{\wfcan{\Gamma}{N}{\abs{\Aa{:}\alpha}{P}}{\NEW{\Aa}{\alpha}{B}}}{\wfcan{\Gamma\#\bind{\Aa}{\alpha}}{N\conc \Aa}{P}{B}}
\]
\end{iteMize}
\end{proof}

We also can easily show that canonicalization is sound with respect to
definitional equivalence:

\begin{lem}[Soundness of canonicalization]\hfill
  \begin{enumerate}[\em(1)]\item 
    If $\wfatm{\Gamma}{M}{P}{A}$ then $\wf{\Gamma}{M=P}{A}$.
  \item
    If $\wfcan{\Gamma}{M}{P}{A}$ then $\wf{\Gamma}{M=P}{A}$.
  \end{enumerate}
\end{lem}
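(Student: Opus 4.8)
The plan is to prove both parts simultaneously by structural induction on the canonicalization derivations of \refFig{canonicalization}, using the new definitional equivalence rules of \refFig{defeq-rules} together with the standard LF congruence and extensionality rules, and appealing to subject reduction (\refThm{sr}) for the weak-head-reduction case. Part~(1) feeds into part~(2) at the two rules that pass from an atomic to a canonical judgment, while part~(2) is invoked inside part~(1) at the application rule, so the two statements must be carried together through one induction.

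For part~(1), the variable, constant and name rules ($\wfatm{\Gamma}{x}{x}{A}$, $\wfatm{\Gamma}{c}{c}{A}$, $\wfatm{\Gamma}{\Aa}{\Aa}{\alpha}$) are immediate by reflexivity of definitional equality, the last using $\mathsf{eq\_nm}$. The application rule $\wfatm{\Gamma}{M~N}{M'~N'}{B[N'/x]}$ follows by congruence: the inductive hypotheses give $\wf{\Gamma}{M=M'}{\PI{x}{A}{B}}$ and $\wf{\Gamma}{N=N'}{A}$, and the LF application-congruence rule combines them, noting that $B[N/x]$ and $B[N'/x]$ agree since $N=N'$. The interesting case is concretion, $\wfatm{\Gamma}{M\conc\Aa}{N\conc\Aa}{B}$ from $\wfres{\Gamma}{\Aa}{\alpha}{\Gamma'}$ and $\wfatm{\Gamma'}{M}{N}{\NEW{\Aa}{\alpha}{B}}$: here the inductive hypothesis yields $\wf{\Gamma'}{M=N}{\NEW{\Aa}{\alpha}{B}}$, and feeding this together with the same restriction derivation into $\mathsf{eq\_conc}$ produces exactly $\wf{\Gamma}{M\conc\Aa = N\conc\Aa}{B}$, the resulting type $B[\Aa/\Aa]$ collapsing to $B$ because $\Aa$ is fresh for $\Gamma'$.

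For part~(2), the two rules that conclude a canonical judgment from an atomic one dispatch straight to part~(1). The weak-head-reduction rule, $\wfcan{\Gamma}{M}{N}{A}$ from $M\whr M'$ and $\wfcan{\Gamma}{M'}{N}{A}$, combines the inductive hypothesis $\wf{\Gamma}{M'=N}{A}$ with $\wf{\Gamma}{M=M'}{A}$ obtained from subject reduction (\refThm{sr}), closing by transitivity. The two $\eta$-rules are the crux. For the $\lambda$-rule, concluding $\wf{\Gamma}{M=\LAM{x}{A}{N}}{\PI{x}{A}{B}}$ from the inductive hypothesis $\wf{\Gamma,x{:}A}{M~x=N}{B}$, I would apply the standard object-level extensionality rule, after rewriting $(\LAM{x}{A}{N})~x$ to $N$ by $\beta$ and chaining with the hypothesis through symmetry and transitivity. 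Symmetrically, for the $\new$-rule, concluding $\wf{\Gamma}{M=\abs{\Aa{:}\alpha}{N}}{\NEW{\Aa}{\alpha}{A}}$ from $\wf{\Gamma\#\Aa{:}\alpha}{M\conc\Aa=N}{A}$, I would apply $\mathsf{eq\_nm\_eta}$, whose premise requires $\wf{\Gamma\#\Aa{:}\alpha}{M\conc\Aa=(\abs{\Aa{:}\alpha}{N})\conc\Aa}{A}$; this follows by reducing $(\abs{\Aa{:}\alpha}{N})\conc\Aa\whr N$ (justified by $\mathsf{eq\_nm\_beta}$ together with subject reduction) and chaining with the inductive hypothesis.

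The step I expect to be the main obstacle is this $\new$-$\eta$ case. It requires matching the bound name of the reconstructed abstraction against the concretion name while respecting the restriction $\wfres{\Gamma\#\Aa{:}\alpha}{\Aa}{\alpha}{\Gamma}$ derived by $\mathsf{res\_id}$, and justifying $(\abs{\Aa{:}\alpha}{N})\conc\Aa = N$ without accidental capture, since the canonical body $N$ may itself mention $\Aa$. Handling this cleanly means routing the equality through subject reduction rather than instantiating $\mathsf{eq\_nm\_beta}$ with a captured name, and it is also the point at which the (implicit) well-formedness of $M$ is used, exactly as in the companion lemma on common canonical forms.
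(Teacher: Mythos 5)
Your proof is correct, and it supplies exactly the argument the paper leaves implicit (the lemma is stated with only the remark that it is ``easily shown''): a simultaneous induction on the canonicalization derivations, closing the atomic cases by reflexivity, congruence and $\mathsf{eq\_conc}$, the weak-head case by \refThm{sr} plus transitivity (which is indeed where the implicit well-formedness of $M$ is needed), and the two expansion rules by extensionality after a $\beta$-step. The capture worry you flag in the $\new$-$\eta$ case is in fact moot: the concretion name coincides with the bound name, so the $\beta$-step instantiates $N$ with the identity substitution $[\Aa/\Aa]$, and either $\mathsf{eq\_nm\_beta}$ (with premise $\wf{\Gamma\#\Aa{:}\alpha}{N=N}{A}$) or your subject-reduction route yields $(\abs{\Aa{:}\alpha}{N})\conc\Aa = N$ directly.
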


We also need to show that the canonicalization judgment is deterministic:
\begin{lem}[Determinism of canonicalization]\hfill
  \begin{enumerate}[\em(1)]\item 
    If $\wfatm{\Gamma}{M}{P}{A}$ and $\wfatm{\Gamma}{M}{P'}{A'}$ then
    $P=P'$ and $A = A'$.
  \item If $\wfcan{\Gamma}{M}{P}{A}$ and $\wfcan{\Gamma}{M}{P'}{A}$
    then $P=P'$.
    \end{enumerate}
\end{lem}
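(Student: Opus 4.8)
The plan is to prove both parts by a single simultaneous induction on the structure (height) of the given canonicalization derivations, case-analyzing on the last rule used and applying inversion to the second derivation. The two judgments are directed in complementary ways, and this is what makes the inversion line up: the atomic judgment $\wfatm{\Gamma}{M}{P}{A}$ is syntax-directed by the shape of $M$, whereas the canonical judgment $\wfcan{\Gamma}{M}{P}{A}$ is directed by the shape of the given type $A$. Throughout, equality of canonical forms and of synthesized types is meant syntactically, so the injectivity I need for the constructors $\Pi$ and $\new$ amounts to reading off equal subterms from syntactically equal types.

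For part (1), I would case on the head form of $M$. In the base cases ($M = c$, $x$, or $\Aa$) the synthesized type is read off from $\Sigma$ or $\Gamma$, and by the standing convention that constants, variables, and names are bound at most once, that type is unique; hence $P = P' = M$ and $A = A'$. For $M = M_0~N_0$, the induction hypothesis for part (1) on $M_0$ yields the same atomic form $M_0'$ and the same function type $\PI{x}{A_0}{B}$, so $A_0$ and $B$ agree by injectivity of $\Pi$; the induction hypothesis for part (2) on $N_0$ at the common domain type $A_0$ then forces the same argument $N_0'$, so both the result $M_0'~N_0'$ and the type $B[N_0'/x]$ coincide. For $M = M_0 \conc \Aa$, determinacy of restriction (\refLem{res-det}) together with uniqueness of the binding of $\Aa$ in $\Gamma$ forces the same $\alpha$ and the same restricted context $\Gamma'$ in both derivations, after which the induction hypothesis on $M_0$ supplies equal canonical heads and equal $\new$-types, hence equal result.

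For part (2), where both derivations share the given type $A$, I would case on $A$ and on whether $M$ is weak-head-reducible. If $A$ is a $\Pi$-type only the $\lambda$-rule can fire, if $A$ is a $\new$-type only the abstraction rule can fire, and if $A$ is atomic only the two atomic-case rules remain; these three possibilities are mutually exclusive since a $\Pi$-type and a $\new$-type are not atomic. In the $\lambda$ and abstraction cases the body type in the premise is determined by $A$, so the induction hypothesis for part (2) on the (structurally smaller) premise yields equal bodies $N$, and hence equal results $\lambda x{:}A_0.N$ or $\abs{\Aa{:}\alpha}{N}$.

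The main obstacle, and the only genuinely non-routine point, is separating the direct-atomic rule from the weak-head-reduction rule, both of which require $\atomic(A)$. I would dispatch it with two auxiliary facts: (a) weak head reduction is deterministic, which follows by a short induction since each rule in \refFig{whr} rewrites the unique head redex, so $M \whr M'$ and $M \whr M''$ imply $M' = M''$; and (b) any $M$ admitting an atomic derivation $\wfatm{\Gamma}{M}{N}{A}$ is weak-head-normal, because its head is a constant, variable, or name rather than a redex. Fact (b) makes the two atomic-case rules mutually exclusive, so both derivations use the same one: if both are direct-atomic, part (1) gives $P = P'$; if both reduce, fact (a) forces a common reduct $M'$, and the induction hypothesis for part (2) on $M'$ closes the case. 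The type-level and kind-level canonicalization judgments are handled by an entirely analogous argument, which I would omit.
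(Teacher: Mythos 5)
Your proposal is correct and follows exactly the route the paper takes, which it states in one line as ``By induction on derivations and inversion''; you have simply filled in the details, including the one genuinely delicate point (separating the direct-atomic rule from the weak-head-reduction rule via determinism of $\whr$ and the observation that atomically-derivable terms are weak-head-normal). No discrepancies.
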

\begin{proof}
  By induction on derivations and inversion.
\end{proof}

The above lemmas imply the first and second parts of the
Canonicalization Theorem.  The third part follows by inspection of the
rules for canonicalization, since if $A$ and $\Gamma$ are already in
canonical form then any types that are copied into the result of
canonicalization will also be canonical.

Moreover, we can use the canonicalization rules for types and kinds
shown in \refFig{canonicalization} to canonicalize $\Sigma$,
$\Gamma$ and $A$, so we have the following stronger result:
\begin{thm}
  If $\Sigma$ and $\Gamma$ are in canonical form and
  $\wf{\Gamma}{M}{A}$ then there exist unique canonical $A'$ and $M'$
  such that $\wf{\Gamma}{A = A'}{\type}$ and $\wf{\Gamma}{M=M'}{A'}$.
\end{thm}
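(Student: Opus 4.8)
The plan is to reduce to \refThm{canonical}, which already handles the case where $A$ is canonical, by first replacing the classifying type $A$ with a canonical type $A'$. The essential observation is that the canonicalization rules of \refFig{canonicalization}, together with the common-canonical-forms lemma, Soundness of canonicalization, and Determinism of canonicalization, are formulated and proved uniformly across objects, type families, and kinds. Consequently \refThm{canonical} holds with the same proof one level up: a well-formed type family of canonical kind, in canonical $\Sigma$ and $\Gamma$, has a canonical form, and that canonical form is unique. Since the kind $\type$ is trivially canonical, no hypothesis beyond ``$\Sigma$ and $\Gamma$ canonical'' is required in order to canonicalize a type of kind $\type$.

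First I would carry out the reduction of the type. By validity, $\wf{\Gamma}{M}{A}$ yields $\wf{\Gamma}{A}{\type}$. The type-level instance of \refThm{canonical}(1) --- available because $\Sigma$, $\Gamma$, and the classifying kind $\type$ are canonical --- then produces a canonical $A'$ with $\wfcan{\Gamma}{A}{A'}{\type}$ and $\wf{\Gamma}{A=A'}{\type}$. Object-level type conversion applied to $\wf{\Gamma}{M}{A}$ and $\wf{\Gamma}{A=A'}{\type}$ gives $\wf{\Gamma}{M}{A'}$. Since $A'$, $\Gamma$, and $\Sigma$ are now all canonical, \refThm{canonical}(1) at the object level applies and yields a canonical $M'$ with $\wfcan{\Gamma}{M}{M'}{A'}$ and $\wf{\Gamma}{M=M'}{A'}$, establishing existence.

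For uniqueness, the key fact is that a canonical expression is its own canonical form --- i.e.\ canonicalization is idempotent on canonical inputs, which is immediate by inspection of \refFig{canonicalization} --- so that any two definitionally equal canonical expressions are syntactically equal (via Completeness and the common-canonical-forms lemma, or directly by \refThm{canonical}(3)). Concretely, if $A''$ is canonical with $\wf{\Gamma}{A=A''}{\type}$, then $\wf{\Gamma}{A'=A''}{\type}$ by transitivity, whence $A'=A''$; rewriting $A''$ to $A'$, any canonical $M''$ with $\wf{\Gamma}{M=M''}{A''}$ then satisfies $\wf{\Gamma}{M'=M''}{A'}$, whence $M'=M''$.

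The main obstacle is not the combinatorics of this argument, which is a routine use of validity, object-level conversion, and \refThm{canonical}, but rather justifying that the canonicalization development lifts cleanly from objects to type families and kinds. The two points to verify are (i) that the supporting lemmas genuinely hold at the type and kind level --- this is exactly why the rules of \refFig{canonicalization} were written uniformly, and why $\Sigma$ and $\Gamma$ are required to be canonical, so that every type copied into a canonical output is itself canonical --- and (ii) the small idempotence fact underlying the uniqueness claim, namely that a canonical type or object canonicalizes to itself. Granting these, the theorem follows.
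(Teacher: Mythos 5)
Your proposal is correct and matches the paper's (very terse) argument: the paper obtains this theorem from \refThm{canonical} precisely by first using the type- and kind-level canonicalization rules of \refFig{canonicalization} to canonicalize $A$, which is the reduction you carry out. Your additional care about lifting the supporting lemmas to the type/kind level and about idempotence on canonical inputs fills in details the paper leaves implicit, but does not change the route.
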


Finally, the canonical forms theorem implies \DNTT is a conservative
extension of LF in the sense that it introduces no new derivable LF judgments.
\begin{cor}[Conservativity] 
  If $\judge{\Gamma}{\JJ}$ is an LF judgment over a valid LF signature
  $\Sigma$ and is derivable in \DNTT, then $\judge{\Gamma}{\JJ}$ is
  derivable in LF.
\end{cor}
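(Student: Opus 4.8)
The plan is to read off conservativity from the Canonical Forms theorem (\refThm{canonical}), exploiting that the canonicalization rules of \refFig{canonicalization} are type-directed. The only rules that introduce a name, a name-abstraction, or a concretion have premises mentioning either a $\new$-type in their conclusion or a name binding $\Aa{:}\alpha$ in the context (directly, or through a restriction premise $\wfres{\Gamma}{\Aa}{\alpha}{\Gamma'}$). Since a valid LF signature $\Sigma$ contains no name-type constants $\alpha{:}\nametype$, no $\new$-type is ever well-formed over $\Sigma$, and an LF context $\Gamma$ contains no name bindings. First I would isolate this as a lemma: whenever $\Sigma$, $\Gamma$, $M$, and $A$ are pure LF data, every \DNTT canonicalization derivation (at the object, type, or kind level) uses only the LF fragment of \refFig{canonicalization}, and in particular its output is again pure LF.

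The lemma is proved by induction on the canonicalization derivation, maintaining the invariant that the ambient context and the classifying type remain pure LF throughout. The only cases to rule out are the nominal ones. The canonicalization rule that produces $\abs{\Aa{:}\alpha}{N}$ demands that its conclusion be classified by a type $\NEW{\Aa}{\alpha}{A}$, which the invariant forbids; the atomic-form rules for a name $\Aa$ and for a concretion $M\conc\Aa$ both demand $\Aa{:}\alpha \in \Gamma$ (the latter via its restriction premise), which a pure LF context cannot supply. The surviving rules ($\lambda$, $\Pi$, application, and the constant and variable base cases) extend the context only with ordinary bindings $x{:}A$ for pure LF $A$ and only copy pure LF subexpressions into their premises, so the invariant is preserved and the whole derivation coincides with an LF canonicalization derivation.

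Granting the lemma, the corollary follows by relating canonical forms to derivability in each system. Applying the lemma at the type and kind levels first, we may assume $\Sigma$, $\Gamma$, and $A$ are already in LF canonical form, so that \refThm{canonical} applies. If $\JJ$ is a typing assertion $M{:}A$, then \refThm{canonical}(1) yields a canonical $P$ with $\wfcan{\Gamma}{M}{P}{A}$, which by the lemma is a pure LF canonicalization derivation; Harper and Pfenning's characterization of LF typing via canonical forms~\cite{harper05tocl} then makes $\wf{\Gamma}{M}{A}$ derivable in LF, and the cases $A{:}K$ and $K{:}\kind$ are identical using the type- and kind-level rules. If $\JJ$ is an equality assertion, I would instead use \refThm{canonical}(3): both sides have the same canonical form, which by the lemma is the LF canonical form computed by LF's own algorithm, so the two sides share a common LF canonical form and are LF-definitionally equal, again by the completeness of LF canonicalization~\cite{harper05tocl}.

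The main obstacle is the invariant-maintenance step of the lemma: one must check that no chain of canonicalization steps can smuggle a name binding into the context or a $\new$-type into a classifier when starting from pure LF data. This relies on the (suppressed) well-formedness side conditions of \refFig{canonicalization}, which guarantee that every classifying type occurring in the derivation is itself well-formed over $\Sigma$ and hence, with no name-type constants available, free of $\new$. Everything else --- that a pure LF canonicalization derivation is literally an LF derivation, and that canonical forms characterize LF derivability and equality --- is inherited from~\cite{harper05tocl} and introduces no new difficulty.
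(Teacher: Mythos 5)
Your argument is correct and follows exactly the route the paper intends: the paper offers no explicit proof beyond the remark that conservativity follows from the canonical forms theorem, and your proposal is a faithful elaboration of that one-liner (canonicalize, observe that with no name-type constants in $\Sigma$ the nominal canonicalization rules can never fire, and hand the resulting pure LF canonical-form derivation back to Harper and Pfenning). The key observation you isolate as a lemma --- that the type-directedness of \refFig{canonicalization} blocks every rule introducing a name, abstraction, or concretion when the signature, context, and classifier are pure LF --- is precisely the content the paper leaves implicit.
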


\section{Adequacy}\labelSec{adequacy}

It is a significant concern whether a given signature correctly
represents an object language we have in mind.  This property is often
referred to as \emph{adequacy} in an LF
settings~\cite{harper05tocl,cheney11jar}.  As in LF, adequacy in \DNTT
relies upon the existence of (unique) canonical forms.

In this section, we sketch an adequacy argument for a typical object
language, the untyped lambda-calculus equipped with an inequality
predicate (as shown in the introduction).  

Recall the signature given in \refFig{lambda}.  The canonical forms of
expressions of type $\expty$ in \DNTT are generated by the grammar:
\[M_0,N_0 ::= \var~\Ax\mid \app~M_0~N_0\mid
\lam~\abs{\Ax{:}\expty}{M_0}\] 
The encoding is defined on object-language terms as follows:
\[
\enc{\Ax} = \var~\Ax\qquad
\enc{t~u} = \app~\enc{t}~\enc{u}\qquad
\enc{\lambda \Ax.t} = \lam~\abs{\Ax}{\enc{t}}
\]
The main result concerning the correctness of the encoding is:
\begin{thm}[Adequacy of encoding]
  The encoding function $\enc{-}$ is injective and maps object
  language terms $t$ (having free variables $\Ax_1,\ldots,\Ax_n$) onto the
  set of canonical forms of type $\expty$ (in context
  $\Ax_1{:}\varty\#\ldots\#\Ax_n{:}\varty$).  Moreover, the encoding function commutes
  with renaming, that is, $\enc{t[\Ax/\Ay]} = \enc{t}[\Ax/\Ay]$.
\end{thm}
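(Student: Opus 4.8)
The plan is to treat this as a standard LF-style adequacy argument adapted to the nominal setting, establishing the three assertions—surjectivity onto canonical forms, injectivity, and commutation with renaming—largely independently. The encoding direction, injectivity, and renaming all proceed by structural induction on the object-language term $t$, while surjectivity proceeds by induction on the shape of canonicalization derivations. The first thing I would pin down is the characterization of canonical forms asserted just before the theorem: in a context $\Ax_1{:}\varty\#\cdots\#\Ax_n{:}\varty$ the canonical objects of type $\expty$ are exactly those generated by $M_0 ::= \var~\Ax \mid \app~M_0~N_0 \mid \lam~\abs{\Ax{:}\varty}{M_0}$. This follows by inspecting the canonicalization rules of \refFig{canonicalization}: an atomic form of type $\expty$ must be headed by one of $\var,\app,\lam$, because the only names in scope carry the atomic name-type $\varty$, and no name, variable, or constant has a $\new$-type as its head type, so no concretion $M\conc\Aa$ can arise as a canonical object here. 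This observation—that concretion-headed canonical forms are impossible in $\varty$-only contexts—is precisely what makes the grammar complete.

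For the forward direction I would show by induction on $t$ that if the free variables of $t$ are $\Ax_1,\dots,\Ax_n$ then $\wfcan{\Ax_1{:}\varty\#\cdots\#\Ax_n{:}\varty}{\enc{t}}{\enc{t}}{\expty}$, i.e.\ $\enc{t}$ is already canonical of type $\expty$; by uniqueness of canonical forms (\refThm{canonical}) it is then \emph{the} canonical form. The variable and application cases use the atomic rules for constants and application directly. The interesting case is $t = \lambda\Ax.t'$: here I extend the context with the fresh name $\Ax{:}\varty$ via the $\#$ former, apply the induction hypothesis to obtain $\enc{t'}$ canonical of type $\expty$ in the extended context, form $\abs{\Ax}{\enc{t'}}$ canonical of type $\Abs{\varty}{\expty}$ through the name-abstraction canonicalization rule, and finally apply $\lam$. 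For surjectivity I would exhibit an inverse by recursion on the grammar, sending $\var~\Ax$ to $\Ax$, $\app~M_0~N_0$ to the object-level application of the decodings of $M_0$ and $N_0$, and $\lam~\abs{\Ax}{M_0}$ to $\lambda\Ax.(\text{decode } M_0)$, then verify by a routine induction that $\enc{-}$ and this decoding are mutually inverse, so every canonical form of type $\expty$ lies in the image of $\enc{-}$.

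Injectivity and renaming I would dispatch by structural induction. For injectivity, $\enc{t} = \enc{u}$ forces the outermost constant ($\var$, $\app$, or $\lam$) to coincide, placing $t$ and $u$ in the same syntactic class; inverting and invoking the induction hypothesis yields $t = u$, with the $\lam$ case using that $\abs{\Ax}{\enc{t'}} = \abs{\Ax}{\enc{u'}}$ up to $\alpha$-renaming of the bound name forces $\enc{t'} = \enc{u'}$. For commutation with renaming, $\enc{t[\Ax/\Ay]} = \enc{t}[\Ax/\Ay]$ follows by induction, the base case being $\enc{\Ay}[\Ax/\Ay] = (\var~\Ay)[\Ax/\Ay] = \var~\Ax = \enc{\Ay[\Ax/\Ay]}$ and the $\lam$ case following because name-substitution commutes with name-abstraction under the usual capture-avoidance conventions. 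The main obstacle is not any single calculation but the careful bookkeeping of the nominal structure throughout: maintaining the $\#$-separated context in the $\lam$ case, handling bound names up to $\alpha$-equivalence consistently on both sides of the encoding, and—most importantly—justifying completeness of the canonical-form grammar by excluding concretion-headed forms, which is exactly where the restriction judgment and the absence of $\new$-typed heads in $\varty$-only contexts do the work.
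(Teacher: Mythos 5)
Your proposal is correct, and in fact the paper states this theorem without giving any proof at all, treating it as a standard LF-style adequacy argument; your write-up supplies exactly the argument the paper implicitly relies on (structural induction on $t$ for encoding, injectivity, and renaming, plus an inversion on canonicalization derivations to show the grammar of canonical forms of type $\expty$ is complete, using uniqueness of canonical forms from the canonicalization theorem). The one point worth making explicit in the $\lam$ case is that canonicalizing the argument $\abs{\Ax}{\enc{t'}}$ at type $\Abs{\varty}{\expty}$ goes through a weak-head-reduction step $(\abs{\Ax}{\enc{t'}})\conc\Ax \whr \enc{t'}$ before the induction hypothesis applies, and that the induction hypothesis must be stated for contexts containing at least (not exactly) the free variables of $t$; both are routine.
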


Furthermore, we can reason by inversion on canonical forms to
establish that the alpha-inequality judgment holds precisely for terms
whose encodings are different modulo alpha-equivalence:
\begin{thm}[Adequacy of $\neqq$]
  Suppose we have object terms $t,u$ with free variables
  $\Ax_1,\ldots,\Ax_n$.  Then $t \not\equiv_\alpha u$ if and only if
  $\wf{\Ax_1{:}\varty\#\ldots\#\Ax_n{:}\varty}{\DD}{\neqq~\enc{t}~\enc{u}}$ is
  derivable for some (canonical) $\DD$.
\end{thm}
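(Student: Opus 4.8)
The plan is to prove the two directions separately, in both cases reducing to a structural recursion on the object terms by way of the preceding Adequacy of encoding theorem, which lets me move freely between an object term and its unique canonical representation.

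For the ($\Rightarrow$) direction I would induct on the combined structure of $t$ and $u$, with a case analysis on their outermost constructors. When $t$ and $u$ have different heads among $\{\text{variable},\text{application},\text{lambda}\}$, one of the mismatch constants ($\neqqxx{v}{a}$, $\neqqxx{v}{l}$, $\neqqxx{a}{l}$, or a symmetric variant) directly inhabits $\neqq~\enc{t}~\enc{u}$. When both are variables $\Ax_i,\Ax_j$, alpha-inequivalence forces $i\neq j$, hence $\Ax_i\neq\Ax_j$ as names by the context-distinctness convention, and I obtain the derivation by concretizing $\neqqxx{v}{v}$ successively at $\Ax_i$ and $\Ax_j$ (choosing the symmetric variant when the two variables occur in the order that would make the nested concretions ill-formed). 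When both are applications I invoke the induction hypothesis on the mismatching argument pair and conclude with $\neqqxx{a}{a}_i$. When both are lambdas $\lambda\Ax.t'$ and $\lambda\Ay.u'$, write $\enc{t}=\lam~M$ and $\enc{u}=\lam~N$; I instantiate both bodies at a common fresh name $\Aa$ in the extended context $\ldots\#\Aa{:}\varty$, apply the induction hypothesis to obtain a derivation of $\neqq~(M\conc\Aa)~(N\conc\Aa)$ (using that $M\conc\Aa$ reduces to $\enc{t'}[\Aa/\Ax]=\enc{t'[\Aa/\Ax]}$ and similarly for $N$, since $\enc{-}$ commutes with renaming), and abstract it into an argument of type $\new\Aa.\neqq~(M\conc\Aa)~(N\conc\Aa)$ for $\neqqxx{l}{l}$.

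For the ($\Leftarrow$) direction I would first invoke the Canonical forms theorem to assume $\DD$ is canonical, then induct on $\DD$ using inversion. Since $\neqq~\enc{t}~\enc{u}$ is atomic, $\DD$ is an atomic form whose head is one of the $\neqq$-constants. Inverting on that head, and using injectivity of $\enc{-}$ together with the characterization of the canonical forms of type $\expty$, I read off the outermost shapes of $t$ and $u$: the mismatch constants force distinct heads; $\neqqxx{v}{v}$ forces $t,u$ to be variables concretized at distinct names; and for $\neqqxx{a}{a}_i$ and $\neqqxx{l}{l}$ the immediate subderivations are again canonical (in the $\neqqxx{l}{l}$ case, the body under the abstraction, after reducing $M\conc\Aa$ and using that $\enc{-}$ commutes with renaming, has type $\neqq~\enc{t'[\Aa/\Ax]}~\enc{u'[\Aa/\Ay]}$), so the induction hypothesis applies to the corresponding subterms and yields alpha-inequivalence of $t$ and $u$.

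The step I expect to be the main obstacle is the careful treatment of concretion and the restriction judgment in the $\neqqxx{v}{v}$ case, in both directions. The entire adequacy claim for variables rests on the fact that the freshness side-conditions compiled into concretion make the two nested concretions of $\neqqxx{v}{v}$ type-check precisely when the two names are distinct in the context; verifying this equivalence, and pinning down which symmetric variant is forced by the relative order of $\Ax_i$ and $\Ax_j$, requires a direct analysis of the two restriction derivations rather than an appeal to a general lemma. A secondary, more routine point is justifying the two standard facts about object-level alpha-equivalence used in the lambda case, namely that $\lambda\Ax.t'\not\equiv_\alpha\lambda\Ay.u'$ holds exactly when $t'[\Aa/\Ax]\not\equiv_\alpha u'[\Aa/\Ay]$ for a fresh $\Aa$, and matching this to the fresh-name instantiation provided by the $\new$-quantifier.
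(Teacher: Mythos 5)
Your proposal is correct and follows essentially the same route as the paper: the forward direction (which the paper dismisses as ``straightforward'') by structural induction on the object terms with a case split on head constructors, and the reverse direction by induction on the canonical proof term $\DD$ with inversion, the key cases being $\neqqxx{v}{v}$ (where the restriction judgment forces the two concreted names to be distinct) and $\neqqxx{l}{l}$ (where one weakens, $\beta$-converts the concretions of the bodies at a common fresh name, and applies the induction hypothesis). Your extra attention to which symmetric variant of $\neqqxx{v}{v}$ is forced by the relative order of $\Ax_i$ and $\Ax_j$ in the bunched context is a detail the paper leaves implicit, but it does not change the structure of the argument.
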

\begin{proof}
  The forward direction is straightforward.  The reverse direction is
  proved by induction on the canonical form of the proof term $\DD$.
  One key case is when $\DD$ is of the form
  $\neqqxx{v}{v}\conc\Ax_i\conc\Ax_j$.  In this case, we must have
  $\enc{t} = \var~\Ax_i$ and $\enc{u} = \var~\Ax_j$ for some $i \neq
  j$, since otherwise $\DD$ would be ill-formed.  Clearly, then $t$ must
  be $\Ax_i$ and $u$ must be $\Ax_j$ which are not
  $\alpha$-equivalent.

  Another key case is that for $\DD = \neqqxx{l}{l}~M_1~M_2~\DD' :
  \neqq~(\enc{t_1})~(\enc{t_2})$.  In this case, we know that
  $\enc{t_1} = \lam ~M_1$ and $\enc{t_2}=\lam~M_2$, so $t_1 = \lambda
  \Ax.t_1'$ and $t_2 = \lambda \Ax.t_2'$ for some $\Ax,t_1',t_2'$
  (without loss of generality we can assume the same name $\Ax$ is
  used for both and $\Ax$ is fresh for all other terms).  Hence $M_1 =
  \abs{\Ax}{\enc{t_1'}}$ and $M_2 = \abs{\Ax}{\enc{t_2'}}$ which means
  that the subderivation $\DD'$ must have type $\new
  \Ax.\neqq~((\abs{\Ax}{\enc{t_1'}})\conc
  \Ax)~((\abs{\Ax}{\enc{t_2'}})\conc \Ax)$.  By weakening the context
  to include name $\Ax:\varty$ and $\beta$-converting, we can see that
  $\DD'\conc \Ax$ must also have type $\neqq~(\enc{t_1'})~(\enc{t_2'})$.
  Moreover, $\DD' \conc \Ax$ must have a canonical form of this type,
  and so by induction we know that $t_1' \not\equiv_\alpha t_2'$.
  This implies $t_1 = \lambda \Ax.t_1' \not\equiv_\alpha \lambda
  \Ax.t_2' = t_2$.
\end{proof}

\section{Extensions and Examples}\labelSec{examples}

In previous work on a simple nominal type theory~\cite{cheney08lfmtp}
we discussed extensions such as name-comparison operations, lists,
datatypes involving name-binding, and recursion combinators for
defining functions over such datatypes.  These extensions were
motivated by a denotational interpretation of SNTT using nominal sets
(following~\cite{pitts03ic}). We will not develop a denotational
semantics of \DNTT here; however, the topos of nominal sets provides
all of the necessary structure to interpret dependent types, and it
seems clear that the extensions we consider can be justified using
Sch\"opp and Stark's semantics for a more general nominal type
theory~\cite{schoepp04csl,schoepp06phd} or using Pitts' approach to
recursion in a slightly different nominal type
theory~\cite{pitts10popl,pitts11jfp}.  

In this section we recapitulate and generalize extensions for
name-comparison, recursive function definitions and inductive
reasoning in \DNTT.  The computational extensions can easily be proved
type-sound but do not necessarily preserve the canonicalization or
decidability properties established earlier; we expect that these
extensions would be more relevant to intensional type theories where
only $\beta$-normalization results are needed.  We also discuss
applications of \DNTT as a framework for defining logics and for
encoding proof terms about languages with names and binding.

\paragraph{Name-comparison}
First, we consider a name comparison operation:
\[
  \begin{array}{l}
cond_\alpha : \Abs{\alpha}{\alpha} \to A \to (\alpha \to A) \to A\\
cond_\alpha ~(\abs{\Ax}{\Ax})~M~N \to_\beta M \qquad
cond_\alpha ~(\abs{\Ax}{\Ay})~M~N \to_\beta N~\Ay
\end{array}
\]
This takes a name-abstraction and two additional arguments
$M:A,N:\alpha\to A$.  If the abstraction is of the form
$\abs{\Ax}{\Ax}$, we return $M$, otherwise, if it is of the form
$\abs{\Ay}{\Ax}$ where $\Ax \neq \Ay$, we return $N~\Ax$.  Note that
it would make little sense to allow the type $A$ to depend on $\Ax$
since $\Ax$ may not ``escape'' in the first case.  

\paragraph{Recursion}

Now consider the standard nominal datatype encoding of the
lambda-calculus introduced in the introduction (\refFig{lambda}).
This datatype admits an obvious dependently-typed recursion principle:
\[
\begin{array}{rcl}
  \rec_{\expty}^T &:& (\Pi X{:}\varty. T~(\var~X)) \to\\
  && (\Pi M,N{:}\expty.  T~M \to T~N \to T~(\app~M~N)) \to\\
  &&(\Pi M{:}\Abs{\varty}{\expty}.  (\new \Aa{:}\varty. T~(M\conc \Aa)) \to  T~(\lam~M))\to \\
&&  \Pi M{:}\expty. T~M
\end{array}
\]
for any $T : \Pi x:\expty.\type$.  We also equip $\rec^T_{\expty}$ with the obvious rewriting rules for $\var$ and $\app$, along with
\[\rec_{\expty}^T ~ V~A~L~(\lam~F) \to_\beta L(\abs{\Aa{:}\varty}{\rec_{\expty}^T~V~A~L~(F\conc \Aa)})\]
(provided $\Aa \notin FV(V,A,L,F)$) for lambda-abstractions.

\subsection{Closure conversion}

\emph{Closure conversion} (see for example~\cite{appel92compiling}) is
an important transformation in functional language compilation.  A
function is closed if it refers only to its argument and locally
defined variables, not to variables whose scope began outside the
function.  Closure conversion translates an arbitrary expression to
one containing only closed functions.  There are many ways of doing
this, embodying different approaches to managing the environment.  We
consider a simplistic approach in which each function is translated to
a pair consisting of a closed function and an environment containing
all non-local variable values.  We define the translation of a term
$e$ that is well-formed in context $\Gamma$ and environment $env$ as
$C\SB{\Gamma \nd e}env$, where
\[\begin{array}{rcl}
  C\SB{\Gamma,x \nd x}env &=& \pi_1(env)\\ 
  C\SB{\Gamma,x \nd y}env &=& C\SB{\Gamma\nd y}\pi_2(env) \\ 
  C\SB{\Gamma \nd e_1~e_2}env &=& \mathbf{let}~z=C\SB{\Gamma \nd e_1}env\\
&&\mathbf{in}~(\pi_1(z))~\ab{C\SB{\Gamma \nd e_2}env,\pi_2(z)}\\
  C\SB{\Gamma \nd \lambda x.e}env &=& \ab{\lambda y.C\SB{\Gamma,x\nd e}y,env}
\end{array}\]
where $x \neq y$ in the second equation, $z \notin FV(\Gamma,e_1,e_2)$
in the third, and $y \not\in FV(\Gamma,x,e,e_0)$ in the fourth.  Note
that we include let-bindings here for convenience.  

\begin{figure*}[tb]
\[\begin{array}{lcl}
unit &:& exp.\\
pair &:& exp \to exp \to exp.\\
pi_1  &:& exp \to exp.\\
pi_2  &:& exp \to exp.\\
let  &:& exp \to (\Abs{id}{exp}) \to exp.\\
\\
cconv      &:&  list\ id \to exp \to exp \to exp \to \type.\\
cconv\_v1 &:&  cconv\ [G,X]\ (var\ X)\ Env\ (pi_1\ Env).\\
cconv\_v2 &:&  cconv\ [G,X]\ (var\ Y)\ Env\ E \gets neq\ X\ Y\\
&\gets& cconv\ G\ (var\ Y)\ (pi_2\ Env)\ E.\\
cconv\_a  &:&  cconv\ G\ (app\ E_1\ E_2)\ Env\ (let\ E_{11}\ (\ABS{\Az}{id}{app\ (pi_1(var(\Az)))}~ (pair\ E_{21}\ (pi_2(var(\Az))))))\\
           &\gets& cconv\ G\ E_1\ Env\ E_{11}\\
&\gets& cconv\ G\ E_2\ Env\ E_{21}.\\
cconv\_l  &:&  cconv\ G\ (lam\ F_1)\ Env\ (pair\ (lam\ F_2)\ Env)\\
&\gets& \new \Ax.\new \Ay.cconv\ [G,\Ax]\ (F_1@\Ax)\ (var\ \Ay)\ (F_2@\Ay).\\
\end{array}\]
\caption{Closure conversion translation}
\labelFig{closure-conversion}
\end{figure*}

\begin{exa}
  As a simple example, consider the closure-conversion of the
  $K$-combinator:
\[ \begin{array}{rcl}
    C\SB{\Gamma \nd \lambda x.\lambda y.x}env 
&=& \ab{\lambda x'.C\SB{\Gamma,x\nd \lambda y. x}x',env}\\
&=& \ab{\lambda x'.\ab{\lambda y'. C\SB{\Gamma,x,y\nd x}y',x'},env}\\
&=& \ab{\lambda x'.\ab{\lambda y'. C\SB{\Gamma,x\nd x}(\pi_2 (y')),x'},env}\\
&=& \ab{\lambda x'.\ab{\lambda y'. \pi_1(\pi_2 (y')),x'},env}
\end{array}\]
\end{exa}

Closure conversion seems like a natural candidate for encoding in a
logical framework, because it seems to involve only syntactic
manipulation of ordinary $\lambda$-terms.  For example,
Hannan~\cite{hannan95tpa} studied closure conversion algorithms
encoded in LF.  However, there are some subtle issues which seem to
complicate formalizing closure conversion in LF.  First, if we take
$lam : (exp \to exp) \to exp$, there is no explicit case for
variables.  This can be fixed by making sure to add a local hypothesis
$is\_var(x)$ for each $\lambda$-term variable $x$ as $x$ is added to
the context.  This approach is commonly taken in LF
developments~\cite{crary08lfmtp}, and is believed correct as long as
there is no way to construct a term of type $is\_var(M)$ where $M$ is
not a variable.  Alternatively, we could adopt a weaker encoding in
which $lam : (var \to exp) \to exp$, thus foregoing the benefits of
built-in capture-avoiding substitution.

Second, however, in LF we cannot directly test variables for equality.
Hannan~\cite{hannan95tpa} neither presented a concrete LF encoding nor
discussed how to overcome these obstacles.  Using Crary's
technique~\cite{crary08lfmtp}, we can test inequality among variables
by tagging variables with distinct numerical tags, but this requires
modifying all predicates in which inequality testing might be needed
(see the discussion in the next section).

In \DNTT, we can define closure conversion directly as a relation, as
shown in \refFig{closure-conversion}.  We use a definable type of
lists of identifiers $list\_id$, and define syntax for pairing,
projection, and $\mathbf{let}$.  The variable inequality
side-condition on the case for different variables $x,y$ is handled
using $neq$.  The rest of the translation is straightforward.

\subsection{Dynamic logic}

\begin{figure*}[tb]
\[\small\begin{array}{lcl}
  pf      &:& list\_o \to o \to \type.\\
  assignI &:& pf\ (G\conc \Ax)\ (box\ (\Ax := T\conc \Ax)\ (P\conc
  \Ax))\gets  (\new \Ay{:}v.pf\ [G\conc \Ax,var\ \Ay = T\conc \Ax]\ (P\conc \Ay)).\\
  assignE &:& pf\ (G\conc\Ax)\ (Q\conc\Ax)
\\ &\gets&
  pf\ (G\conc\Ax) \ (box\ (\Ax := T\conc \Ax)\ (P\conc \Ax))\gets(\new \Ay{:}v.pf\ [G,var\ \Ay = T\conc \Ax,P\conc\Ay]\
  (Q\conc \Ax)) .
\end{array}\]
\caption{Representative inference rules of dynamic logic}\labelFig{dynamic-logic}
\end{figure*}
 
Dynamic logic (DL)~\cite{harel00} is a generalization of program
logics such as Hoare logic.  In DL, besides ordinary propositional
connectives and quantifiers, there is a syntactic class of
\emph{programs} $\alpha$, and a modal connective $[\alpha]\phi$.  Such
a formula has the intended interpretation, ``After any terminating
execution of program $\alpha$, $\phi$
necessarily holds''.  Programs can in general be
nondeterministic or nonterminating, so $[\alpha]\phi$ is trivially
true if $\alpha$ diverges; on the other hand, $[\alpha]\phi$ does not
hold if there is a possible terminating execution of $\alpha$ in a
state not satisfying $\phi$.  Thus, a DL formula $\phi \impp
[\alpha]\psi$ has the same meaning as a Hoare logic partial
correctness assertion $\{\phi\}\alpha\{\psi\}$.

An important, but counterintuitive, aspect of dynamic logic is that
variables are used both for quantification and as assignment targets
in programs.  As a result, it does not make sense to substitute an
expression for a variable name $x$ everywhere in its scope, because it
might occur on the left-hand side of an assignment, and it would not
make sense to substitute an expression there.  For example, $\forall x.
[x := 0] (x = 0)$ is a well-formed (and valid) formula of DL, but $[1
:= 0]1 = 0$, the result of substituting a non-variable such as $1$ for
$x$, is nonsense.

Proof rules for the assignment operation $x := t$ are challenging to
encode in a logical framework.  Honsell and
Miculan~\cite{honsell96types} considered a natural deduction
formulation of DL implemented in Coq.  Their proof system included the
following inference rules to deal with assignment:
\[\small
\begin{array}{c}
\infer[{:=}I]{\Gamma \nd [x:=t]\phi}{\Gamma,y=t \nd \phi[y/x] & (y \not\in FV(\Gamma,\phi,t))}\smallskip\\
\infer[{:=}E]{\Gamma \nd \psi}{\Gamma \nd [x:=t]\phi & \Gamma,y=t,\phi[y/x] \nd \psi  & (y \not\in FV(\Gamma,\phi,\psi,t))}
\end{array}\]
The main obstacle to encoding dynamic logic using higher-order
abstract syntax is that there is no easy way to talk about distinct or
fresh object variable names.  
To deal with the freshness side conditions, Honsell and Miculan
adapted a technique introduced for encoding Hoare logic in LF by Avron, Honsell, Mason, and Pollack~\cite{mason87lfcs,avron92jar}. In this technique,
explicit judgments $isin : \Pi T{:}\type.v \to T \to \type$ and
$isnotin : \Pi T{:}\type . v \to T \to \type$ are introduced to encode the
property that a variable name occurs free in (does not occur free in)
an object of type $T$ (an expression, formula, program, etc.).  
Both LF and Coq encodings are verbose and require explicit
low-level reasoning about name occurrences, freshness, and inequality.

In \DNTT, using names and dependent name types, we can encode the
problematic inference rules as shown in \refFig{dynamic-logic}.
Again, we use a definable type of lists of formulas $list\_o$ for the
hypotheses $\Gamma$.

Here, we have taken an approach that represents the context explicitly
as part of the judgment, that is, $pf : list\_o \to o \to \type$. An alternative approach to encoding hypothetical judgments, usually
preferred in LF, is to encode only the conclusion via a predicate $pf :
o \to \type$ and then use local $pf$ assumptions to represent local
hypotheses.  
\begin{eqnarray*}
assignI' &:& pf\ (box\ (\Ax\ :=\ T\conc \Ax)\ (P\conc \Ax))
\\
&\gets&(\new \Ay{:}v.\ \ (pf\ (var\ \Ay\ =\ T\conc \Ax))\ \to\ pf\ (P\conc \Ay)).\\
\ assignE'\ &:& pf\ (Q\conc \Ax)\\
&\gets& (pf\ (box\ (\Ax\ :=\ T\conc
\Ax)\ (P\conc \Ax))) \\
&\gets& (\new \Ay{:}v.(pf\ (var\ \Ay\ =\ T\conc \Ax))\ \to\ pf\ (P\conc \Ay)\ \to\ pf\ (Q\conc \Ax)).
\end{eqnarray*}
This appears correct for \DNTT as presented in this article.  However,
if we read these types as nominal logic formulas then their meaning
does not correspond to the judgments we want to encode.  The reason is
that nominal logic satisfies an \emph{equivariance} property, which is
not explicitly reflected in \DNTT.  Equivariance states that the
validity of any proposition is preserved by applying a
name-permutation to all of its arguments.  In a type theory, this can
be represented by introducing a swapping term $\pi \act M$ such that
(roughly speaking) if $\Gamma \vdash M : A$ then $\Gamma \vdash \pi
\act M :\pi \act A$.  (This is done, in a simple type theory, for
Pitts' Nominal System T~\cite{pitts10popl,pitts11jfp}, discussed in the next
section.)  Representing hypothetical judgments using local
implications is incorrect in full nominal logic because equivariance
can be used to break the connection between names in $\Gamma$ and
names in the conclusion; to avoid this, local assumptions have to be made
explicit as an argument of the judgment.
Because we view adding additional features of nominal logic (such as
swapping/equivariance) to \DNTT as an important next step, we prefer to
give an example that appears robust in the face of these extensions.
In addition, using this approach we cannot hope to use nominal
recursion or induction principles over proofs, because of the negative
occurrences of $pf$.

Another alternative would be to represent hypotheses using
$\new$-quantification or name-abstraction:
\begin{eqnarray*}
 assignI'' &:& (pf\ (box\ (\Ax\ :=\ T\conc \Ax) (P\conc \Ax))) \\
&\gets& (\new\Ay{:}v.\  \Abs{pf\ (var\ \Ay\ =\ T\conc \Ax)}{pf\ (P\conc \Ay)}).\\
 assignE'' &:& pf\ (Q\conc \Ax)\\
& \gets& (pf\ (box\ (\Ax\ :=\ T\conc \Ax)\
 (P\conc \Ax)))\\
&\gets& ( \new\Ay{:}v.\ \Abs{pf\ (var\ \Ay\ =\
 T\conc \Ax)}{\Abs{pf\ (P\conc \Ay)}{pf\ (Q\conc \Ax)}}).
\end{eqnarray*}
Doing this would avoid the non-positivity issue, but would still have
the other drawbacks of the ordinary local hypotheses approach
discussed above.  It would also require allowing name types to depend
on values (including other names); we could do this by making
$\nametype$ into a first-class kind. However, this poses both
conceptual and practical problems.  The conceptual problem is that
name-types are usually interpreted as infinite sets of swappable
atoms, which are not mixed with ordinary values.  At a semantic level,
it is not clear what we mean by abstracting by an ordinary data type
or judgment (however, Sch\"opp's study~\cite{schoepp06lfmtp} of
nominal set semantics for Miller and Tiu's logic of generic
judgments~\cite{miller05tocl} may offer a solution).  The practical
problem is that if name-types can depend on other names, then the
context restriction operation $\wfres{\Gamma}{a}{\alpha}{\Gamma'}$
needs to remove not only all variables introduced after $\Aa$, but
also all variables or names whose type depends on $\Aa$.  This seems
workable, but makes the system considerably more complex, while it is
not yet clear that the extra complexity is justified by applications.
We view extending name-types to a first-class kind to be an important
area for future work.

%\todo{Mention issue of dependent name-types; future work}

\section{Comparison with related systems}\labelSec{discussion}

\subsection{LF}

We argued earlier that the intuitive definition of alpha-inequality
cannot be translated directly to LF.  This is a somewhat subjective
claim.  At a technical level, the issue is that in LF, object-language
variables are represented as meta-language variables, which cannot be
compared directly for (in)equality.  That is, we cannot simply
translate the rule
\[\infer{\var(x) \neq_\alpha \var(y)}{x \neq y}\]
directly to LF in a compositional way.  A naive attempt to represent
this rule by declaring a type constant such as
\[a  : \Pi x{:}\alpha.\Pi y{:}\alpha. \neqq~x~y.\]
is clearly wrong since this defines the total relation on expressions.
The following proposition shows that there is no way to translate
name-inequality to a binary predicate in LF that works correctly in
all contexts:
\begin{prop}
  Let $\Sigma$ be an LF signature, $t : \type$ a constant type in
  $\Sigma$ and $r: t \to t\to \type$ be a constant in $\Sigma$.  Then
  whenever $\Gamma,x{:}t,y{:}t,\Gamma' \vdash M : r~x~y$ is derivable for two different
  variables $x,y$, the judgment $\Gamma,x{:}t,\Gamma'[x/y] \vdash M[x/y]:r ~ x ~ x$ is also
  derivable.
\end{prop}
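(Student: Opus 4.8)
The plan is to derive this as an instance of the General Substitution lemma (\refLem{general-subst}): because $r$ is an ordinary LF type family and LF is closed under substitution, and because $x$ and $y$ share the type $t$, nothing can prevent substituting $x$ for $y$, and doing so turns any inhabitant of $r~x~y$ into one of $r~x~x$.

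Concretely, I would first exhibit the substitution $\theta$ that acts as the identity on $\Gamma$, on $x$, and on the variables of $\Gamma'$, and sends $y$ to $x$. The core of the argument is to check
\[
\wfsub{\Gamma,x{:}t,\Gamma'[x/y]}{\theta}{\Gamma,x{:}t,y{:}t,\Gamma'},
\]
which I would do by building $\theta$ up with the substitution well-formedness rules, processing the source context $\Gamma,x{:}t,y{:}t,\Gamma'$ from left to right. The prefix $\Gamma,x{:}t$ is handled by the identity, since those bindings reappear verbatim in the target. The binding $y{:}t$ is the one interesting case: I map $y$ to $x$, which is legitimate precisely because $x{:}t$ is present in the target context and because $t$ is a closed type constant, so that $t[\theta]=t$ and the required premise $\judge{\Gamma,x{:}t,\Gamma'[x/y]}{x:t}$ holds. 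For each binding $z{:}A$ of $\Gamma'$ I keep $z \mapsto z$ and note that $A[\theta]=A[x/y]$, so that $z{:}A[x/y]$ is exactly the matching binding of $\Gamma'[x/y]$ in the target.

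With $\theta$ in hand, I would apply \refLem{general-subst} to the hypothesis $\judge{\Gamma,x{:}t,y{:}t,\Gamma'}{M:r~x~y}$ to obtain $\judge{\Gamma,x{:}t,\Gamma'[x/y]}{M[\theta]:(r~x~y)[\theta]}$. It then remains only to evaluate $\theta$: since $\theta$ coincides with the single substitution $[x/y]$ on every variable that can occur free in $M$, we have $M[\theta]=M[x/y]$, and since $x \neq y$ we compute $x[\theta]=x$ and $y[\theta]=x$, whence $(r~x~y)[\theta]=r~x~x$. This is exactly the desired judgment $\judge{\Gamma,x{:}t,\Gamma'[x/y]}{M[x/y]:r~x~x}$.

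The only point needing any care is the well-formedness of $\theta$, and within that the single substantive step is the image of $y$, which goes through solely because $x{:}t$ is already in scope and can stand in for $y{:}t$; everything else is routine bookkeeping with the identity substitution and the definition of substitution application. No ingredient beyond \refLem{general-subst} --- whose name-free instance is just the standard LF substitution property --- is required.
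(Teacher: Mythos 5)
Your proposal is correct and is exactly the paper's argument: the paper's proof reads in full ``Direct using substitution,'' and your write-up simply makes explicit the well-formed substitution sending $y$ to $x$ and the identity elsewhere, then invokes the (name-free instance of the) substitution lemma. Nothing differs in substance; you have just supplied the routine details the paper elides.
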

\begin{proof}
  Direct using substitution.
\end{proof}

This implies that if we want to define relations involving variable
inequality, we need to ensure that there are appropriate hypotheses in
$\Gamma$ that can be used to prove that variables introduced at
different binding sites are distinct.  For example, using Crary's
technique of adding natural number labels for bound names as they are
introduced in the context~\cite{crary08lfmtp}, we can implement
alpha-inequivalence as shown in \refFig{aneq-lf}.  (A similar encoding
is possible using weak higher-order abstract syntax techniques, as in
the Theory of Contexts~\cite{honsell01tcs}.)

\begin{figure}[tb]
\begin{verbatim}
  exp : type.
  lam : (exp -> exp) -> exp.
  app : exp -> exp -> exp.

  nat : type.
  z : nat.
  s : nat -> nat.
  neq : nat -> nat -> type.
  - : neq (s X) z.
  - : neq z (s _).
  - : neq (s N) (s M) <- neq N M.

  bvar : exp -> nat -> type.

  aneqi : nat -> exp -> exp -> type.
  - : aneqi N X Y <- bvar X MX <- bvar Y MY <- neq MX MY.
  - : aneqi N (app E1 E2) (app E3 E4) <- aneqi N E1 E3.
  - : aneqi N (app E1 E2) (app E3 E4) <- aneqi N E2 E4.
  - : aneqi N (lam E1) (lam E2) <- 
        ({x : exp} bvar X N -> aneqi (s N) (E1 x) (E2 x)).
  - : aneqi N X (app _ _)  <- bvar X _.
  - : aneqi N X (lam _)    <- bvar X _.
  - : aneqi N (app _ _) X  <- bvar X _.
  - : aneqi N (lam _) X    <- bvar X _.
  - : aneqi N (app _ _) (lam _).
  - : aneqi N (lam _) (app _ _).

  aneq : exp -> exp -> type.
  aneq_i : aneq E1 E2 <- aneqi z E1 E2.
\end{verbatim}
  
  \caption{Alpha-inequivalence in LF}
\label{fig:aneq-lf}
\end{figure}
Clearly it is a subjective question whether the other advantages of LF
outweigh the extra effort needed to encode judgments that do involve
name-inequality.  In this article, our goal has been to explore the
alternative offered by nominal abstract syntax in a dependently-typed
setting, not to propose a replacement for LF.

\subsection{Sch\"opp and Stark's dependent type theories}

Sch\"opp and Stark introduced dependent type theories that capture the
topos-theoretic semantics of nominal sets.  (The category of nominal
sets is isomorphic to the Schanuel topos, known from sheaf
theory~\cite{maclane92sheaves}).  In particular, they consider both
ordinary and ``fresh'' dependent product spaces, dependent sums, and a
``free from'' type of pairs $(a,M)$ where $a$ is a name fresh for $M$.
The ``fresh'' versions of these types quantify over objects whose
names are fresh for the current context; these generalize the
fresh-name quantifier $\new$.  The type theory is based on using bunched
contexts (derived from the Logic of Bunched Implications).

Sch\"opp and Stark's systems are very expressive: they can express
recursive functions over nominal abstract syntax, as well as proofs by
induction, as outlined earlier in this article.  But they also appear
quite difficult to use in an automated system.  In particular, there
are no results on strong normalization or decidability of equivalence
and typechecking for these systems, and it does not seem easy to adapt
standard results because of the use of bunched contexts. The results
in this paper can be seen as a first step in this direction, focusing
on a simple subsystem of theirs which captures at least some of the
expressiveness of nominal abstract syntax.

\subsection{Nominal System T and related systems}

Pitts' Nominal System T~\cite{pitts10popl,pitts11jfp} is a
simply-typed calculus that is also an attractive starting point for a
dependent nominal type theory.  In contrast to SNTT or \DNTT, it has
ordinary (non-bunched) contexts and also supports explicit name-swapping and
locally-scoped names.  Unfortunately, these features interact with
dependent types in complex ways, making it non-obvious how to extend
Nominal System T to a dependent type theory.  In this section, we give
an example that highlights the problem\footnote{This example was
  developed in informal discussions with Andrew Pitts and Stephanie Weirich}.
We give only the description of the problem, not a full formalization
of a putative ``Dependent Nominal System T.''

% Suppose we incorporate dependent function and pair types into Nominal
% System T, and suppose we also allow the structural congruences of NST
% (including those for the $\nu a.M$-terms) as definitional
% equivalences.
% Suppose also that the system has a type constant $refl
% : \Pi x:\alpha. I~x~x$ expressing name-equality, and an
% elimination form
% \[\infer{\Gamma \vdash case~M~(x,y.N):A[M_1/x,M_2/y]}
% {\Gamma \vdash M : I~M_1~M_2 & 
% \Gamma,x:\alpha,y:\alpha \vdash N : B}
% \;.\]
% %
% Then we can build a closed term $\refl~anon~anon$, where $anon = \nu
% a. a$, a closed term denoting the ``anonymous name'' in NST (this is
% called $new$ in \cite{pitts10popl}).  Then we might reason as follows:
% \begin{eqnarray*}
%   \TRUE &=& \anon \approx \anon\\
%   &=& \case~(\refl~\anon~\anon)~(x,y.x\approx y)\\
%   &=& \case~(\nu a. \nu b. \refl~a~b) ~(x,y.x\approx y) \\
%   &=& \nu a. \nu b. (\case~(\refl~a~b)~(x,y.x \approx y)\\
%   &=& \nu a. \nu b. (a \approx b) =\nu a. \nu b. \FALSE = \FALSE
% \end{eqnarray*}
% If this reasoning can be performed via definitional equivalence in a
% putative Dependent Nominal System T, we can derive $\vdash \TRUE =
% \FALSE: Bool$, which is obviously problematic.  
Consider a dependent version of Nominal System T with dependent pair
types $\Sigma x{:}A.B$ with the usual introduction and elimination rules:
\[\infer{\Gamma \vdash \ab{M,N} : \Sigma x{:}A.B}{\Gamma \vdash M : A
  & \Gamma \vdash N : B[M/x]} \qquad \infer{\Gamma \vdash
  \mathsf{unpack}~\ab{x,y} = M~\mathsf{in}~N : B'}{\Gamma \vdash M :
  \Sigma x{:}A.B & \Gamma,x:A,y:B \vdash N : B'}   \]
In Nominal System T, the $\nu$-binder can be pushed down through pair
constructors so it is natural to expect that $\nu a.\ab{M,N}$ and
$\ab{\nu a.M,\nu a.N}$ should be definitionally equal.  But if so,
then for subject reduction to hold, given a derivation of 
\[\infer{\Gamma \vdash \nu a.\ab{M,N} : \Sigma
  x{:}A.B}{\infer{\Gamma,a:\alpha \vdash \ab{M,N} : \Sigma
    x{:}A.B}{\Gamma,a:\alpha \vdash M : A & \Gamma,a:\alpha \vdash N :
    B[M/x]}}\]
we should also be able to derive 
\[\infer{\Gamma \vdash \ab{\nu a. M,\nu a. N} : \Sigma
  x{:}A.B}{\Gamma \vdash \nu a.M : A & \Gamma \vdash \nu a.N :
    B[\nu a.M/x]}\]
The first hypothesis follows immediately from $\Gamma,a:\alpha \vdash
M : A$, but it is not obvious how to obtain the second from
$\Gamma,a{:}\alpha \vdash N:B[M/x]$.  

This argument certainly does not show that it is impossible to extend
Nominal System T to a dependent type theory (doing so appears
straightforward if we limit ourselves to $\Pi$-types), just that to
develop further extensions we may need to be very careful about how
name-restrictions interact with dependent types.

\section{Conclusions}\labelSec{concl}

We have proposed a dependent nominal type theory, called \DNTT.  We
can represent name-inequality directly in \DNTT, but on the other hand
must be more explicit about contexts and substitution.  We also showed
that (recursion-free) \DNTT shares the good metatheoretic properties
of the LF type theory, particularly decidability of equivalence and
typechecking and existence of canonical forms.

There are several directions for future work.  The main syntactic
properties of the simply-typed fragment have already been verified
using Nominal Isabelle/HOL~\cite{cheney08lfmtp}.  We would also like
to relate our approach to other
techniques~\cite{pientka08popl,poswolsky08esop,licata08lics,westbrook08phd}
and further develop the foundations needed for incorporating nominal
reasoning into richer type theories such as CIC, particularly the
metatheory of recursion principles and locally-scoped names over nominal
abstract syntax.

% LocalWords:  logics LF Honsell Plotkin neq LF's Crary combinators Agda Coq FV
% LocalWords:  Gabbay inequivalence ineq inequivalent lclcl rcl nlf metatheory
% LocalWords:  DNTT concl FN FVN nm conc eq formedness wf sig ctx ntype O'Hearn
% LocalWords:  Pfenning's nondeterministically typechecking subcontext lrcll sr
% LocalWords:  renamings metatheoretic subst typability equivariant alg logrel
% LocalWords:  Symm Canonicalization redex canonicalizing canonicalization SNTT
% LocalWords:  injective encodings subderivation datatypes topos cond datatype
% LocalWords:  combinator prf lookup eval fvs lcl emp hd tl neqI lookups ty env
% LocalWords:  nondeterministic cconv Hannan Crary's DL Hoare nonterminating et
% LocalWords:  counterintuitive Miculan Avron isin isnotin assignI assignE fsub
% LocalWords:  subtyping POPLMark tvar recordTy rcd rsub al CNIC CIC Cinic HOL
% LocalWords:  Pientka Poswolsky urmann Licata Poswolsky's Prolog Delphin rcll

%%% Local Variables: 
%%% mode: latex
%%% TeX-master: "article"
%%% End: 
% LocalWords:  Poulliard conv det Secion extensionality injectivity Kripke whr
% LocalWords:  conservativity canonicalize refl Schanuel Weirich NST Bool de lf
% LocalWords:  Odersky's Bruijn Licata's defeq aneq equational decidability
% LocalWords:  Westbrook Stark's polymorphism HOAS definitional parametrized
% LocalWords:  Pym's determinacy untyped denotational compositional intensional
% LocalWords:  definitionally equivariance positivity swappable opp's Tiu's

\section*{Acknowledgements}
Thanks to Frank Nebel, Andrew Pitts, Aaron Stump, Stephanie Weirich,
and Edwin Westbrook for helpful discussions on this work.

  { \bibliographystyle{abbrv}
  \bibliography{nominal,fp,lambda,logic,logicprog,misc,lf,paper} }

\end{document}